\theoremstyle{plain}
\newtheorem{theorem}{Theorem}[section]
\newtheorem{lemma}[theorem]{Lemma}
\newtheorem{proposition}[theorem]{Proposition}
\newtheorem{definition}[theorem]{Definition}
\newtheorem{notation}[theorem]{Notation}
\newcommand{\Ccropre}{\exists_{(A_1,A_2)}}
\numberwithin{equation}{subsection}
\begin{document}

\title{Covariant-Contravariant Refinement Modal $\mu$-calculus
\footnote{This work received financial support of the Natural Science Foundation of Shandong Province (No. ZR2020MF144).}
}

\author{Huili Xing$^1$ \footnote{Corresponding author. Email: zhaohui@nuaa.edu.cn.}
\\
1 College of Computer Science and Technology\\Nanjing University of Aeronautics and Astronautics\\
2 College of Information Science \\ Nanjing Audit University
}

\date{\today}

\maketitle

\begin{abstract}

The notion of covariant-contravariant refinement (CC-refinement, for short) is a generalization of the notions of bisimulation, simulation and refinement. This paper introduces CC-refinement modal $\mu$-calculus (CCRML$^{\mu}$) obtained from the modal $\mu$-calculus system \textit{\textbf{K}}$^{\mu}$ by adding CC-refinement quantifiers, establishes an axiom system for CCRML$^{\mu}$ and explores the important properties: soundness, completeness and decidability of this axiom system. The language of CCRML$^{\mu}$ may be considered as a specification language for describing the properties of a system referring to reactive and generative actions. It may be used to formalize some interesting problems in the field of formal methods.

\textbf{Keywords:} modal logic, modal $\mu$-calculus, covariant-contravariant refinement, axiom system
\end{abstract}

\section{Introduction}
Fixed-points operators make it possible to express most of the properties that are of interest in the study of ongoing behaviours. It is well known that there exists a deep link between behaviour relations and modal logics.
Laura Bozzelli et al. presented and explored refinement modal logic (RML) and single-agent refinement modal $\mu$-calculus (RML$^{\mu}$)~\citep{Bozzeli2014refinementmodallogic,Bozzelli2015refinementcomplexity}.
RML provides a more abstract perspective of future event logic~\citep{Ditmarsch2009simulationandinformation,Ditmarsch2010Futureeventlogicaxioms} and arbitrary public announcement logic~\citep{Balbiani2008publicannoucement}.
In addition to usual modal operators and fixed-point operators, RML$^{\mu}$ contains refinement operator $\exists_B$ where $B$ is a set of actions, whose semantics is given in terms of the notion of $B$-refinement. The notion of refinement (simulation) is often used to describe the refinement relation between reactive systems. To describe the behavioural relation between the systems involving also generative (active) actions (e.g., input/output (I/O) automata)~\citep{Fabregas2009nonstronglysatabeorderssimulations,Fabregas2010contravariantsimulations,Fabregas2010Equationalcharacterizationofcovariant-contravariant,Aceto2011contravariantsimulations}, the notion of covariant-contravariant refinement (CC-refinement, for short) is presented in~\citep{Fabregas2009nonstronglysatabeorderssimulations}.
The notion of CC-refinement captures behavioural preorders between labelled transitions systems (LTSs) and partitions all actions into three sorts: \emph{covariant}, \emph{contravariant} and \emph{bivariant actions}, which represent respectively the passive actions of a system, the  generative actions under the control of a system itself and the actions treated as in the usual notion of bisimulation. The transitions labelled with covariant actions in a given specification should be simulated by any correct implementation and the transitions for contravariant actions in an implementation must be allowed by its specification. It is not difficult to see that the notion of CC-refinement generalizes the notions of \emph{bisimulation}, \emph{refinement} and \emph{simulation} considered in~\citep{Bozzeli2014refinementmodallogic}.

We, inspired by Laura Bozzelli et al's work, presented covariant-contravariant refinement modal logic (CCRML) based on the notion of CC-refinement, and provided its sound and complete axiom system~\citep{huilixing2018CCrefinementmodallogic}.

This paper considers CC-refinement modal $\mu$-calculus (CCRML$^{\mu}$).
We obtain the language $\mathcal{L}_{CC}^{\mu}$ of CCRML$^{\mu}$ from the standard modal $\mu$-calculus language $\mathcal{L}_{K}^{\mu}$~\citep{Patrick2002modallogic,Venema2012mucalculus} by adding CC-refinement operator $\exists_{(A_1,A_2)}$, where $A_1$ ($A_2$) is a set of all covariant (contravariant, resp.) actions.  The operator $\exists_{(A_1,A_2)}$ acts as a quantifier over the set of all $\exists_{(A_1,A_2)}$-refinements of a given pointed model.
Intuitively, the formula $\exists_{(A_1,A_2)} \beta$ represents that we can CC-refine the current model so as to satisfy $\beta$. This operator can be adopted to formalize some interesting problems in the field of formal methods. For instance, if a given specification, expressed by a LTS $N$, refers to the set $A_1$ ($A_2$) of passive (generative, resp.) actions, the problem whether this specification has an implementation which involves some special characterizations and satisfies some given property $\psi$, by applying CC-refinement operators, can be boiled down to the model checking problem: $N \models \exists_{(A_1,A_2)} ( \nu q. \varphi \wedge \Box q\wedge \psi)$. Here, $\nu q. \varphi \wedge \Box q$ depicts the characterization that $\varphi$ is true at any time, which is often used in protocol synthesis and verifications.

In this paper, we provide an axiom system for CCRML$^{\mu}$ and explore its important properties: its soundness is established based on the construction of the desired CC-refined models, and its completeness and decidability are obtained through transforming $\mathcal{L}^{\mu}_{CC}$-formulas into $\mathcal{L}^{\mu}_{K}$-formulas.

We organize this paper as follows. Section 2 recalls the notion of CC-refinement and introduces CC-refinement modal $\mu$-calculus. Section 3 provides a sound and complete axiomatization for CC-refinement modal $\mu$-calculus.
Finally, we give a brief discussion in Section 4.
\section{CC-refinement modal $\mu$-calculus}\label{sec:CCRML-mu}
In this section, we firstly recall the notion of CC-refinement~\citep{Fabregas2009nonstronglysatabeorderssimulations}, and then present CC-refinement modal $\mu$-calculus (CCRML$^{\mu}$), which is obtained from CC-refinement modal logic (CCRML)~\citep{huilixing2018CCrefinementmodallogic} by adding fixed-point operators.

Given a finite set $A$ of actions and a countable infinite set $Atom$ of propositional letters, a model $M$ is a triple $\langle S^M,R^M, V^M\rangle$, where $S^M$ is a non-empty set of states, $R^M$ is an accessibility function from $A$ to $2^{S^M \times S^M}$ assigning to each action $b$ in $A$ a binary relation $R^M_b \subseteq S^M \times S^M$, and $V^M:Atom \rightarrow 2^{S^M}$ is a valuation function.
In this paper, we also use another valuation function $\mathbf{V}^M: S^M \rightarrow 2^{Atom}$. A pair $(M, w)$ with $w\in S^M$ is said to be a pointed model. If $q\in Atom$ and $u\in V^M(q)$, $u$ is said to be a $q$-state.
For any  binary relation $R\subseteq S_1\times S_2$, $T\subseteq S_1$ and $w\in S_1$,
$R(w)\triangleq \{s \mid  w R s\}$,
$R(T) \triangleq   \bigcup_{ w \in T} R(w)$,
$\pi_1 (R) \triangleq \{s \mid \exists t (s R t)\}$ and
$\pi_2 (R) \triangleq \{t \mid \exists s (s R t)\}$.

As usual, we write $M \uplus N$ for the disjoint union of two models $M$ and $N$ with $S^M \cap S^N =\emptyset$, which is defined by $S^{M \uplus N} \triangleq S^M \cup S^N$, $R_b^{M \uplus N} \triangleq R_b^M \cup R_b^N$ for each $b \in A$ and $V^{M \uplus N}(r) \triangleq V^M (r)\cup V^N(r)$ for each $r \in Atom$.
\begin{definition}[$P$-restricted CC-refinement]\label{def:ccr}
 Given $P\subseteq Atom$ and $A_1, \; A_2\subseteq A$ with $A_1 \cap A_2 = \emptyset$, for any model $M=\langle S, R,  V \rangle$ and $M'=\langle S', R',  V' \rangle$, a binary relation
 $\mathcal{Z} \subseteq S\times S'$ is a $P$-\emph{restricted} ($A_1$, $A_2$)-\emph{refinement} relation between $M$ and $M'$ if, for each pair $\langle u,u'\rangle$ in $\mathcal{Z}$, the following conditions hold

 \noindent \textbf{(}$P$\textbf{-atoms}\textbf{)} $\text{ } \; u \in V(r)\; \text{ iff }\; u' \in V'(r) \;\text{ for each } r\in Atom-P$;\\
 $\text{\textbf{(forth)}} \quad\quad \;\text{ for each } a \in A - A_2 \text{ and } v\in S, \; u R_a v \text{ implies }  u' R'_a v' \text{ and } v \mathcal{Z} v'$

 $\text{ } \quad \;\;\quad \;\quad\; \text{for some }v'\in S';$\\
 $\text{\textbf{(back)}} \quad\quad \;\;\text{ for each } a \in A - A_1 \text{ and } v'\in S',  u' R'_a v' \text{ implies } u R_a v \text{ and } v \mathcal{Z} v'$

 $\text{ } \quad \quad\quad \;\;\;\; \text{for some }v\in S.$


 \noindent Here $A_1$ ($A_2$) is said to be the \textbf{covariant} (\textbf{contravariant}, resp.) \textbf{set}. A pointed model $(M',u')$ is said to be a $P$-restricted ($A_1$, $A_2$)-refinement of $(M,u)$, in symbols $M,u \succeq_{(A_1,A_2)}^P M',u'$, if there exists a $P$-restricted ($A_1$, $A_2$)-refinement relation between $M$ and $M'$ linking $u$ and $u'$. We also write $\mathcal{Z}: M,u \succeq_{(A_1,A_2)}^P M',u'$ to indicate that $\mathcal{Z}$ is a $P$-restricted ($A_1$, $A_2$)-refinement relation such that $u \mathcal{Z} u'$.
 \end{definition}
If $P$ is a singleton, say $\{q\}$, we write $ \succeq_{(A_1,A_2)}^q $ instead of $ \succeq_{(A_1,A_2)}^{\{q\}} $. If $P=\emptyset$, we abbreviate the superscript in $ \succeq_{(A_1,A_2)}^P $. In this case, Definition~\ref{def:ccr} describes indeed the notion of CC-refinement given in~\citep{Fabregas2009nonstronglysatabeorderssimulations}.

The above notion generalizes the notions of bisimulation, simulation and refinement considered in the literature (see, e.g.,~\citep{Bozzeli2014refinementmodallogic}). Formally, a \emph{bisimulation relation} is exactly an $(\emptyset, \emptyset)\text{-refinement}$, a $B$-\emph{simulation relation} a ($B,\emptyset$)-refinement and a $B$-\emph{refinement relation} an ($\emptyset,B$)-refinement.

We write $\mathcal{Z}: M,s \,\underline{\leftrightarrow}^P \,M',s'$ to indicate that $\mathcal{Z}$ is a $P$-\emph{restricted bisimulation} which witnesses that $(M,s)$ is $P$-restricted bisimilar to $ (M',s')$. The notation $\underline{\leftrightarrow}^P$ also follows the conventions for the notation $ \succeq_{(A_1,A_2)}^P $ in the above paragraphs.

The motivation behind the notion of CC-refinement lies in the differences between the roles played by different kinds of actions while considering refinement relations between models.
The transitions labelled with the actions in $A_1$ (\emph{covariant actions}) need to satisfy \textbf{(forth)}, i.e., these transitions in a given specification should be simulated by any correct implementation; the transitions labelled with the actions in $A_2$ (\emph{contravariant actions}) need to satisfy \textbf{(back)}, i.e., these transitions in an implementation must be allowed by its specification; the transitions labelled with the actions in $A-(A_1\cup A_2)$ (\emph{bivariant actions}) satisfy both \textbf{(forth)} and \textbf{(back)}.
More descriptions for this can be found in~\citep{huilixing2018CCrefinementmodallogic}.

\begin{proposition}[\citep{huilixing2018CCrefinementmodallogic}]\label{pro:compositionofsingleton}
 $\text{ }$

\noindent$(1) \text{ } M_1,s_1 \,\underline{\leftrightarrow}\, M_2,s_2 \succeq_{(A_1,A_2)} N_2,t_2 \,\underline{\leftrightarrow}\, N_1,t_1$ implies $M_1,s_1 \succeq_{(A_1,A_2)} N_1,t_1$.

 \noindent $(2) \text{ } \text{If }\, M,s \succeq_{(A_1,A_2)} N,t$~ then there exist $(M',s')$, $\,(N',t')$ and $\mathcal{Z}$ such that
 
 $(2.1) \text{ } M,s \,\underline{\leftrightarrow}\, M',s'$, 
 
  $(2.2) \text{ }N,t \,\underline{\leftrightarrow}\, N',t'$, 
  
   $(2.3) \text{ }\mathcal{Z}: M',s' \succeq_{(A_1,A_2)} N',t'$, ~and 
   
  $(2.4) \text{ } \mathcal{Z}$ is an injective partial function from $S^{M'}$ to $S^{N'}$, that is, $\mathcal{Z}$ satisfies

 $\;\;\;(\text{\emph{functionality}}) \text{ } \;\forall w \in  S^{M'}\;  \forall v_1,v_2 \in S^{N'} (w \mathcal{Z} v_1 \text{ and } w \mathcal{Z} v_2 \Rightarrow v_1 =v_2)$;

 $\;\;\;(\text{\emph{injectivity}}) \;\;\text{ }\; \;\forall v \in  S^{N'}  \forall w_1,w_2 \in S^{M'} (w_1 \mathcal{Z} v \text{ and } w_2 \mathcal{Z} v \Rightarrow w_1 =w_2)$.
\end{proposition}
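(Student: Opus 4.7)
For part (1), I would take witnessing relations $\mathcal{R}_1$ for $M_1,s_1 \,\underline{\leftrightarrow}\, M_2,s_2$, $\mathcal{R}_2$ for $M_2,s_2 \succeq_{(A_1,A_2)} N_2,t_2$, and $\mathcal{R}_3$ for $N_2,t_2 \,\underline{\leftrightarrow}\, N_1,t_1$, and show that their relational composition $\mathcal{Z} \triangleq \mathcal{R}_1 ; \mathcal{R}_2 ; \mathcal{R}_3 \subseteq S^{M_1}\times S^{N_1}$ witnesses the conclusion. The atoms clause follows immediately, since bisimulations equate every propositional letter and $\mathcal{R}_2$ preserves atoms as a CC-refinement. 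For (forth), given $x\mathcal{Z} y$ via intermediaries $x_2\in S^{M_2}$ and $y_2\in S^{N_2}$, an action $a\notin A_2$, and a transition $x R^{M_1}_a u$, I would push $u$ successively across $\mathcal{R}_1$ by its bisimulation (forth), across $\mathcal{R}_2$ by its CC-refinement (forth) (applicable because $a\notin A_2$), and across $\mathcal{R}_3$ by its bisimulation (forth) to obtain the required $v\in S^{N_1}$. The (back) clause is symmetric, using (back) of $\mathcal{R}_3$, (back) of $\mathcal{R}_2$ (applicable because $a\notin A_1$), and (back) of $\mathcal{R}_1$.

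For part (2), the plan is to simultaneously unravel $M$ and $N$ into trees whose nodes are tagged finely enough that the lifted refinement becomes an injective partial function. Fix the given relation $\mathcal{Z}_0:M,s\succeq_{(A_1,A_2)} N,t$ and build $M'$, $N'$, and $\mathcal{Z}\subseteq S^{M'}\times S^{N'}$ by induction on depth, together with projection maps $f:S^{M'}\to S^M$ and $g:S^{N'}\to S^N$ that will satisfy the invariant $f(\sigma)\,\mathcal{Z}_0\,g(\sigma')$ whenever $\sigma\mathcal{Z}\sigma'$. Start with fresh roots $s'$ and $t'$ with $f(s')=s$, $g(t')=t$, and $\mathcal{Z}(s')=t'$. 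At each paired $\sigma\mathcal{Z}\sigma'$ and each $a\in A$, perform two independent steps. Step (A): for every $w'\in S^M$ with $f(\sigma)R^M_a w'$, add a fresh $\tau\in S^{M'}$ with $f(\tau)=w'$ and edge $\sigma R^{M'}_a\tau$; if moreover $a\notin A_2$, pick a (forth)-witness $v'\in S^N$ for $\mathcal{Z}_0$, add a fresh $\tau'\in S^{N'}$ with $g(\tau')=v'$ and edge $\sigma' R^{N'}_a\tau'$, and extend $\mathcal{Z}(\tau)=\tau'$. Step (B): for every $v'\in S^N$ with $g(\sigma')R^N_a v'$, add a fresh $\tau'$ with $g(\tau')=v'$ and edge $\sigma' R^{N'}_a\tau'$; if moreover $a\notin A_1$, pick a (back)-witness $w'\in S^M$, add a fresh $\tau$ with $f(\tau)=w'$ and edge $\sigma R^{M'}_a\tau$, and set $\mathcal{Z}(\tau)=\tau'$. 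At unpaired nodes, I would perform only the unpaired halves of (A) and (B) so that the projections remain surjective. Valuations in $M'$ and $N'$ are pulled back via $f$ and $g$.

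The verification then splits into three checks. First, $\{(f(\sigma),\sigma):\sigma\in S^{M'}\}$ is a bisimulation $M,s\,\underline{\leftrightarrow}\,M',s'$ and similarly for $g$: (forth) holds because the unconditional half of step (A) (respectively (B)) introduces an $M'$-child (respectively $N'$-child) for every $M$-successor (respectively $N$-successor), and (back) because every new edge projects to an edge of the original model. Second, $\mathcal{Z}$ is a functional injective partial map because every paired node is freshly minted and uniquely tagged by its parent, action, chosen witness, and the clause (A) or (B) that created it, which precludes the collisions forbidden by functionality and injectivity. Third, $\mathcal{Z}$ satisfies Definition~\ref{def:ccr} for $(A_1,A_2)$: atoms transfer through the invariant $f(\sigma)\,\mathcal{Z}_0\,g(\sigma')$; (forth) for $a\notin A_2$ holds because step (A) pairs every $M'$-child it spawns while step (B) pairs any additional $M'$-children it spawns under $a\notin A_1$; (back) for $a\notin A_1$ is the mirror argument on the $N'$-side.

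The principal obstacle is clause (2.4), not (2.3): a naive joint unraveling that reuses the same successor node to realize both a (forth)-need and a (back)-need will often pair two distinct $M'$-children with the same $N'$-child, or a single $M'$-child with two different $N'$-children, violating injectivity or functionality. Minting freshly tagged copies per witness keeps the contributions of steps (A) and (B) disjoint and is the technical heart of the argument; once that indexing is in place, the remaining checks unfold directly from the definitions.
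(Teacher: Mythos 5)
Your argument is correct. Part (1) is the standard composition argument and is exactly what one expects (a bisimulation is an $(\emptyset,\emptyset)$-refinement, so it supplies both \textbf{(forth)} and \textbf{(back)} for every action, and the middle relation is invoked only for the actions where the corresponding clause is required); part (2) via a joint unraveling with freshly minted witness pairs also goes through: the invariant $f(\sigma)\,\mathcal{Z}_0\,g(\sigma')$ gives \textbf{(atoms)}, the unconditional halves of steps (A) and (B) give the two bisimulations, and the case analysis on whether $a$ lies in $A_1$, $A_2$ or neither confirms that every child that \textbf{(forth)} or \textbf{(back)} obliges you to match is one that your construction has in fact paired. Note, however, that this paper does not prove Proposition~\ref{pro:compositionofsingleton} at all --- it is imported from the earlier CCRML paper --- so the only in-paper point of comparison is the closely related Proposition~\ref{pro:compositionofsingleton2}, which the author proves by a different route: first pass to tree-like models (Proposition~\ref{prop:model equivalent to tree-like model }), then graft disjoint \emph{copies} of generated submodels onto $N$ (Proposition~\ref{prop:bisi invariance2}) and redistribute the $\mathcal{Z}$-preimages among the copies so that each target state retains a single preimage. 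That technique modifies only the $N$-side and yields injectivity alone, which is all Proposition~\ref{pro:compositionofsingleton2} claims; your simultaneous unraveling modifies both sides at once and delivers functionality and injectivity in a single induction, at the cost of duplicating successors in $M'$ as well. Both are sound; yours is more self-contained, while the paper's copy-and-graft machinery is reused elsewhere (e.g., in the proof of Proposition~\ref{pro:soundness base}), which is presumably why the author organizes it that way. Your closing observation that the real difficulty is clause (2.4) rather than (2.3), and that it is resolved by disjoint tagging of the (A)- and (B)-contributions, is exactly the right diagnosis.
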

\noindent Clearly, the above proposition holds still up to $\succeq_{(A_1,A_2)}^P$.
\begin{proposition}[\citep{huilixing2018CCrefinementmodallogic}]\label{pro:ccrproperty2}
Let $A_1, A_2 \subseteq A$ with $A_1 \cap A_2 = \emptyset$. Then, for each $ A'_1, A''_1, A'_2$  and $ A''_2 $ such that $ A'_1 \cup A''_1=A_1 $ and $A'_2 \cup A''_2=A_2$, it holds that
\[\succeq_{(A'_1,A'_2)}\circ \succeq_{(A''_1,A''_2)}\;=\;\succeq_{(A_1,A_2)}.\]
\end{proposition}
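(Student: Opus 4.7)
The plan is to prove the two inclusions separately.

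For $(\subseteq)$, given $\mathcal{Z}_1:M,s\succeq_{(A'_1,A'_2)}L,r$ and $\mathcal{Z}_2:L,r\succeq_{(A''_1,A''_2)}N,t$, I verify directly that the relational composite $\mathcal{Z}_2\circ\mathcal{Z}_1$ is an $(A_1,A_2)$-refinement linking $s$ to $t$. The check reduces to the set-theoretic identities $A-A_2=(A-A'_2)\cap(A-A''_2)$ for \textbf{(forth)} and $A-A_1=(A-A'_1)\cap(A-A''_1)$ for \textbf{(back)}: at any action where the composite owes a witness, both factors simultaneously provide one, so applying them consecutively yields the required state in $N$; atom preservation is immediate.

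For $(\supseteq)$, I will factor an arbitrary $\mathcal{Z}:M,s\succeq_{(A_1,A_2)}N,t$ through a suitable intermediate model. I first invoke Proposition~\ref{pro:compositionofsingleton}(2) to replace $M$ and $N$ by bisimilar models on which $\mathcal{Z}$ is an injective partial function from $S^M$ to $S^N$; the factorization obtained in this reduced setting can be transported back to the original pointed models via Proposition~\ref{pro:compositionofsingleton}(1). I then let $L$ be the quotient of $M\uplus N$ by the equivalence $w\sim v\Leftrightarrow w\mathcal{Z} v$, with base point $r=[s]=[t]$ and candidate refinements $\mathcal{Z}_1(w)=[w]$ and $\mathcal{Z}_2([x])=v$ whenever $v\in[x]\cap S^N$. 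Atom preservation by $\mathcal{Z}$ together with injective-functionality makes the valuation on $L$ unambiguous.

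The hard part will be choosing $R^L_a$ so that both $\mathcal{Z}_1$ and $\mathcal{Z}_2$ satisfy their refinement conditions simultaneously. The delicate situations are the asymmetric ones: for $a\in A''_1\setminus A'_1$, $\mathcal{Z}_1$ demands \textbf{(back)} although $\mathcal{Z}$ itself, with $a\in A_1$, guarantees no back-witness; dually, for $a\in A'_2\setminus A''_2$, $\mathcal{Z}_2$ demands \textbf{(forth)} although $\mathcal{Z}$ with $a\in A_2$ supplies no forth-witness. My resolution is a case split on the seven disjoint types into which $a$ falls according to its membership in $A'_1,A''_1,A'_2,A''_2$: for $a\in A''_1\setminus A'_1$, let $R^L_a$ be generated only by lifted $M$-transitions (so back for $\mathcal{Z}_1$ is trivial, while forth for $\mathcal{Z}_2$ follows from $\mathcal{Z}$'s forth, which is available since $a\in A_1\subseteq A-A_2$); for $a\in A'_2\setminus A''_2$, use only lifted $N$-transitions, dually; in the remaining five types, include both, so that whichever condition is missing on one factor is supplied by the other. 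A routine verification across these types then yields $\mathcal{Z}_1:M\succeq_{(A'_1,A'_2)}L$ and $\mathcal{Z}_2:L\succeq_{(A''_1,A''_2)}N$, completing the factorization.
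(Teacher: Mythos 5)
The paper itself gives no proof of this proposition: it is imported from the cited prior work \citep{huilixing2018CCrefinementmodallogic}, so there is no in-paper argument to measure yours against. Judged on its own terms, your proposal is correct. The inclusion $\subseteq$ is the standard relational-composition check and is unproblematic. For $\supseteq$, the two ideas that actually carry the argument are both present and correctly placed: the preliminary reduction via Proposition~\ref{pro:compositionofsingleton}(2) to the case where $\mathcal{Z}$ is an injective partial function (transported back by Proposition~\ref{pro:compositionofsingleton}(1)) is genuinely necessary, since otherwise a quotient class of $M\uplus N$ could contain several $M$-states or several $N$-states and the \textbf{(back)}/\textbf{(forth)} checks for $\mathcal{Z}_1$ and $\mathcal{Z}_2$ would break down; and the per-action choice of which transitions to lift into $L$ is exactly right. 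Indeed, adding lifted $N$-transitions can only endanger \textbf{(back)} for $\mathcal{Z}_1$, and this is fatal precisely when $a\in A''_1\setminus A'_1$ (there $\mathcal{Z}_1$ owes a back-witness but $\mathcal{Z}$ supplies none), while adding lifted $M$-transitions can only endanger \textbf{(forth)} for $\mathcal{Z}_2$, fatal precisely when $a\in A'_2\setminus A''_2$; in each of these two cases the lifting you drop is not needed by the other factor (its corresponding condition is exempt), and in the remaining five types both liftings are harmless and jointly supply every required witness, the cross-model witnesses coming from $\mathcal{Z}$'s own \textbf{(forth)} or \textbf{(back)}, which is available exactly when needed. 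I checked the deferred ``routine verification'' on all seven types and it goes through, so the sketch is sound as written.
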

\noindent Here, $\circ$ is used to denote the composition operator of relations.

The above proposition reveals that, through taking compositions, any CC-refinement may be captured by the CC-refinements with singleton covariant and contravariant sets.
\begin{definition}[Language $\mathcal{L}_{CC}^{\mu}$]\label{def:language mu}
 Let $A$ be a finite set of actions and $Atom$ a countable infinite set of propositional letters. The language $\mathcal{L} _{CC}^{\mu}$ of CC-refinement modal $\mu$-calculus is generated by the BNF grammar below, where $\emptyset \neq A_1,  A_2 \subseteq A$ with $A_1 \cap A_2 = \emptyset$, $b\in A$ and $r,q\in Atom$, \[ \varphi ::=r\mid (\neg\varphi) \mid (\varphi\wedge \varphi) \mid (\Box_b \varphi) \mid (\exists _{(A_1,A_2)}\varphi) \mid (\mu q.\varphi) \]
  As usual, the propositional letter $q$ bounded in $\mu q.\varphi $ is required to occur positively in $\varphi$ (namely occur only in the scope of even number of negations).
\end{definition}

The modal operator $\diamondsuit_b$ and propositional connectives $\bot$, $\top$, $\vee$, $\rightarrow$ and $\leftrightarrow$ are defined in the standard manner. We write $\forall_{(A_1,A_2)}\varphi$ for $\neg \exists_{(A_1,A_2)}\neg\varphi$, and the duality of $\mu q.\varphi$ is defined as $\nu q. \varphi \triangleq \neg \mu q. \neg \varphi [\neg q / q]$. We also use the symbol $\eta$ to denote either $\mu$ or $ \nu$.
If both $A_1$ and $A_2$ are singletons, say $A_1=\{a_1\}$ and $A_2=\{a_2\}$, we write $\exists_{(a_1,a_2)}\varphi \;\text{ } (\text{or }\forall_{(a_1,a_2)}\varphi)$ instead of $\exists_{(A_1,A_2)}\varphi$ (resp., $\forall_{(A_1,A_2)}\varphi$ ).
For the sake of simplicity, this paper supposes that $A_1 \neq \emptyset $ and $ A_2\neq \emptyset $. Section~\ref{sec:con} will discuss how to dispense with this assumption.

The cover operators $\nabla_b$ ($b \in A$) are adopted in this paper. $\nabla_b \Phi$ is defined as $(\Box_b \bigvee_{\varphi \in \Phi} \varphi ) \wedge (\bigwedge_{\varphi \in \Phi} \diamondsuit_b \varphi)$, where $\Phi$ is a finite set of formulas.
$\Box_b \varphi$ and $\diamondsuit_b \varphi$ can be captured by $\nabla_b \emptyset \vee \nabla_b \{ \varphi \}$ and $\nabla_b \{ \varphi,\top \}$ respectively. More information about cover operators may be found in~\citep{Agostino2005muaxiom,bilkova2008covermodality}.


 Given a model $M$, the notion of a formula $\varphi \in \mathcal{L}_{CC}^{\mu}$ being satisfied in $M$ at a state $u$ (in symbols, $M,u \models \varphi$) is defined inductively as follows~\citep{Venema2012mucalculus}:
\[\begin{array}{lll}
  M,u \models r & \text{ iff } & u\in V^M(r), \text{ where } r \in Atom \\
  M,u \models \neg \varphi & \text{ iff } & M,u\; /\kern-1.0em \models \varphi\\
  M,u \models \varphi_1 \wedge \varphi_2  & \text{ iff } & M,u \models \varphi_1 \text{ and }  M,u \models \varphi_2 \qquad \qquad\\
  M,u \models \Box_b \varphi & \text{ iff } & \text{for all } v \in R^M_b (u), \;M,v \models \varphi\\
  M,u \models \exists_{(A_1,A_2)} \varphi & \text{ iff } & M,u\succeq_{(A_1,A_2)}N,v\text{ and } N,v \models \varphi \text{ for some } (N,v)\\
  M,u \models \mu q.\varphi & \text{ iff } & u \in \bigcap \{T \subseteq S^M:\; \lVert \varphi \rVert^{M^{ [q \mapsto T]}} \subseteq T \}
\end{array}\]
\noindent Here,
\[\begin{array}{lll}
\lVert \varphi \rVert^{M} & \triangleq & \{s \in S^M: M,s \models \varphi \}\\
M^{[q \mapsto T]} & \triangleq & \langle S^M, R^M, V\rangle\;\;\;\text{ with }\; V(r)\triangleq
   \begin{cases}
    V^M(r) & \text{    if $r\neq q$}\\
      T & \text{    if $r=q$}.
    \end{cases}
\end{array}\]

As usual, for every $\psi \in \mathcal{L}^{\mu}_{CC}$, $\psi$ is valid, denoted by $ \models \psi$, if $M,u \models \psi$ for every pointed model $(M,u)$.

Since a bisimulation relation is also a CC-refinement relation, due to the equivalence of $\underline{\leftrightarrow}$ and the transitivity of $\succeq_{(A_1,A_2)}$~\citep[Proposition 2.2]{huilixing2018CCrefinementmodallogic},  the satisfiability of the formula of the form $\exists \varphi$ ($\varphi\in \mathcal{L}_{CC}^{\mu}$) is invariant under bisimulations. Then we can show that ${L}_{CC}^{\mu}$-satisfiability is invariant under bisimulations by the proof method of~\citep[Theorem 2.17]{Venema2012mucalculus}.
\begin{proposition}\label{prop:bisi invariance}
  If $M,s \underline{\leftrightarrow} N,t$ then
  \[M,s\models \psi \;\;\text{ iff } \;\;N,t\models \psi \;\;\;  \text{ for all formulas }\psi \in \mathcal {L}_{CC}^{\mu}.\]
\end{proposition}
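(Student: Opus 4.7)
The plan is to proceed by structural induction on $\psi$. The atomic, Boolean, and $\Box_b$ cases use the atomic harmony and the forth/back clauses of the witnessing bisimulation in the standard way, so the only clauses requiring attention are the CC-refinement quantifier $\exists_{(A_1,A_2)}$ and the least-fixed-point binder $\mu q$. As the author already indicates, one can largely mirror the argument of \citep[Theorem 2.17]{Venema2012mucalculus}, extended by a short argument for the new quantifier.

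For the $\exists_{(A_1,A_2)}\varphi$ case, I would appeal directly to Proposition~\ref{pro:compositionofsingleton}(1), with no use of the inductive hypothesis. Assume $M,s \,\underline{\leftrightarrow}\, N,t$ and $M,s \models \exists_{(A_1,A_2)}\varphi$; then there exists $(M',s')$ with $M,s \succeq_{(A_1,A_2)} M',s'$ and $M',s' \models \varphi$. Since $\underline{\leftrightarrow}$ is symmetric and reflexive, the chain
\[
N,t \,\underline{\leftrightarrow}\, M,s \;\succeq_{(A_1,A_2)}\; M',s' \,\underline{\leftrightarrow}\, M',s'
\]
collapses by Proposition~\ref{pro:compositionofsingleton}(1) to $N,t \succeq_{(A_1,A_2)} M',s'$, so the very same pointed model $(M',s')$ witnesses $N,t \models \exists_{(A_1,A_2)}\varphi$. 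The converse direction is symmetric.

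For the $\mu q.\varphi$ case I would strengthen the induction hypothesis so as to cover open subformulas and modified valuations. Fix a bisimulation $\mathcal{Z}:M \,\underline{\leftrightarrow}\, N$ and prove by induction on $\varphi$ that, for any $T \subseteq S^M$ and $T' \subseteq S^N$ such that $\mathcal{Z}$ is still a bisimulation between $M^{[q \mapsto T]}$ and $N^{[q \mapsto T']}$ (equivalently, $\mathcal{Z}$ relates only $T$-states to $T'$-states and vice versa), one has $u \mathcal{Z} v \Rightarrow \bigl( u \in \lVert \varphi \rVert^{M^{[q \mapsto T]}} \Leftrightarrow v \in \lVert \varphi \rVert^{N^{[q \mapsto T']}} \bigr)$. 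Given this strengthening, a Knaster--Tarski argument closes the $\mu$-clause: letting $T^* \triangleq \lVert \mu q. \varphi \rVert^M$, the $\mathcal{Z}$-image $\mathcal{Z}(T^*)$ is a pre-fixed point of the monotone operator induced by $\varphi$ on $N$ (by the strengthened hypothesis applied to $T = T^*$, $T' = \mathcal{Z}(T^*)$), and hence contains the least fixed point on $N$ only up to $\mathcal{Z}$; applying the symmetric argument yields the desired two-way correspondence.

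The main obstacle is this bookkeeping for the $\mu$-case: one must simultaneously track the bisimulation, the valuation surgery at $q$, and the ordinal (or Knaster--Tarski) characterization of the fixed point. By contrast, the CC-refinement quantifier, which is the only genuinely new ingredient of $\mathcal{L}_{CC}^{\mu}$, is handled in one line thanks to Proposition~\ref{pro:compositionofsingleton}(1).
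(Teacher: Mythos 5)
Your treatment of the $\exists_{(A_1,A_2)}$ clause is exactly the route the paper takes: its (very terse) proof likewise observes that a bisimulation is itself a CC-refinement and composes $\underline{\leftrightarrow}$ with $\succeq_{(A_1,A_2)}$ (Proposition~\ref{pro:compositionofsingleton}(1)) so that the very same witness $(M',s')$ serves for $(N,t)$; your remark that no induction hypothesis is needed in this clause is correct. For the remaining clauses the paper simply defers to the proof method of \citep[Theorem 2.17]{Venema2012mucalculus}, so at that level the two arguments agree.

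However, your concrete sketch of the $\mu q.\varphi$ clause has a gap. To invoke the strengthened induction hypothesis with $T=T^*\triangleq\lVert\mu q.\varphi\rVert^{M}$ and $T'=\mathcal{Z}(T^*)$ you must first verify that $\mathcal{Z}$ is a bisimulation between $M^{[q\mapsto T^*]}$ and $N^{[q\mapsto \mathcal{Z}(T^*)]}$, and the backward half of the $q$-harmony is not automatic: if $u\,\mathcal{Z}\,v$ and $u'\,\mathcal{Z}\,v$ with $u'\in T^*$, you need $u\in T^*$, i.e.\ $T^*$ must be saturated under the bisimulation $\mathcal{Z}\circ\mathcal{Z}^{-1}$ on $M$ --- but that saturation is precisely an instance of the bisimulation-invariance of $\mu q.\varphi$ that you are in the middle of proving, so the step is circular as written. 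The standard repair is to replace the Knaster--Tarski step by a transfinite induction on the approximants $\lVert \varphi\rVert^{M}_{\tau}$ and $\lVert \varphi\rVert^{N}_{\tau}$ nested inside the structural induction: the inner hypothesis at stage $\tau$ supplies exactly the $q$-harmony between $M^{[q\mapsto \lVert\varphi\rVert^{M}_{\tau}]}$ and $N^{[q\mapsto \lVert\varphi\rVert^{N}_{\tau}]}$ needed to apply the structural hypothesis at stage $\tau+1$; alternatively one transfers winning strategies in the evaluation game, which is the actual mechanism of \citep[Theorem 2.17]{Venema2012mucalculus} that the paper cites.
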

  %
%
%
In one model, adding \emph{copies} of the generated submodel$\,$\footnote{Given a model $M$ with $w\in S^M$, its $w$-generated submodel $N$ is the model $\langle S,R,V\rangle $ where $S \triangleq R^+_M(w)\cup\{w\}$ with $ R_M^+ \triangleq (\textstyle\bigcup_{b \in A} R^M_b)^+$, $R_b  \triangleq  S^2\cap R_b^M $ for each $ b \in A$, and $V(r)  \triangleq S \cap V^M (r)$ for each $ r \in Atom$~\citep{Patrick2002modallogic}.} of some state  will not change this model w.r.t. bisimulation.
\begin{definition}[Copy]\label{def:copy}
A \emph{copy} of a pointed model $(M,s)$ is a pointed model obtained from $(M,s)$ by renaming every state in $S^M$.
\end{definition}
\begin{proposition}\label{prop:copy bisi}
   A pointed model is bisimilar to its \emph{copy}.
\end{proposition}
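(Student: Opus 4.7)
The plan is to make the renaming bijection itself into the witnessing bisimulation. Let $f: S^M \to S^{M'}$ be the bijection used to rename the states of $M$ in order to obtain $M'$, so that by construction $R^{M'}_b = \{(f(u),f(v)) : u R^M_b v\}$ for each $b \in A$, $V^{M'}(r) = f(V^M(r))$ for each $r \in Atom$, and the copy of $(M,s)$ is $(M',f(s))$. I will take as candidate bisimulation
\[ \mathcal{Z} \;\triangleq\; \{(u, f(u)) : u \in S^M\} \;\subseteq\; S^M \times S^{M'}, \]
and check that $\mathcal{Z}$ satisfies the three bisimulation clauses (which are the $P = \emptyset$, $A_1 = A_2 = \emptyset$ instance of Definition~\ref{def:ccr}) and links $s$ to $f(s)$.

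The atoms clause is immediate from $V^{M'}(r) = f(V^M(r))$ and the injectivity of $f$: $u \in V^M(r)$ iff $f(u) \in V^{M'}(r)$. For the forth clause, if $u R^M_b v$, then by construction $f(u) R^{M'}_b f(v)$ and $(v,f(v)) \in \mathcal{Z}$, so $f(v)$ is the required successor. For the back clause, if $f(u) R^{M'}_b w'$ for some $w' \in S^{M'}$, then surjectivity of $f$ gives $w' = f(v)$ for some $v \in S^M$, and the definition of $R^{M'}_b$ forces $u R^M_b v$; again $(v, f(v)) \in \mathcal{Z}$. Finally, $(s, f(s)) \in \mathcal{Z}$ by construction, so $\mathcal{Z} : M,s \,\underline{\leftrightarrow}\, M',f(s)$.

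There is no real obstacle here; the only thing one has to be slightly careful about is that the bijectivity of $f$ is genuinely used in both directions (injectivity for the atoms clause and the forth clause to make sure the candidate successor is well-defined, surjectivity for the back clause so that every $R^{M'}_b$-successor of $f(u)$ arises from some $R^M_b$-successor of $u$). This is precisely what the informal phrase \emph{renaming every state} guarantees in Definition~\ref{def:copy}.
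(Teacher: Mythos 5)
Your proof is correct: the graph of the renaming bijection is indeed a bisimulation, and the three clauses check out exactly as you describe. The paper dismisses this proposition with ``Obviously,'' so your argument is just the standard spelled-out version of the same observation.
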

\begin{proof}
Obviously.
\end{proof}
\begin{proposition}\label{prop:bisi invariance2}
   Given a pointed model $(M,s)$, assume that $w$ is a successor of $u\in S^M$ and $(M',w')$ is a copy of the $w$-generated submodel  of $M$ such that $M'$ and $M$ are disjoint. Let $N$ be obtained from $M\biguplus M'$ and defined by
   \[\begin{array}{lll}
S^{N} &\triangleq &S^M \cup S^{M'}\\
R_b^{N} & \triangleq & R_b^M \cup R^{M'}_b \cup \{\langle u,w'\rangle\,:\,  u R_b^M w \} \;\;\;\text{ for each } b \in A\\
V^{N}(r) & \triangleq &V^M (r)\cup V^{M'}(r) \;\;\;\text{ for each } r \in Atom.
\end{array}\]
  Then $M,s \underline{\leftrightarrow} N,s$.
\end{proposition}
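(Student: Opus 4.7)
The plan is to exhibit an explicit bisimulation $\mathcal{Z} \subseteq S^M \times S^N$ witnessing $M,s \,\underline{\leftrightarrow}\, N,s$. Let $M_w$ denote the $w$-generated submodel of $M$, and let $f : S^{M_w} \to S^{M'}$ be the renaming bijection with $f(w) = w'$ that makes $(M',w')$ a copy of $(M_w,w)$; in particular $f$ preserves atomic valuations and transitions. Define
\[
\mathcal{Z} \triangleq \{\langle x,x\rangle : x \in S^M\} \;\cup\; \{\langle x, f(x)\rangle : x \in S^{M_w}\}.
\]
Clearly $s \mathcal{Z} s$, so it suffices to verify the three bisimulation clauses for every pair in $\mathcal{Z}$.

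The atom clause is immediate in both cases because $S^M$ and $S^{M'}$ are disjoint, so $V^N$ restricts to $V^M$ on $S^M$ and to $V^{M'}$ on $S^{M'}$, and the latter agrees with $V^M$ along $f$ by virtue of $(M',w')$ being a copy of $(M_w,w)$. For the \textbf{(forth)} clause, consider a pair $\langle x,y\rangle \in \mathcal{Z}$ and some $v$ with $x R_b^M v$. If $y = x$ we use $R_b^M \subseteq R_b^N$ and pick $y' = v$, giving $v \mathcal{Z} v$. If $y = f(x)$ with $x \in S^{M_w}$, then $v$ also lies in $S^{M_w}$ (as a successor), and since $f$ is an isomorphism onto $M'$ we have $f(x) R_b^{M'} f(v)$, hence $y R_b^N f(v)$, and $v \mathcal{Z} f(v)$ holds.

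The delicate case is \textbf{(back)}, because $N$ has strictly more edges than $M \uplus M'$: for each $b$ with $u R_b^M w$ there is an extra $N$-edge $\langle u, w' \rangle$. Consider $\langle x,y\rangle \in \mathcal{Z}$ and some $y'$ with $y R_b^N y'$. If $y = f(x)$, then $y \in S^{M'}$ has no outgoing $N$-edges other than its $R_b^{M'}$-edges (the extra edges start in $S^M$), so $y' = f(v)$ for a suitable successor $v$ of $x$ in $M_w$, and the witnessing triple $(v, v, f(v))$ works. If $y = x \in S^M$ and $y'$ arises from an $R_b^M$-edge, we take $v = y'$ and use $y' \mathcal{Z} y'$. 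The only genuinely new case is when $x = u$, $b$ is one of the actions with $u R_b^M w$, and $y' = w'$; here we set $v \triangleq w$, which gives $u R_b^M w$ and $w \mathcal{Z} f(w) = w'$, as required.

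Thus $\mathcal{Z}$ meets all three bisimulation conditions, yielding $M,s \,\underline{\leftrightarrow}\, N,s$. The main point to watch is exactly the verification of \textbf{(back)} at the state $u$, since that is where the construction adds transitions not present in $M \uplus M'$; every other case is a transparent restriction or transport along the copy isomorphism $f$.
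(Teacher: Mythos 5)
Your proof is correct and follows essentially the same route as the paper: the paper also takes the union of the identity relation on $S^M$ with the bisimulation witnessing $M,w\,\underline{\leftrightarrow}\,M',w'$ (which is exactly the graph of your renaming map $f$) and declares the verification routine. You merely spell out the clause-by-clause check, including the one genuinely new \textbf{(back)} case at $u$, which the paper leaves implicit.
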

\begin{proof}
By Proposition~\ref{prop:copy bisi}, $\mathcal{Z}: M,w \underline{\leftrightarrow} M',w'$ for some $\mathcal{Z}$. Set $\mathcal{Z}' \subseteq S^M \times S^N$ by $\mathcal{Z}' \triangleq \{\langle v,v\rangle\,:\, v\in S^M\} \cup \mathcal{Z}$.
It is routine to check $\mathcal{Z}': M,s \underline{\leftrightarrow} N,s$.
\end{proof}
\section{Axiom system}\label{sec:ccraxiommu}
This section will provide a sound and complete axiom system for CCRML$^{\mu}$, which augments the axiom system for CCRML~\citep{huilixing2018CCrefinementmodallogic} by adding the axiom schemata and rules for fixed-point operators.
Since the uniform substitution rule is not sound in CCRML~\citep{huilixing2018CCrefinementmodallogic},
neither CCRML nor CCRML$^{\mu}$ is normal.

We use $\mathcal{L}^{\mu}_K$ to denote the set of all $\mathcal{L}^{\mu}_{CC}$ formulas containing no CC-refinement operators, and $\mathcal{L}_p$ the set of all propositional formulas in $\mathcal{L}^{\mu}_{CC}$. Obviously, $\mathcal{L}^{\mu}_K$ is indeed the multi-agent modal $\mu$-calculus, which may be axiomatized by the system \textit{\textbf{K}}$^\mu$~\citep{Walukiewicz2000modalmucalcusaxiom}.

The axiom schemata and rules for CCRML$^{\mu}$ are given in Table~\ref{Ta:CCRMLmu Axiom system}. The axiom schema \textbf{F1} and rule \textbf{F2} are standard~\citep{Arnold2001rudimentsmucalculus}, which  characterize the least fixed-point operators; $\mathbf{CCR} \mathbf{in}$ reveals that the operator $\exists_{(a_1,a_2)}$ preserves the inconsistency of $\mathcal{L}_K^{\mu}$-formulas; and $\mathbf{CCR}^{\boldsymbol{\mu}}$ and $\mathbf{CCR}^{\boldsymbol{\nu}}$ may be seen as $\exists_{(a_1,a_2)}$-$\eta q$ crossing laws.
See~\citep[Page 16-17]{huilixing2018CCrefinementmodallogic} for the interpretations of the other axiom schemata and rules.

\begin{table}[h]
\caption{Axiom system of CCRML$^{\mu}$ \label{Ta:CCRMLmu Axiom system}}
\rule{\textwidth}{0.5pt}
\noindent \textbf{Axiom schemata}\\
$\text{ }$ Here $ a_1,a_2,a, b\in A$, $\,A_1,A_2,B\subseteq A$, $\,A_1 \cap A_2 = \emptyset$, $\,r,q\in Atom$, $\,\beta \in \mathcal{L}^{\mu}_K$, \\
$\text{ }$ $\Gamma \subseteq_f \mathcal{L}^{\mu}_K$ and $\Phi,\Phi_b \subseteq_f \mathcal{L}^{\mu}_{CC}$. \\


  \noindent $\begin{array}{ll}
   \text{\textbf{Prop} } & \text{\textit{All propositional tautologies}}\\
   \text{\textbf{K} } & \Box_a (\varphi \rightarrow \psi)\rightarrow (\Box_a \varphi \rightarrow \Box_a \psi) \\
   \text{\textbf{CCR} } & \forall_{(a_1,a_2)} (\varphi \rightarrow \psi)\rightarrow (\forall_{(a_1,a_2)} \varphi \rightarrow \forall_{(a_1,a_2)} \psi)\\
   \text{\textbf{CCRp1} } & \forall_{(a_1,a_2)} r \leftrightarrow r\\
   \text{\textbf{CCRp2} } & \forall_{(a_1,a_2)} \neg r \leftrightarrow \neg r\\
   \text{\textbf{CCRD} }  & \exists_{(A_1,A_2)}\varphi \leftrightarrow (\exists_{\theta_1} \cdots \exists_{\theta_{|A_1 \times A_2|}})\varphi \quad  \; \text{ where } \{\theta_i\}_{1 \leq i \leq |A_1 \times A_2|} \text{ is any }  \\
    \text{ } & \text{permutation of elements in } A_1 \times A_2
  \end{array}$



  \noindent $\begin{array}{lll}
    \text{\textbf{CCRKco1}}  & \exists_{(a_1,a_2)} \nabla_{a_1} \Gamma \leftrightarrow \bot &  \text{if } \vdash_{\text{\textit{\textbf{K}}}^{\mu}}\beta \leftrightarrow \bot \text{ for some } \beta\in \Gamma\\
     \text{\textbf{CCRKco2}} & \exists_{(a_1,a_2)} \nabla_{a_1}\Gamma \leftrightarrow \Box_{a_1} \bigvee_{\varphi\in \Gamma} \exists_{(a_1,a_2)} \varphi
    & \text{if } \nvdash_{\text{\textit{\textbf{K}}}^{\mu}}\beta \leftrightarrow \bot \text{ for all } \beta\in \Gamma\\
     \end{array}$

  \noindent $\begin{array}{lll}
    \text{\textbf{CCRKcontra}} & \exists_{(a_1,a_2)} \nabla_{a_2} \Phi \leftrightarrow \bigwedge_{\varphi\in \Phi} \diamondsuit_{a_2} \exists_{(a_1,a_2)} \varphi  & \\

    \text{\textbf{CCRKbis}} & \exists_{(a_1,a_2)} \nabla_b \Phi \leftrightarrow \nabla_b \exists_{(a_1,a_2)}\Phi & \text{where } b \neq a_1, a_2\\
    \text{\textbf{CCRKconj}} & \exists_{(a_1,a_2)} \bigwedge_{b\in B} \nabla_b \Phi_b \leftrightarrow \bigwedge_{b\in B} \exists_{(a_1,a_2)} \nabla_b \Phi_b &\;
  \end{array}$


   \noindent $\begin{array}{lll}
    \text{\textbf{F}} \boldsymbol{1} \; & \varphi [\mu q. \varphi / q] \rightarrow \mu q. \varphi  \\
    \text{\textbf{CCR}}^{\boldsymbol{\nu}} \; & \exists_{(a_1,a_2)} \nu q. \beta \leftrightarrow \nu q. \exists_{(a_1,a_2)} \beta \\
    \text{\textbf{CCR}} \mathbf{in} \; & \exists_{(a_1,a_2)}  \beta \leftrightarrow \bot & \text{if } \vdash_{\text{\textit{\textbf{K}}}^{\mu}}\beta \leftrightarrow \bot \\
    \text{\textbf{CCR}}^{\boldsymbol{\mu}} \; & \exists_{(a_1,a_2)} \mu q. \beta \leftrightarrow \mu q. \exists_{(a_1,a_2)} \beta & \text{if } \nvdash_{\text{\textit{\textbf{K}}}^{\mu}}\mu q. \beta \leftrightarrow \bot
  \end{array}$

   $\;$\\

  \noindent \textbf{Rules}\\

  $\begin{array}{llll}
    \quad\text{\textbf{MP} }  \dfrac{\varphi \rightarrow \psi, \varphi}{\psi}
    & \quad\text{ \textbf{NK} }\dfrac{\varphi}{\Box_a \varphi}
   & \quad\text{ \textbf{NCCR} } \dfrac{\varphi}{\forall_{(a_1,a_2)} \varphi}
   & \quad\text{ \textbf{F}}\boldsymbol{2} \; \dfrac{\varphi [\psi / q] \rightarrow \psi}{\mu q. \varphi \rightarrow \psi}\\
   \;& &
  \end{array}$

  \rule{\textwidth}{0.5pt}
\end{table}

As usual, $\,\vdash \beta$ ($\vdash_{\text{\textit{\textbf{K}}}^{\mu}} \beta$) means that $\beta$ is a theorem in CCRML$^{\mu}$ (resp., \textit{\textbf{K}}$^\mu$~\citep{Walukiewicz2000modalmucalcusaxiom}).
\subsection{Technical preliminary}\label{subsec:techinical preparation}
In this subsection, some technical preparations will be made for establishing the soundness of the axiom system.
\begin{definition}[Tree-like model]\label{def:tree-like model}
A model $M$ is \emph{tree-like} whenever the following conditions are satisfied:

\noindent $\;\;\mathbf{(i) }\,$ there is a \emph{unique} state $s \in S^M$, ~called the \emph{root}, ~such that ~$\forall t \in S^M-\{s\}$

\noindent$\text{ }\quad\;\,$  $(s R^+_M t)$, that is, every $t \in S^M-\{s\}$ is accessible from $s$;

\noindent $\;\mathbf{(ii)}\,$ for each $t \in S^M-\{s\}$, there is a \emph{unique} $t' \in S^M$ such that $t' R^M_a t$ for some

\noindent $\text{ }\quad\;\,$ $a \in A$;

\noindent  $\mathbf{(iii)}$ $R^M_a \cap R^M_b=\emptyset$ for all $a,b \in A$ with $a\neq b$;

\noindent $\mathbf{(iv)}\,$ $\forall t \in S^M \, ( \langle t,t\rangle \notin R^+_M)$.
\end{definition}
\begin{proposition}\label{prop:model equivalent to tree-like model }
Any \emph{rooted} model$\,$\footnote{As usual, a model is \emph{rooted} if there is a unique state $s\in S^M$ such that $s$ has no precedent and the other states are accessible from $s$.} is bisimilar to a \emph{tree-like} model.
\end{proposition}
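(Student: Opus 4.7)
The plan is to adapt the standard tree unravelling construction and then verify that it satisfies all four clauses of Definition~\ref{def:tree-like model}. Given a rooted model $M$ with root $s$, I take $S^{M'}$ to be the set of all finite alternating sequences $\pi = s_0 a_1 s_1 \cdots a_n s_n$ with $n\geq 0$, $s_0 = s$, each $a_i \in A$, each $s_i\in S^M$, and $s_{i-1} R^M_{a_i} s_i$ for $1\leq i\leq n$; write $\mathrm{last}(\pi)\triangleq s_n$ (so $\mathrm{last}((s)) = s$). For each $b\in A$, declare $\pi \,R^{M'}_b\, \pi'$ iff $\pi' = \pi \cdot b \cdot t$ for some $t\in S^M$ with $\mathrm{last}(\pi) R^M_b t$, and set $V^{M'}(r)\triangleq\{\pi\in S^{M'} : \mathrm{last}(\pi)\in V^M(r)\}$ for every $r\in Atom$.

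Next I verify that $M'$ meets (i)--(iv). The trivial sequence $(s)$ is the only element of $S^{M'}$ with no predecessor, and every longer sequence is reached from $(s)$ by following its own transitions step by step, giving (i). Every non-root $\pi'$ admits a unique decomposition $\pi \cdot b \cdot t$; hence $\pi$ is its unique predecessor and $b$ the unique label of the connecting transition, which yields (ii) and forces $R^{M'}_{a} \cap R^{M'}_{b} = \emptyset$ for $a\neq b$, settling (iii). Each $R^{M'}$-step strictly increases the length of the sequence, so no $\pi$ can belong to $R^+_{M'}(\pi)$, giving (iv).

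Finally, I check that $M,s \,\underline{\leftrightarrow}\, M',(s)$. Take $\mathcal{Z}\triangleq \{\langle \mathrm{last}(\pi),\pi\rangle : \pi\in S^{M'}\}$, which in particular pairs $s$ with the root $(s)$. The atom clause is immediate from the definition of $V^{M'}$. For (forth), any $t$ with $\mathrm{last}(\pi) R^M_b t$ gives the successor $\pi\cdot b\cdot t \in S^{M'}$ of $\pi$ with $\langle t, \pi\cdot b\cdot t\rangle \in \mathcal{Z}$; for (back), every $R^{M'}_b$-successor of $\pi$ is, by construction, of the form $\pi\cdot b\cdot t$ with $\mathrm{last}(\pi) R^M_b t$, so the required preimage $t$ exists. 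The argument presents no real obstacle; rootedness of $M$ is used only to ensure that the entire state space of $M$ is visited by some sequence starting at $s$, so the quotient by $\mathcal{Z}$ covers everything relevant, and the rest is pure bookkeeping.
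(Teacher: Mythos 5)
Your proposal is correct and is essentially the paper's own proof: both take the unravelling whose states are alternating state--action sequences from the root (the action labels being recorded precisely to secure condition \textbf{(iii)}), with transitions extending sequences, valuation read off the last state, and the bisimulation $\mathcal{Z}$ pairing each sequence with its last state. The verification of conditions \textbf{(i)}--\textbf{(iv)} and of the bisimulation clauses matches what the paper does (the paper delegates part of it to the standard unravelling result of Blackburn et al.).
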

\begin{proof}
In~\citep[Proposition 2.15]{Patrick2002modallogic}, it was shown that any \emph{rooted} model is bisimilar to a \emph{traditional} tree-like model~\citep[Definition 1.7]{Patrick2002modallogic}, i.e., a model satisfying the conditions \textbf{(i)(ii)(iv)} in Definition~\ref{def:tree-like model}. In order to satisfy the condition \textbf{(iii)}, we intend to reconstruct the \emph{traditional} tree-like model given in the proof of~\citep[Proposition 2.15]{Patrick2002modallogic}. We add \emph{agents} information to its state names.

Let $M$ be a model with the root $u_0 $. Define the model $M'$ as follows: ~for each  $b \in A$ and $r \in Atom$,
\[\begin{array}{lll}
S^{M'} &\triangleq &\left\{\langle u_0,a_1,u_1,\cdots,a_n,u_n\rangle\,:\,n\geq 0 \text{ and } u_0 R^M_{a_1} u_1 R^M_{a_2} u_2\cdots R^M_{a_n} u_n \right\}\\
R_b^{M'} & \triangleq & \{\langle\langle u_0,a_1,u_1,\cdots,a_n,u_n\rangle,\langle u_0,a_1,u_1,\cdots,a_n,u_n,b,u_{n+1}\rangle \rangle \,:\,n\geq 0 \\
&&\qquad\qquad\qquad\qquad\qquad\quad \text{ and } u_0 R^M_{a_1} u_1 R^M_{a_2} u_2\cdots R^M_{a_n} u_n  R^M_{b} u_{n+1} \}\\
V^{M'}(r) & \triangleq & \left\{\langle u_0,a_1,u_1,\cdots,a_n,u_n\rangle\in S^{M'} \,:\,n\geq 0 \text{ and }  u_n\in V^M (r) \right\}.
\end{array}\]
It is not difficult to see that $M'$ satisfies the condition \textbf{(iii)} and $M'$ is a tree-like model. Moreover, it is straightforward to check that $\mathcal{Z}: M,u_0\,\underline{\leftrightarrow}\, M',u_0$ where
\[\mathcal{Z} \triangleq  \left\{\langle u_n,\langle u_0,a_1,u_1,\cdots,a_n,u_n\rangle \rangle\,:\,n\geq 0 \text{ and } u_0 R^M_{a_1} u_1 R^M_{a_2} u_2\cdots R^M_{a_n} u_n \right\}.\qedhere\]
 \end{proof}
For tree-like models, the results in Proposition~\ref{pro:compositionofsingleton} (2) still hold.
\begin{proposition}\label{pro:compositionofsingleton2}
Assume that $(M,s)$ and $(N,t)$ are both \emph{tree-like} models.
If $ M,s \succeq_{(A_1,A_2)} N,t$ then there exist a \emph{tree-like} model $(N',t')$ and $\mathcal{Z}$ such that

  $(1) \text{ }N,t \,\underline{\leftrightarrow}\, N',t'$, \\
   $(2) \text{ }\mathcal{Z}: M,s \succeq_{(A_1,A_2)} N',t'$, ~and \\
  $(3) \text{ } \mathcal{Z}$ is an injective relation from $S^{M}$ to $S^{N'}$, that is, $\mathcal{Z}$ satisfies


 $\;(\text{\emph{injectivity}}) \;\;\text{ }\; \;\forall v \in  S^{N'}  \forall w_1,w_2 \in S^{M} (w_1 \mathcal{Z} v \text{ and } w_2 \mathcal{Z} v \Rightarrow w_1 =w_2)$.
\end{proposition}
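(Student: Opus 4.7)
My plan is to construct $N'$ in two stages: first build an auxiliary ``paired'' model $N^*$ that admits an injective refinement from $M$ while remaining bisimilar to $N$, then unravel $N^*$ into a tree-like $N'$ via Proposition~\ref{prop:model equivalent to tree-like model }.

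Starting from a witness $\mathcal{Z}_0: M, s \succeq_{(A_1,A_2)} N, t$, I let $S^{N^*}$ consist of pairs $(\alpha, v)$ where either $\alpha \in S^M$ with $\alpha \mathcal{Z}_0 v$, or $\alpha = *$ (a fresh ``ghost'' symbol) with $v \in S^N$, and set $V^{N^*}((\alpha, v)) \triangleq V^N(v)$. The transition relation $R^{N^*}_a$ is generated by three clauses: (a) $(w_1, v_1) R^{N^*}_a (w_2, v_2)$ whenever $w_1 R^M_a w_2$, $v_1 R^N_a v_2$, and $w_2 \mathcal{Z}_0 v_2$; (b) only for $a \in A_1 - A_2$, $(w_1, v_1) R^{N^*}_a (*, v_2)$ whenever $v_1 R^N_a v_2$ and clause (a) yields no successor for this $v_2$; (c) $(*, v_1) R^{N^*}_a (*, v_2)$ whenever $v_1 R^N_a v_2$. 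The key design point is that ghosts are created only along purely covariant actions, since those are precisely the ones for which \textbf{back} of $\mathcal{Z}_0$ may fail.

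I then verify three things. First, the relation $\mathcal{B} = \{(v, (\alpha, v)) : (\alpha, v) \in S^{N^*}\}$ witnesses $N, t \underline{\leftrightarrow} N^*, (s, t)$: the \textbf{forth} direction is covered precisely by clauses (a)--(c) (using \textbf{back} of $\mathcal{Z}_0$ when $a \in A - A_1$ from a non-ghost state), and \textbf{back} is immediate. Second, $\mathcal{Z}^* = \{(w, (w, v)) : w \mathcal{Z}_0 v\}$ witnesses $M, s \succeq_{(A_1, A_2)} N^*, (s, t)$ and is injective by construction: atoms follow from $\mathcal{Z}_0$, \textbf{forth} from \textbf{forth} of $\mathcal{Z}_0$ via clause (a), and \textbf{back} from the observation that for $a \in A - A_1$ clauses (b)--(c) cannot fire out of a non-ghost state, so every successor has clause-(a) shape $(w', v')$ with $w R^M_a w'$. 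Third, I apply Proposition~\ref{prop:model equivalent to tree-like model } to the $(s,t)$-generated submodel of $N^*$ to obtain a tree-like $N'$ whose states are paths $\sigma = \langle (s,t), a_1, (\alpha_1, v_1), \ldots, a_n, (\alpha_n, v_n)\rangle$; set $t' \triangleq \langle (s,t)\rangle$ and $\mathcal{Z} \triangleq \{(w, \sigma) : w \in S^M \text{ and } \sigma \text{ ends at } (w, v) \text{ for some } v\}$. Claim (1) then follows by transitivity of $\underline{\leftrightarrow}$ from the first verification and Proposition~\ref{prop:model equivalent to tree-like model }; (2) is checked exactly as in the second verification, now reading each $N^*$-transition as the one-step extension of a path; (3) is immediate since every $\sigma$ has a unique last component.

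The main obstacle will be the bookkeeping in the definition of $R^{N^*}_a$: ghosts are \emph{necessary} on purely covariant transitions to cover $N$-successors missed by $\mathcal{Z}_0$ (so that bisimilarity with $N$ survives), yet ghosts must be \emph{forbidden} on non-covariant transitions out of non-ghost states, since otherwise the \textbf{back} clause of $\mathcal{Z}^*$ (and hence of $\mathcal{Z}$) would break. Getting the asymmetry between $A - A_1$ and $A - A_2$ right in the three clauses is the only delicate point; the remaining steps are routine case-splits and a standard path unraveling.
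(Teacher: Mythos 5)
Your proof is correct, but it takes a genuinely different route from the paper's. The paper works by local surgery on $N$ itself: fixing a witness $\mathcal{Z}$ (which, for tree-like models, already satisfies $\mathcal{Z}(s)=\{t\}$ and $\mathcal{Z}^{-1}(t)=\{s\}$), it picks a state $v$ with several $\mathcal{Z}$-preimages among the $b$-successors of a fixed preimage $u'$ of $v$'s parent, adds one copy of the $v$-generated submodel per such preimage $u\in\Delta_v$ (bisimilarity being preserved by Proposition~\ref{prop:bisi invariance2}), and redirects $\mathcal{Z}$ so that each $u$ points to its own copy's root $v_u$; this step is then understood to be iterated down the tree, with the verification left as routine. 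You instead build in one shot the ``graph of $\mathcal{Z}_0$'' enriched with ghost states $(*,v)$ covering the part of $N$ that $\mathcal{Z}_0$ misses, and then unravel via Proposition~\ref{prop:model equivalent to tree-like model }. Your construction buys automatic injectivity (distinct first components), a non-inductive, fully explicit verification, and it isolates exactly where the asymmetry of CC-refinement matters: ghosts are needed precisely on $A_1$-transitions because \textbf{(back)} is only guaranteed for $a\in A-A_1$, and conversely ghosts must not be reachable from non-ghost states along $A-A_1$ or \textbf{(back)} of $\mathcal{Z}^*$ would fail --- a point the paper's proof leaves implicit. The paper's construction, in return, keeps the original $N$ as a literal submodel and leans on an already-proved lemma, so it nominally requires less new verification. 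Two harmless remarks: the guard ``clause (a) yields no successor for this $v_2$'' in your clause (b) is not actually needed (adding the ghost unconditionally for $a\in A_1$ breaks nothing), and your appeal to transitivity of $\underline{\leftrightarrow}$ together with composing $\mathcal{Z}^*$ with the unraveling bisimulation is exactly the composition pattern of Proposition~\ref{pro:compositionofsingleton}~(1), under which injectivity survives because each path has a unique last component.
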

\begin{proof}
Let $ M,s \succeq_{(A_1,A_2)} N,t$. At first glance, by Proposition~\ref{prop:bisi invariance2}, we may add enough copies of the generated submodel of each non-root state in $N$.

Since $(M,s)$ and $(N,t)$ are both \emph{tree-like} models, $\mathcal{Z}: M,s \succeq_{(A_1,A_2)} N,t$ for some $\mathcal{Z}$ such that $ \mathcal{Z}(s)=\{t\}$ and $ \mathcal{Z}^{-1}(t)=\{s\}$.
Let $u' \mathcal{Z} v'$ with $ \mathcal{Z}^{-1}(v')=\{u'\}$, and  let $v' R^N_b v$ with $b\in A$ and $ \mathcal{Z}^{-1}(v)\neq\emptyset$.
Set $\Delta_v\triangleq  \mathcal{Z}^{-1}(v)\cap R^M_b(u') $.

 For each $u\in \Delta_v$, the model $(N_u,v_u)$ is the copy of the $v$-generated submodel of $N$ with each state $w$ of this submodel renamed $w_u$.
We w.l.o.g. assume that $N$ and these copies are pairwise disjoint.
Let $N_1$ be obtained from $N\uplus \biguplus_{u\in \Delta_v} N_u$  by adding the transitions $\{v' \stackrel{b}{\rightarrow} v_u \}_{u\in \Delta_v}$. Clearly, $(N_1,t)$ is a tree-like model.

If $|\Delta_v|\geq 1$,
we choose arbitrarily and fix a state $u_0$ in $\Delta_v$ and put $S\triangleq \{u_0\}$, else put $S\triangleq \emptyset$.
To complete this proof, it suffices to show that $N,t \,\underline{\leftrightarrow}\, N_1,t$ and $\mathcal{Z}_1: M,s \succeq_{(A_1,A_2)} N_1,t$ where
\begin{multline*}
\mathcal{Z}_1 \triangleq \left(\mathcal{Z}-(( S^M-S)\times \{v\})\right)\;\cup \;  \{\langle u,v_u \rangle\}_{u \in \Delta_v}\;\cup\\
\{\langle w_1,w_u\rangle\,:\, w\in R^+_N(v), w_1 \mathcal{Z} w \text{ and } u \in \Delta_v\}.
\end{multline*}
The former one  follows immediately by Proposition~\ref{prop:bisi invariance2}. It is routine to check $\mathcal{Z}_1: M,s \succeq_{(A_1,A_2)} N_1,t$. Moreover, $\mathcal{Z}_1^{-1}(v)=\{u_0\}$ and $\mathcal{Z}_1^{-1}(v_u)=\{u\}$ for each $u\in \Delta_v$, as expected.
 \end{proof}
We next recall the notion of \textit{disjunctive formula with fixed points}~\citep{Walukiewicz2000modalmucalcusaxiom}.
\begin{definition}[\textbf{\textit{df}}~\citep{Janin1996automataformu-calculus,Walukiewicz2000modalmucalcusaxiom}]\label{def:df}
The set of all the disjunctive formulas, \textbf{\textit{df}}, is the smallest set defined by the following clauses:

\noindent $\;\;\mathbf{(i) }\;$ $\mathcal{L}_p \subseteq \text{\textbf{\textit{df}}}$;

\noindent $\;\mathbf{(ii)}\;$ if $\alpha,\beta\in \text{\textbf{\textit{df}}}$, ~then $\alpha \vee \beta\in \text{\textbf{\textit{df}}}$;

\noindent  $\mathbf{(iii)}\,$ if $\alpha \in \mathcal{L}_p$ and $\Phi_b\subseteq \text{\textbf{\textit{df}}}$ for each $b\in B\subseteq A$, ~then $\alpha \wedge \bigwedge_{b\in B} \nabla_b \Phi_b \in \text{\textbf{\textit{df}}}$ ;

\noindent $\mathbf{(iv)}\;$ if $q\in Atom$ occurs only positively in $\alpha$ and not in the context $q \wedge \gamma$ for

\noindent $\text{ }\quad\;\;$ some $\gamma$, ~then $\eta q.\alpha \in \text{\textbf{\textit{df}}}$.
\end{definition}
\begin{proposition}[\citep{Janin1996automataformu-calculus,Walukiewicz2000modalmucalcusaxiom})]\label{prop:disjunctive formula with fixed point to modal formula equivalent}
For each $\psi \in \mathcal{L}^{\mu}_K$, there is a \textbf{\textit{df}} formula $\alpha$ such that $\vdash_{\text{\textit{\textbf{K}}}^{\mu}} \psi \leftrightarrow \alpha$ and $\,\models \psi \leftrightarrow \alpha$.
\end{proposition}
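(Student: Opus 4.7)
The plan is to follow the automata-theoretic route of Janin and Walukiewicz, since the semantic half of the statement is a direct consequence of the simulation theorem for parity automata on trees, and the syntactic half will then be closed using the completeness of $\boldsymbol{K}^{\mu}$. First I would translate the given $\psi\in \mathcal{L}^{\mu}_K$ into an equivalent alternating parity automaton $\mathcal{A}_{\psi}$ running on pointed Kripke models. This translation is routine: each subformula induces a state, the transition condition reads off $\wedge$, $\vee$, $\Box_b$, $\diamondsuit_b$ as alternating choices, and the fixed-point alternation supplies the parity index. The resulting automaton accepts exactly the pointed models where $\psi$ holds.

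Next I would invoke the Simulation Theorem: every alternating parity automaton on trees can be simulated by a non-deterministic parity automaton accepting the same language up to bisimulation. Concretely, this produces a non-deterministic automaton $\mathcal{B}$ whose transitions at each state, for each input letter (valuation), specify a choice among finitely many requirements of the form $\bigwedge_{b\in B}\nabla_b \Psi_b$, where $\Psi_b$ is a finite set of successor states. This is exactly the shape of clause (iii) in Definition~\ref{def:df}. By unfolding $\mathcal{B}$ into a system of equations indexed by its states and ordering the fixed-point operators according to the parity ranks (lowest rank innermost, alternating $\mu$ and $\nu$), I obtain a formula $\alpha$ in which every modal occurrence appears inside a clause of the form $\alpha' \wedge \bigwedge_{b\in B}\nabla_b \Phi_b$, i.e., $\alpha \in \textbf{\textit{df}}$. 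The semantic equivalence $\models \psi \leftrightarrow \alpha$ follows because $\mathcal{A}_{\psi}$, $\mathcal{B}$ and $\alpha$ all define the same bisimulation-invariant class of pointed models.

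For the syntactic claim $\vdash_{\boldsymbol{K}^{\mu}}\psi \leftrightarrow \alpha$, I would appeal to Walukiewicz's completeness theorem for $\boldsymbol{K}^{\mu}$: semantic validity over all Kripke models implies provability in $\boldsymbol{K}^{\mu}$. Since $\psi$ and $\alpha$ are both in $\mathcal{L}^{\mu}_K$ and we already have $\models \psi \leftrightarrow \alpha$, we obtain $\vdash_{\boldsymbol{K}^{\mu}}\psi \leftrightarrow \alpha$ at once. Alternatively, one could give a more direct syntactic normalisation by pushing $\nabla_b$ outward using the standard distributivity laws together with the Beki\v{c}–Arnold rules for nested fixed points, but invoking completeness is both shorter and exactly in the spirit of the citation.

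The main obstacle is the simulation step: converting an alternating $\mu$-automaton into an equivalent non-deterministic one is the technically demanding ingredient, since it requires a powerset-style construction augmented with a bookkeeping mechanism (typically a Safra-like or rank-based tracking) that respects the parity acceptance condition. Once this step is available off the shelf from \citep{Janin1996automataformu-calculus,Walukiewicz2000modalmucalcusaxiom}, the remaining work is essentially syntactic bookkeeping: verifying that the formula extracted from the non-deterministic automaton satisfies the four clauses of Definition~\ref{def:df}, in particular the positivity/no-context condition (iv) on bound propositional letters, which is guaranteed by the way the fixed-point equations are solved in order of increasing parity rank.
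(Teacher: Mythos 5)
The paper does not actually prove this proposition; it imports it verbatim from the cited sources, so there is no in-paper argument to compare against. Your reconstruction of the semantic half --- translating $\psi$ into an alternating parity automaton, applying the Janin--Walukiewicz simulation theorem to obtain a non-deterministic automaton whose transition conditions have exactly the shape of clause \textbf{(iii)} of Definition~\ref{def:df}, and reading back a fixed-point formula ordered by parity rank --- is the standard proof from those sources and is correct, including your identification of the simulation step as the only technically hard ingredient. The one place where you genuinely depart from the cited material is the syntactic half: you derive $\vdash_{\text{\textit{\textbf{K}}}^{\mu}}\psi\leftrightarrow\alpha$ from $\models\psi\leftrightarrow\alpha$ by invoking Walukiewicz's completeness theorem. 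This is logically sound when completeness is taken as an off-the-shelf result, but be aware that it inverts the architecture of the source: in Walukiewicz's paper the provable equivalence of every formula to a disjunctive one is the central lemma from which completeness is then derived (completeness restricted to disjunctive formulas being the comparatively tractable part), so unwinding the citations along your route would be circular. The non-circular route is the direct syntactic normalisation you mention only as an alternative. Since the present paper likewise treats both the simulation theorem and completeness of \textit{\textbf{K}}$^{\mu}$ as black boxes, your argument is acceptable as a justification of the proposition as used here.
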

%
%

Proposition~\ref{pro:soundness base} reveals that, given $M,s \models \varphi(q)$ where $q \in Atom$ and $\eta q.\varphi(q) \in \text{\textbf{\textit{df}}}$,
 there exists a tree-like model $N$ with the root $t$ such that $(N,t)$ is $q$-restricted bisimilar to $(M,s)$, $N,t \models \varphi(q)$ and this satisfiability does not depend on the descendants  of the non-root $q$-states and  the assignments of the propositional letters in $Atom-\{q\}$ at these $q$-states. We intend to show these statements based on \emph{operational semantics} for the $\mathcal{L}^{\mu}_K$-formulas described in~\citep{Janin1996automataformu-calculus}. We firstly recall the related notions.
\begin{definition}[Tableau rules~\citep{Janin1996automataformu-calculus}]\label{def:os rules}
Let $\eta q.\varphi(q)\in \mathcal{L}^{\mu}_K$.
The system $\mathcal{S}^{\varphi}$ of \emph{tableau rules} parameterized by $\varphi(q)$  is defined as follows:
\[\begin{array}{lll}
 &(\text{\textbf{and}})\;\dfrac{\{\alpha,\beta,\Gamma\}}{\{\alpha\wedge\beta,\Gamma\}} &\qquad (\text{\textbf{or}})\;\dfrac{\{\alpha,\Gamma\}\quad \{\beta,\Gamma\}}{\{\alpha\vee\beta,\Gamma\}}\\
 &(\mu)\;\dfrac{\{\alpha(r),\Gamma\}}{\{\mu r.\alpha(r),\Gamma\}} &\qquad (\nu)\;\dfrac{\{\alpha(r),\Gamma\}}{\{\nu r.\alpha(r),\Gamma\}} \\
 &\textbf{(}\text{\textbf{reg}}\textbf{)}\;\dfrac{\{\alpha(r),\Gamma\}}{\{r,\Gamma\}} & \text{whenever }\eta r.\alpha(r) \text{ is a subformula of } \varphi(q)
\end{array}\]
\[\begin{array}{lll}
(\text{\textbf{mod}})\;\dfrac{\{\psi\}\cup\{\bigvee \Upsilon:\,\nabla_b \Upsilon \in \Gamma \text{ and } \Upsilon \neq \Psi\}\; \text{ for each } \nabla_b \Psi \in \Gamma \text{ and } \psi\in \Psi}{\Gamma}
\end{array}\]
where $\alpha,\beta,\psi\in \mathcal{L}^{\mu}_K$, $\Gamma\subseteq \mathcal{L}^{\mu}_K$ and $\{\alpha,\Gamma\}$ is used as a shorthand for $\{\alpha\}\cup\Gamma$.
\end{definition}

\begin{definition}[Tableau~\citep{Janin1996automataformu-calculus}]\label{def:tableau}
Let $\eta q.\,\varphi(q)\in \mathcal{L}^{\mu}_K$.
A \emph{tableau} for $\varphi(q)$ is a tree $\mathcal{T}=\langle \mathbf{T}, \mathbf{L}^{\mathbf{T}}\rangle $, where $\mathbf{T}=\langle \mathbf{W}^{\mathbf{T}},\mathbf{E}^{\mathbf{T}}\rangle$ is a directed tree in the graph-theoretic sense and
$\mathbf{L}^{\mathbf{T}}: \mathbf{W}^{\mathbf{T}}\rightarrow 2^{\mathcal{L}^{\mu}_K}$ is a \emph{labelling function} such that:

\noindent $\;\mathbf{(i) }$ $\mathbf{L}^{\mathbf{T}}(\mathbf{v}_0)=\{\varphi(q)\}$ where $ \mathbf{v}_0$ is the root of $\mathbf{T}$;

\noindent $\mathbf{(ii)}$ the children of a node are created and labeled according to the rules in $\mathcal{S}^{\varphi}$,

\noindent $\text{ }\quad$ with the rule $($\textbf{mod}$)$ applied  only when no other rule is applicable.

\noindent Leaves and nodes where the rule $($\textbf{mod}$)$ is applied are called \emph{modal nodes}. The root of $\mathbf{T}$ and children of modal nodes are called \emph{choice nodes}.
\end{definition}
\noindent We use $\mathbf{u},\mathbf{v},\mathbf{w}$ to range over $\mathbf{E}^{\mathbf{T}}$.
\begin{definition}[Marking~\citep{Janin1996automataformu-calculus}]\label{def:marking}
For a tableau $\mathcal{T}=\langle \mathbf{T}, \mathbf{L}^{\mathbf{T}}\rangle $, its \emph{marking} w.r.t. a pointed model $(M,s_0)$ is a relation $\mathcal{M} \subseteq S^M  \times \mathbf{W}^{\mathbf{T}}$ satisfying the following conditions:

\noindent $\;\;\mathbf{(i) }$ $ s_0 \mathcal{M} \mathbf{v}_0 $ where $ \mathbf{v}_0$ is the root of $\mathbf{T}$.

\noindent $\;\mathbf{(ii)}$ If $ s\mathcal{M}\mathbf{v}  $ and a rule other than  $($\textbf{mod}$)$ is applied in $\mathbf{v}$, then $ s\mathcal{M}\mathbf{v}'  $ for some

\noindent $\text{ }\quad\;$ child $\mathbf{v}'$ of $\mathbf{v}$.

\noindent $\mathbf{(iii)}$ If $ s\mathcal{M}\mathbf{v}  $ and the rule  $($\textbf{mod}$)$ is applied at $\mathbf{v}$, then for each $b\in A$ for which

\noindent $\text{ }\quad\;$ exists a formula of the form $\nabla_b \Psi$ in $\mathbf{L}^{\mathbf{T}}(\mathbf{v})$,

\noindent $\text{ }\quad\;$ $(a)$ for each $b$-child $\mathbf{v}'$ of $\mathbf{v}$, there exists $s'\in R^M_b(s)$ such that $ s'\mathcal{M}\mathbf{v}'  $;

\noindent $\text{ }\quad\;$ $(b)$ for each $s'\in R^M_b(s)$, there exists a $b$-child $\mathbf{v}'$ of $\mathbf{v}$ such that $ s'\mathcal{M}\mathbf{v}'  $.
\end{definition}

\begin{definition}[Consistent marking~\citep{Janin1996automataformu-calculus}]\label{def:os rules}
Using the notations from Definition~\ref{def:marking}, a marking $\mathcal{M}$ of $\mathcal{T}$ w.r.t. $(M,s_0)$ is \emph{consistent}  ~if and only if ~it satisfies the following conditions:

\noindent \textbf{(local consistency)} ~for each modal node $\mathbf{v}$ and $s\in S^M$, if $ s\mathcal{M}\mathbf{v}$ then $M, s\models \theta$

$\qquad\qquad\qquad\qquad $for each \emph{literal}$\,$\footnote{As usual, a \emph{literal} is a propositional letter or the
negation of a propositional letter.}  $\theta$ occurring in $\mathbf{L}^{\mathbf{T}}(\mathbf{v})$;

\noindent \textbf{(global consistency)} ~for every path $\mathbf{v}_0,\mathbf{v}_1,\cdots$ of $\mathbf{T}$ with $ \mathbf{v}_i\in \pi_2(\mathcal{M})$ ( $i\geq 0$),

$\qquad\qquad\qquad\qquad\;$ the rule $($\textbf{reg}$)$ is not applied in $ \mathbf{v}_i$ for each $i\geq 0$.
\end{definition}
\begin{proposition}[\citep{Janin1996automataformu-calculus}]\label{pro:consistent tableau}
Let $\eta q. \varphi(q) \in \mathcal{L}^{\mu}_K$. So $M,s_0 \models \varphi(q)$ ~if and only if ~there exists a \emph{consistent marking} of a tableau for $\varphi(q)$ w.r.t. $(M,s_0)$.
\end{proposition}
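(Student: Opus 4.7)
The plan is to prove this via the standard correspondence between tableau markings and winning strategies in the parity evaluation game for $\mathcal{L}^{\mu}_K$. I would first invoke the game-theoretic semantics: there is a parity evaluation game $\mathcal{G}(M,\varphi(q))$ whose positions are pairs $(s,\psi)$ with $s\in S^M$ and $\psi$ a subformula of $\varphi(q)$, such that Eloise has a positional winning strategy from $(s_0,\varphi(q))$ iff $M,s_0\models\varphi(q)$. The remainder then reduces to identifying consistent markings of a tableau for $\varphi(q)$ with positional winning strategies for Eloise in $\mathcal{G}(M,\varphi(q))$.

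For the forward direction, fix a positional winning strategy $\sigma$ for Eloise. Build any tableau $\mathcal{T}$ for $\varphi(q)$ and define $\mathcal{M}$ inductively along $\mathcal{T}$: put $s_0\,\mathcal{M}\,\mathbf{v}_0$, and then at deterministic rule applications ((\textbf{and}), $(\mu)$, $(\nu)$, (\textbf{reg})) simply inherit the marking; at (\textbf{or}) nodes follow the disjunct chosen by $\sigma$; at a (\textbf{mod}) node $\mathbf{v}$ labelled by $\Gamma$ with a currently marked model state $s$, for each $\nabla_b\Psi\in\Gamma$ use the successor of $s$ selected by $\sigma$ (together with Abelard's universal choices over $R^M_b(s)$) to instantiate clauses (iii)(a),(b) of Definition~\ref{def:marking}. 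Local consistency holds because any literal $\theta\in\mathbf{L}^{\mathbf{T}}(\mathbf{v})$ is satisfied at every $\sigma$-reachable state $s$ — otherwise Eloise would lose immediately at a literal position. Global consistency reflects the parity winning condition: an infinite marked path with cofinal (\textbf{reg}) firings would unfold a $\mu$-subformula forever, contradicting the winning property of $\sigma$.

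For the converse, given a consistent marking $\mathcal{M}$ of some tableau $\mathcal{T}$, I would extract a positional strategy $\sigma$ for Eloise by, at each of her positions $(s,\psi)$, locating a node $\mathbf{v}\in\pi_2(\mathcal{M})$ with $\psi$ present in $\mathbf{L}^{\mathbf{T}}(\mathbf{v})$ and $s\,\mathcal{M}\,\mathbf{v}$, and committing to the choice witnessed by $\mathcal{M}$ propagating through $\mathbf{v}$. Every $\sigma$-consistent play then projects to a marked path of $\mathcal{T}$: local consistency prevents Eloise from losing at a literal, and global consistency blocks infinite (\textbf{reg})-regeneration of a $\mu$-variable along the projection, so the parity condition is satisfied and $\sigma$ is winning, whence $M,s_0\models\varphi(q)$. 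The main obstacle is aligning the tableau's discipline (non-(\textbf{mod}) rules applied exhaustively before (\textbf{mod})) with the sequencing of game positions and, more delicately, matching the global consistency clause to the parity condition via a trace-theoretic argument that pins each (\textbf{reg}) firing along an infinite marked path to a definite bound variable and identifies its outermost regenerated $\eta$; for that reduction I would appeal directly to the machinery developed in \citep{Janin1996automataformu-calculus} rather than reproduce it.
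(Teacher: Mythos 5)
The paper does not prove this proposition at all: it is imported verbatim from \citep{Janin1996automataformu-calculus} and used as a black box, so there is no in-paper argument to compare yours against. Your game-semantic route is the standard way to establish the result, and both directions of your translation are the right moves in outline: a positional winning strategy for Eloise induces a marking by following her choices at \textbf{(or)} and \textbf{(mod)} nodes, and a consistent marking induces a strategy by propagating $\mathcal{M}$ through the tableau.

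There are, however, two genuine soft spots. First, as you concede in your last sentence, the one delicate step --- showing that the \textbf{(global consistency)} clause is equivalent to the parity winning condition --- is exactly the step you delegate back to \citep{Janin1996automataformu-calculus}, so what you have is a proof outline rather than a proof; since the paper itself only cites the result this is arguably the same level of rigour, but it is not self-contained. Second, your stated justification for that step (``an infinite marked path with cofinal \textbf{(reg)} firings would unfold a $\mu$-subformula forever'') is too coarse: the rule \textbf{(reg)} fires for $\nu$-variables as well, and an infinite marked path along which only a $\nu$-variable regenerates cofinally is perfectly compatible with Eloise winning (consider $\nu r.\,\Box_b r$ on a model with an infinite $b$-path). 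The correct condition must identify, on each infinite marked trace, the outermost variable regenerated infinitely often and require it to be a $\nu$-variable; the paper's literal \textbf{(global consistency)} clause does not make this $\mu$/$\nu$ distinction either, so your forward direction (satisfaction implies existence of a consistent marking) cannot be closed against that definition as written without first repairing or reinterpreting it. That is the one concrete gap; the rest of your outline is the standard argument and would go through once the trace-theoretic step is supplied.
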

\begin{proposition}\label{pro: a property of tableau}
Let $\eta q. \varphi(q) \in  \text{\textbf{\textit{df}}}$ and $\mathcal{T}=\langle \mathbf{T}, \mathbf{L}^{\mathbf{T}}\rangle $ be a tableau for $\varphi(q)$. Then
\[ \bigwedge \mathbf{L}^{\mathbf{T}}(\mathbf{w}) \in \text{\textit{\textbf{df}}}  \quad\text{ for each }\mathbf{w}\in \mathbf{W}^{\mathbf{T}} .\]
\end{proposition}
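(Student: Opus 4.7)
The plan is to proceed by induction on the depth of the node $\mathbf{w}$ in $\mathbf{T}$, verifying that each tableau rule preserves the invariant $\bigwedge \mathbf{L}^{\mathbf{T}}(\cdot)\in\text{\textbf{\textit{df}}}$. The base case is $\mathbf{w}=\mathbf{v}_0$, where $\bigwedge \mathbf{L}^{\mathbf{T}}(\mathbf{v}_0)=\varphi(q)$; the hypothesis $\eta q.\varphi(q)\in\text{\textbf{\textit{df}}}$ combined with clause \textbf{(iv)} of Definition~\ref{def:df} (read with the customary implicit requirement that the body of a fixed-point formula also lie in $\text{\textbf{\textit{df}}}$) then gives $\varphi(q)\in\text{\textbf{\textit{df}}}$.

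For the inductive step the approach is a case-split on the rule producing $\mathbf{w}$ from its parent $\mathbf{v}$. Rule \textbf{(and)} is immediate, since $(\alpha\wedge\beta)\wedge\bigwedge\Gamma$ and $\alpha\wedge\beta\wedge\bigwedge\Gamma$ are the same formula modulo associativity. For $(\mu)$ and $(\nu)$, having $\eta r.\alpha(r)$ as a top-level conjunct of $\bigwedge \mathbf{L}^{\mathbf{T}}(\mathbf{v})$ is incompatible with clauses \textbf{(i)}--\textbf{(iii)}, which forces $\Gamma=\emptyset$ and, via clause \textbf{(iv)}, yields $\alpha(r)\in\text{\textbf{\textit{df}}}$ for the child. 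For \textbf{(or)}, the decomposed disjunction $\alpha\vee\beta$ is either the whole formula (clause \textbf{(ii)}, immediately giving $\alpha,\beta\in\text{\textbf{\textit{df}}}$) or it is embedded inside the propositional part of a clause \textbf{(iii)} formula (so $\alpha,\beta\in\mathcal{L}_p$); in both sub-cases each child's conjunction remains in $\text{\textbf{\textit{df}}}$. For \textbf{(mod)}, the clause \textbf{(iii)} structure of $\bigwedge \mathbf{L}^{\mathbf{T}}(\mathbf{v})$ permits at most one $\nabla_b$ per agent $b$, so the auxiliary set $\{\bigvee \Upsilon:\nabla_b\Upsilon\in\Gamma,\Upsilon\neq\Psi\}$ is empty and each child is a singleton $\{\psi\}$ with $\psi\in\Phi_b\subseteq\text{\textbf{\textit{df}}}$.

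The main obstacle I expect is the \textbf{(reg)} case, where $\{r\}\cup\Gamma$ is replaced by $\{\alpha(r)\}\cup\Gamma$ for some $\eta r.\alpha(r)$ that is a subformula of $\varphi(q)$. Since $r$ is a literal it sits comfortably inside the propositional part of clause \textbf{(iii)}, so $\Gamma$ need not be empty, and one must justify $\alpha(r)\wedge\bigwedge\Gamma\in\text{\textbf{\textit{df}}}$ from $\alpha(r)\in\text{\textbf{\textit{df}}}$ together with the clause \textbf{(iv)} restriction that $r$ does not occur in any context of the form $r\wedge\gamma$ inside $\alpha(r)$. The plan is to merge $\alpha(r)$ into the clause \textbf{(iii)} shape of $\bigwedge\Gamma$ -- fusing the propositional parts into a single $\alpha'\in\mathcal{L}_p$ and taking the union of the $\nabla_b$-conjuncts -- using the syntactic restriction on $r$ to rule out the forbidden nesting patterns, and reading $\text{\textbf{\textit{df}}}$-membership modulo the standard associativity, commutativity, and idempotence of $\wedge$ so that any duplicated $\nabla_b$-conjuncts get absorbed.
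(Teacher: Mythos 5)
Your induction set-up and your treatment of the \textbf{(and)}, \textbf{(or)}, $(\mu)$, $(\nu)$ and \textbf{(mod)} cases coincide with the paper's proof: in each of these you extract exactly the same structural information from Definition~\ref{def:df} (the two disjuncts of a decomposed $\vee$ are either both in \textbf{\textit{df}} or both propositional; a top-level fixed-point conjunct forces $\Gamma=\emptyset$; uniqueness of the $\nabla_b$-conjunct per agent makes every child of a modal node a singleton).

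The divergence, and the gap, is in the \textbf{(reg)} case. You assert that $\Gamma$ need not be empty because $r$ ``sits comfortably inside the propositional part of clause \textbf{(iii)}'', and you then propose to merge $\alpha(r)$ into the clause-\textbf{(iii)} shape of $\bigwedge\Gamma$ by fusing the propositional parts and taking the union of the $\nabla_b$-conjuncts, absorbing duplicates by idempotence. This merge does not go through in general: if $\Gamma$ contributes $\nabla_b\Theta$ and $\alpha(r)$ contributes $\nabla_b\Xi$ with $\Theta\neq\Xi$ for the same agent $b$, the resulting conjunction carries two distinct $b$-covers and is not of the form $\alpha'\wedge\bigwedge_{b\in B}\nabla_b\Phi_b$ demanded by clause \textbf{(iii)}; idempotence absorbs only syntactically identical conjuncts, and nothing in your argument excludes the mismatched case. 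The paper sidesteps this entirely by showing that the situation you are trying to repair cannot arise: $r$ is the bound variable of a subformula $\eta r.\alpha(r)$ of the disjunctive formula, and clause \textbf{(iv)} forbids $r$ from occurring in any context $r\wedge\gamma$; combined with the induction hypothesis $r\wedge\bigwedge\Gamma\in\text{\textbf{\textit{df}}}$, this forces $\Gamma=\emptyset$ at every node where \textbf{(reg)} fires, so the child label is just $\{\alpha(r)\}$ and no fusion is needed. In short, the syntactic restriction you invoke only to ``rule out forbidden nesting patterns'' should be used up front to kill $\Gamma$; as written, your \textbf{(reg)} case rests on a merging step that can fail.
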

\begin{proof}
Let $\mathbf{w}\in \mathbf{W}^{\mathbf{T}}$. Proceed by induction on the height of $\mathbf{T}$. If $\mathbf{w}$ is the root of $\mathbf{T}$, then  $\mathbf{L}^{\mathbf{T}}(\mathbf{w})=\{\varphi(q)\}$ and the result holds. Assume that $\mathbf{w}$ is not the root of $\mathbf{T}$ and $\bigwedge \mathbf{L}^{\mathbf{T}}(\mathbf{u}) \in \text{\textit{\textbf{df}}}  $ for any precedent $\mathbf{u}$ of $\mathbf{w}$. In the following, we distinguish  the different cases based on the applied tableau rule to obtain $\mathbf{w}$.

\textbf{Case 1.} $\; \frac{\{\alpha,\beta,\Gamma\}}{\{\alpha\wedge\beta,\Gamma\}}$. Then $\mathbf{L}^{\mathbf{T}}(\mathbf{w})= \{\alpha,\beta,\Gamma\}$, and $\alpha \wedge \beta \wedge\bigwedge \Gamma \in \text{\textit{\textbf{df}}}  $  by the induction hypothesis. Immediately, $\bigwedge \mathbf{L}^{\mathbf{T}}(\mathbf{w}) \in \text{\textit{\textbf{df}}}  $.

\textbf{Case 2.} $\; \frac{\{\alpha,\Gamma\}\quad \{\beta,\Gamma\}}{\{\alpha\vee\beta,\Gamma\}}$. Then either $\mathbf{L}^{\mathbf{T}}(\mathbf{w})= \{\alpha,\Gamma\}$ or $\mathbf{L}^{\mathbf{T}}(\mathbf{w})= \{\beta,\Gamma\}$.  Moreover,  by the induction hypothesis, $(\alpha \vee \beta ) \wedge\bigwedge \Gamma \in \text{\textit{\textbf{df}}}  $. Further, by definition~\ref{def:df}, if $\Gamma = \emptyset$, then the result holds clearly, else it must be that $ \alpha \vee \beta\in \mathcal{L}_p$ and $\bigwedge \Gamma\equiv \gamma  \wedge \bigwedge_{b\in B} \nabla_b \Phi_b \in \text{\textbf{\textit{df}}}$, so $ \alpha,  \beta\in \mathcal{L}_p$, and hence $\bigwedge \mathbf{L}^{\mathbf{T}}(\mathbf{w}) \in \text{\textit{\textbf{df}}}  $.

\textbf{Case 3.} $\; \frac{\{\alpha(r),\Gamma\}}{\{\mu r.\alpha(r),\Gamma\}}$. So  $\mathbf{L}^{\mathbf{T}}(\mathbf{w})= \{\alpha(r),\Gamma\}$. By the induction hypothesis, $(\mu r.\alpha(r) ) \wedge\bigwedge \Gamma \in \text{\textit{\textbf{df}}}  $. Then, by definition~\ref{def:df}, it is not difficult to see $\Gamma=\emptyset$  and $\alpha(r)\in \text{\textit{\textbf{df}}}  $. Thus, this result holds.

\textbf{Case 4.} $\; \frac{\{\alpha(r),\Gamma\}}{\{\nu r.\alpha(r),\Gamma\}}$. Similar to Case 3.

\textbf{Case 5.} $\; \frac{\{\alpha(r),\Gamma\}}{\{r,\Gamma\}}  \;\text{where }\eta r.\alpha(r) \text{ is a subformula of } \varphi(q)$. Then we obtain  $\mathbf{L}^{\mathbf{T}}(\mathbf{w})= \{\alpha(r),\Gamma\}$.  Since $\eta r.\alpha(r) $ is a subformula of $ \varphi(q)$, we have $\eta r.\alpha(r) \in \text{\textit{\textbf{df}}} $ and $\alpha(r) \in \text{\textit{\textbf{df}}} $. Moreover, $r \wedge\bigwedge \Gamma \in \text{\textit{\textbf{df}}}  $ by the induction hypothesis. Due to these, by definition~\ref{def:df}, $\Gamma = \emptyset$ follows. Hence, $\bigwedge \mathbf{L}^{\mathbf{T}}(\mathbf{w}) =\alpha(r) \in \text{\textit{\textbf{df}}}  $.

\textbf{Case 6.} $\; \frac{\{\psi\}\cup\{\bigvee \Upsilon:\,\nabla_b \Upsilon \in \Gamma \text{ and } \Upsilon \neq \Psi\}\; \text{ for each } \nabla_b \Psi \in \Gamma \text{ and } \psi\in \Psi}{\Gamma}$. So we have $\mathbf{L}^{\mathbf{T}}(\mathbf{w})=\{\psi\}\cup\{\bigvee \Upsilon:\,\nabla_b \Upsilon \in \Gamma \text{ and } \Upsilon \neq \Psi\} $ for some $b\in A$, $\nabla_b \Psi \in \Gamma $ and $\psi\in \Psi$.
Since $\bigwedge \Gamma \in \text{\textit{\textbf{df}}}  $  by the induction hypothesis, the $\nabla_b \Psi$ is unique w.r.t. $b$ by definition~\ref{def:df}, and hence $\mathbf{L}^{\mathbf{T}}(\mathbf{w})=\{\psi\}$. It is clear that $\psi\in \text{\textit{\textbf{df}}}  $.
 \end{proof}
\begin{proposition}\label{pro:minimal consistent tableau}
Let $\eta q. \varphi(q) \in  \text{\textbf{\textit{df}}}$.
Then $M,s_0 \models \varphi(q)$ ~if and only if ~there exists a \emph{minimal} \emph{consistent marking} of a tableau for $\varphi(q)$ w.r.t. $(M,s_0)$.
\end{proposition}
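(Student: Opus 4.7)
The $(\Leftarrow)$ direction is immediate from Proposition~\ref{pro:consistent tableau}: any minimal consistent marking is in particular a consistent marking of some tableau for $\varphi(q)$, and hence witnesses $M,s_0 \models \varphi(q)$.

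For the $(\Rightarrow)$ direction, Proposition~\ref{pro:consistent tableau} supplies a tableau $\mathcal{T}$ for $\varphi(q)$ together with a consistent marking $\mathcal{M}$ of $\mathcal{T}$ w.r.t. $(M,s_0)$. The plan is to extract from $\mathcal{M}$ a sub-marking $\mathcal{M}^{\star} \subseteq \mathcal{M}$ that is still consistent and is minimal among consistent markings. The key leverage is Proposition~\ref{pro: a property of tableau}, which guarantees that every label $\mathbf{L}^{\mathbf{T}}(\mathbf{w})$ has its conjunction in $\text{\textbf{\textit{df}}}$. Inspecting Definition~\ref{def:df}(iii), this means that at every modal node $\mathbf{v}$, for each action $b$ at most one $\nabla_b$-formula appears in $\mathbf{L}^{\mathbf{T}}(\mathbf{v})$; hence the $b$-children of $\mathbf{v}$ are canonically indexed by a single set $\Phi_b$, and there is no duplication of modalities to reconcile.

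I would then construct $\mathcal{M}^{\star}$ by a top-down recursion along $\mathcal{T}$, starting from $(s_0, \mathbf{v}_0)$ and closing under the following local rules for each pair $(s, \mathbf{v})$ already placed in $\mathcal{M}^{\star}$: if the rule applied at $\mathbf{v}$ is $(\textbf{and})$, $(\mu)$, $(\nu)$, or $(\textbf{reg})$, add $(s, \mathbf{v}')$ for the unique child $\mathbf{v}'$; if it is $(\textbf{or})$, add exactly one $(s, \mathbf{v}')$ for a single $\mathcal{M}$-marked child; if it is $(\textbf{mod})$, add an inclusion-minimal family of pairs drawn from $\mathcal{M}$ that satisfies both covering conditions (iii)(a) and (iii)(b) of Definition~\ref{def:marking} at $(s, \mathbf{v})$. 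The choices at $(\textbf{or})$- and $(\textbf{mod})$-nodes can be fixed, for instance, by a well-ordering of $S^M$ and $\mathbf{W}^{\mathbf{T}}$.

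The main obstacle will be verifying global consistency of $\mathcal{M}^{\star}$: one must show that no infinite path $\mathbf{v}_0, \mathbf{v}_1, \ldots$ with every $\mathbf{v}_i \in \pi_2(\mathcal{M}^{\star})$ applies the rule $(\textbf{reg})$ to a $\mu$-regenerated variable. Since $\pi_2(\mathcal{M}^{\star}) \subseteq \pi_2(\mathcal{M})$, any such bad path would already witness a failure of global consistency of $\mathcal{M}$, contradicting our starting assumption; local consistency is inherited directly from $\mathcal{M}^{\star} \subseteq \mathcal{M}$. Finally, minimality follows by a case analysis on the pair removed: deleting a pair introduced at an $(\textbf{and})/(\mu)/(\nu)/(\textbf{reg})$-step or at an $(\textbf{or})$-step violates Definition~\ref{def:marking}(ii) at the parent node, while removing a pair introduced at a $(\textbf{mod})$-step destroys the inclusion-minimal covering selected there, violating (iii)(a) or (iii)(b).
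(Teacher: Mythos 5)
Your overall route is genuinely different from the paper's: you try to \emph{construct} a small consistent sub-marking $\mathcal{M}^{\star}\subseteq\mathcal{M}$ top-down and then argue it is minimal, whereas the paper never exhibits a minimal marking explicitly --- it applies Zorn's Lemma to the poset of all consistent markings ordered by inclusion, and the whole technical content is showing that the intersection $\bigcap\Delta$ of a chain $\Delta$ is again a consistent marking. There the \textbf{\textit{df}}-property of the labels (via Proposition~\ref{pro: a property of tableau}) is used to get \emph{finiteness} of the set of $b$-children of a modal node, which forces the chain to stabilize componentwise so that the covering conditions (iii)(a)--(b) survive the intersection; you use the same proposition only to note uniqueness of the $\nabla_b$-formula. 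Your $(\Leftarrow)$ direction, the inheritance of local consistency, and the inheritance of global consistency from $\pi_2(\mathcal{M}^{\star})\subseteq\pi_2(\mathcal{M})$ are all fine.

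The gap is in the final minimality argument. Showing that deleting a single pair from $\mathcal{M}^{\star}$ breaks the covering family $F_{s,\mathbf{v}}$ selected when processing $(s,\mathbf{v})$ does not show that no proper subset of $\mathcal{M}^{\star}$ is a consistent marking. The conditions in Definition~\ref{def:marking} are existential, and a sub-marking $\mathcal{M}'\subsetneq\mathcal{M}^{\star}$ may satisfy (iii)(a)--(b) at $(s,\mathbf{v})$ using witnesses that lie in $\mathcal{M}^{\star}$ but \emph{outside} $F_{s,\mathbf{v}}$: if two pairs $(s_1,\mathbf{v})$ and $(s_2,\mathbf{v})$ both occur in $\mathcal{M}^{\star}$ (the same tableau node can be marked by several states of an arbitrary model $M$) and $s_1,s_2$ share a $b$-successor $t$, the pair $(t,\mathbf{v}_2')$ contributed by $F_{s_2,\mathbf{v}}$ can make a pair $(t,\mathbf{v}_1')\in F_{s_1,\mathbf{v}}$ redundant in the union, so $\mathcal{M}^{\star}\setminus\{(t,\mathbf{v}_1')\}$ may still be a consistent marking. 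Hence your $\mathcal{M}^{\star}$ need not be minimal, and the induction ``every pair of $\mathcal{M}^{\star}$ lies in any sub-marking $\mathcal{M}'$'' breaks exactly at the $(\textbf{mod})$ case. To close the argument you would either have to coordinate the $(\textbf{mod})$-choices globally across all states marking the same node, or fall back on a chain/Zorn argument as the paper does --- in which case the stabilization of the covering sets along an infinite descending chain (the finiteness point above) is the step that still has to be proved and is entirely absent from your proposal.
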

\begin{proof}
The implication from right to left holds clearly by Proposition~\ref{pro:consistent tableau}. Now, we check the converse implication. Let $M,s_0\models \varphi(q)$. By Proposition~\ref{pro:consistent tableau}, there exists a  \emph{consistent marking} of a tableau $\mathcal{T}=\langle \mathbf{T},\mathbf{L}^{\mathbf{T}}\rangle$  for $\varphi(q)$ w.r.t. $(M,s_0)$. Below, we show the existence of a \emph{minimal} consistent marking of $\mathcal{T}$ w.r.t. $(M,s_0)$   by Zorn Lemma.
Put
\[\Pi \triangleq \{ \mathcal{M} \;:\; \mathcal{M} \text{ is a consistent marking of } \mathcal{T}\text{  w.r.t. }(M,s_0) \,\}.\]
Clearly $\Pi\neq \emptyset$. Then $\langle \Pi, \subseteq \rangle$ is a partially ordered set. Let $\Delta$ be a nonempty chain in $\langle \Pi, \subseteq \rangle$. It is enough to show that $\bigcap \Delta$ is a  \emph{consistent marking} of $\mathcal{T}$ w.r.t. $(M,s_0)$ by Definition~\ref{def:marking} and Definition~\ref{def:os rules}.\\

\noindent\textbf{(}$\,$Definition~\ref{def:marking} \textbf{(i)}$\,$\textbf{)} ~Since $ s_0 \mathcal{M} \mathbf{v}_0 $  ($ \mathbf{v}_0$ is the root of $\mathbf{T}$) for each $\mathcal{M} \in \Delta$, we have $\langle s_0, \mathbf{v}_0\rangle\in \bigcap \Delta $.

\noindent \textbf{(}$\,$Definition~\ref{def:marking} \textbf{(ii)}$\,$\textbf{)} ~Let $ \langle s,  \mathbf{v} \rangle\in \bigcap \Delta$.  If a rule other than \textbf{(mod)} and \textbf{(or)} is applied in $\mathbf{v}$, the proof is trivial. If the rule \textbf{(or)} is applied in $\mathbf{v}$, $\mathbf{v}$ has two children $\mathbf{v}_1$ and $\mathbf{v}_2$.
Thus, we need to show that either $ s \mathcal{M} \mathbf{v}_1 $ for each $ \mathcal{M}\in \Delta$, or $ s \mathcal{M} \mathbf{v}_2 $ for each $ \mathcal{M}\in \Delta$. On the contrary, assume that for some $\mathcal{M}_1,\mathcal{M}_2\in \Delta$, $\langle s, \mathbf{v}_1 \rangle \notin \mathcal{M}_1$ and $\langle s, \mathbf{v}_2 \rangle \notin \mathcal{M}_2$. Since  the rule \textbf{(or)} is applied in $\mathbf{v}$, by Definition~\ref{def:marking} \textbf{(ii)}, we get
$\langle s, \mathbf{v}_2 \rangle \in \mathcal{M}_1$ and $\langle s, \mathbf{v}_1 \rangle \in \mathcal{M}_2$. Hence, due to $\langle s, \mathbf{v}_1 \rangle \notin \mathcal{M}_1$ and $\langle s, \mathbf{v}_2 \rangle \notin \mathcal{M}_2$, it follows that $\mathcal{M}_1 \nsubseteq \mathcal{M}_2$ and $\mathcal{M}_2 \nsubseteq \mathcal{M}_1$.
This contradict that $\mathcal{M}_1,\mathcal{M}_2\in \Delta$ and $\Delta$ is a chain.

\noindent \textbf{(}$\,$Definition~\ref{def:marking} \textbf{(iii)}$\,$\textbf{)} ~Assume that $ \langle s,  \mathbf{v} \rangle\in \bigcap \Delta$ and the rule \textbf{(mod)} is applied in $\mathbf{v}$. Let $b\in A$ such that there exists a formula of the form $\nabla_b \Psi$ in $\mathbf{L}^{\mathbf{T}}(\mathbf{v})$.
Since $\varphi(q) \in  \text{\textbf{\textit{df}}}$, by Proposition~\ref{pro: a property of tableau}, the $\nabla_b \Psi$ is unique w.r.t. $b$. Thus,  by the rule \textbf{(mod)}, it is easy to see that $|\mathbf{E}^{\mathbf{T}}_b(\mathbf{v})|= |\Psi|$ where $\mathbf{E}^{\mathbf{T}}_b(\mathbf{v})$ is the set of all the $b$-children of $\mathbf{v}$. As $\Psi$ is finite, $\mathbf{E}^{\mathbf{T}}_b(\mathbf{v})$ is also finite.
For each $\mathcal{M} \in \Delta$ and $t \in R^M_b(s)$, put
\[W^{\mathcal{M}}_t \triangleq \{\mathbf{w} \in \mathbf{W}^{\mathbf{T}} :\; t \mathcal{M} \mathbf{w}\} \;\text{ and }\; W^{\mathcal{M}} \triangleq \{W^{\mathcal{M}}_t\}_{t \in R^M_b(s)}.\]
\noindent Obviously, since $\Delta$ is a chain, $\{W^{\mathcal{M}}\}_{\mathcal{M} \in \Delta}$ is also a chain, where comparing $W^{\mathcal{M}}$ componentwise, that is, $W^{\mathcal{M}} \subseteq W^{\mathcal{M}'}\,$ iff $\,W_t^{\mathcal{M}} \subseteq W_t^{\mathcal{M}'}$ for each $t \in R^M_b(s)$. Further, due to the finiteness of $\mathbf{E}^{\mathbf{T}}_b(\mathbf{v})$, the chain $\{W^{\mathcal{M}}\}_{\mathcal{M} \in \Delta}$ is stable at some $\mathcal{M}_0 \in \Delta$, namely, $W^{\mathcal{M}} = W^{\mathcal{M}_0}$ for each $\mathcal{M} \in \Delta$ such that $\mathcal{M} \subseteq \mathcal{M}_0$. So, for each $t \in R^M_b(s)$ and $\mathbf{w}  \in \mathbf{E}^{\mathbf{T}}_b(\mathbf{v})$, we have that
\begin{equation}
t (\bigcap \Delta) \mathbf{w} \;\text{ iff } \; t \mathcal{M}_0 \mathbf{w} .\; \label{eq:mu3-1-3}
 \end{equation}

Since $ \langle s,  \mathbf{v} \rangle\in \bigcap \Delta$ and $\mathcal{M}_0 \in \Delta$, we get $  s \mathcal{M}_0 \mathbf{v} $. Thus, on the one hand, by Definition~\ref{def:marking} \textbf{(iii)}-(a), for each  $b$-child $\mathbf{v}'$ of $\mathbf{v}$, there exists $s'\in R^M_b(s)$ such that $  s' \mathcal{M}_0  \mathbf{v}'$. Further, $\langle s',\mathbf{v}' \rangle \in \bigcap \Delta$ due to (\ref{eq:mu3-1-3}).
Therefore $\bigcap \Delta$ satisfies Definition~\ref{def:marking} \textbf{(iii)}-(a).
On the other hand, by Definition~\ref{def:marking} \textbf{(iii)}-(b), for each $s' \in R^M_b(s)$, there is a  $b$-child $\mathbf{v}'$ of $\mathbf{v}$ such that $  s' \mathcal{M}_0  \mathbf{v}'$. Then  $\langle s',\mathbf{v}' \rangle \in \bigcap \Delta$ due to (\ref{eq:mu3-1-3}).
Hence $\bigcap \Delta$ satisfies Definition~\ref{def:marking} \textbf{(iii)}-(b).


\noindent \textbf{($\,$consistency$\,$)} ~Let $\mathbf{v}$ be a modal node and $ \langle s,  \mathbf{v} \rangle\in \bigcap \Delta$. Assume $ \mathcal{M}\in \Delta$.  So $ s\mathcal{M}\mathbf{v}$. By \textbf{(local consistency)} of $\mathcal{M}$, we have $M, s\models \theta$
for each literal $\theta$ occurring in $\mathbf{L}^{\mathbf{T}}(\mathbf{v})$. Consequently, $\bigcap \Delta$ is local consistent.
Let $\mathbf{v}_0,\mathbf{v}_1,\cdots$ is a path of $\mathbf{T}$ with $ \mathbf{v}_i\in \pi_2(\bigcap \Delta)$ ($i\geq 0$). So $ \mathbf{v}_i\in \pi_2(\mathcal{M})$  for each $i\geq 0$. Further,
 by \textbf{(global consistency)} of $\mathcal{M}$, the rule \textbf{(reg)} is not applied in $ \mathbf{v}_i$ for each $i\geq 0$. Hence $\bigcap \Delta$ is global consistent.
 \end{proof}
 \begin{notation}\label{not:notation1}
 Given a tree-like model $M$ and $T \subseteq S^M$, the model $M \Uparrow T$ is obtained from $M$ by removing all the descendants of the states in $T$, formally,
\[\begin{array}{lll}
S^{M\Uparrow T} &\triangleq &S^M - R_M^+(T)\\
R_b^{M\Uparrow T} & \triangleq & R_b^M \cap (S^{M\Uparrow T } \times S^{M\Uparrow T}) \;\;\;\text{ for each } b \in A\\
V^{M\Uparrow T}(r) & \triangleq &V^M (r)\cap S^{M\Uparrow T} \;\;\;\text{ for each } r \in Atom.
\end{array}\]
\end{notation}
\begin{notation}\label{not:notation1}
 Given two models $M,N$ and $T \subseteq S^M$, we write $M =^T N$, ~if
$S^M=S^N$, $R^M=R^N$ and $\mathbf{V}^M(u)= \mathbf{V}^N(u)$ for any $u\in S^M-T$.
 \end{notation}
\begin{proposition}\label{pro:soundness base}
 Let $\eta q. \varphi(q) \in  \text{\textbf{\textit{df}}}$. If $M,s \models \varphi(q)$, then there exists a \emph{tree-like} model $N$ with the root $t$ such that

\noindent   $\;\mathbf{(i) }\; \,M,s\, \underline{\leftrightarrow}^{q} \,N,t \models \varphi(q)$

\noindent   $\mathbf{(ii)} \text{ } N',t \models \varphi(q)\,$ for all $N'$ such that $ N'\Uparrow (V^{N'}(q)-\{t\})=^T  N \Uparrow (V^{N}(q)-\{t\})$

$\text{ }\quad\,$ where $T\triangleq V^{N \Uparrow (V^{N}(q)-\{t\}) }(q)-\{t\}=V^{N' \Uparrow (V^{N'}(q)-\{t\}) }(q)-\{t\}$.
\end{proposition}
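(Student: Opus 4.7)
The plan is to use the tableau-marking machinery of Propositions~\ref{pro:consistent tableau}--\ref{pro:minimal consistent tableau} together with the df-label invariant of Proposition~\ref{pro: a property of tableau}. First, by passing to the $s$-generated submodel and then applying Proposition~\ref{prop:model equivalent to tree-like model }, I may assume that $(M,s)$ is already tree-like with root $s$; Proposition~\ref{prop:bisi invariance} preserves $M,s\models\varphi(q)$ under this reduction. I then take $N=M$ and $t=s$. Condition (i) is immediate: the identity on $S^M$ is a bisimulation, hence a $q$-restricted bisimulation, and $N,t\models\varphi(q)$ was given.

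For condition (ii), fix an arbitrary $N'$ satisfying the hypothesis. Because $\eta q.\varphi(q)\in\text{\textbf{\textit{df}}}$, Proposition~\ref{pro:minimal consistent tableau} furnishes a \emph{minimal} consistent marking $\mathcal{M}$ of some tableau $\mathcal{T}=\langle\mathbf{T},\mathbf{L}^{\mathbf{T}}\rangle$ for $\varphi(q)$ with respect to $(N,t)$; by Proposition~\ref{pro: a property of tableau} every label $\bigwedge\mathbf{L}^{\mathbf{T}}(\mathbf{w})$ is again a df formula. My goal is to assemble a consistent marking $\mathcal{M}'$ of some tableau $\mathcal{T}'$ for $\varphi(q)$ with respect to $(N',t)$; Proposition~\ref{pro:consistent tableau} will then yield $N',t\models\varphi(q)$. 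All pairs $\langle u,\mathbf{v}\rangle\in\mathcal{M}$ with $u\in S^N-R^+_N(T)-T$ transfer verbatim to $\mathcal{M}'$ on $\mathcal{T}'$, because by the hypothesis $N'\Uparrow(V^{N'}(q)-\{t\})=^T N\Uparrow(V^N(q)-\{t\})$ the two models coincide on structure \emph{and} on all propositional valuations above $T$, so both local and global consistency automatically survive there.

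The critical case is marked non-root $q$-states $u\in T$: here $q$ still holds in $N'$, but the non-$q$ valuation at $u$ and the entire descendant structure of $u$ may differ between $N$ and $N'$. The key technical claim, to be proved by structural induction on the rule generating $\mathbf{L}^{\mathbf{T}}(\mathbf{v})$ (paralleling the case analysis of Proposition~\ref{pro: a property of tableau}) and by exploiting the df restriction of Definition~\ref{def:df}(iv) together with the minimality of $\mathcal{M}$, is that for every such $\langle u,\mathbf{v}\rangle$ the only effective literal obligation in $\mathbf{L}^{\mathbf{T}}(\mathbf{v})$ at $u$ is the atom $q$ itself, and the residual modal obligations can be discharged via an $\eta$-unfolding using the rule (\textbf{reg}) whose marking is independent of what lies below $u$ in the model. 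Since $q$ holds at $u$ in $N'$, these obligations remain satisfiable there, and the branches of $\mathcal{T}'$ below such $\mathbf{v}$ can be filled in accordingly without disrupting global consistency. The principal obstacle is establishing this italicized claim: the df condition ``$q$ occurs only positively in $\alpha$ and not in the context $q\wedge\gamma$'' is exactly what prevents $q$ from being conjoined with negations of other atoms, thereby forbidding any minimal marking from imposing non-$q$ literal constraints or descendant-dependent constraints at states in $T$; without this condition the transfer of $\mathcal{M}$ to $\mathcal{M}'$ would fail. Once the claim is proved, verifying the clauses of Definitions~\ref{def:marking} and~\ref{def:os rules} for $\mathcal{M}'$ is a routine adaptation of the inductive arguments already used in Propositions~\ref{pro: a property of tableau} and~\ref{pro:minimal consistent tableau}.
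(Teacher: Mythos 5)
There is a genuine gap, and it is located exactly where you place the weight of the argument: the choice $N=M$, $t=s$. The set $T$ in condition \textbf{(ii)} is determined by the \emph{valuation} $V^{N}(q)$, not by the marking, so with $N=M$ it contains every non-root state of $M$ satisfying $q$ --- including states at which the (minimal) marking imposes obligations that have nothing to do with $q$. Concretely, take $\varphi(q)=\top\wedge\nabla_b\{p\}$ with $p\neq q$; then $\nu q.\varphi(q)\in\text{\textbf{\textit{df}}}$. Let $M$ consist of a root $s$ with one $b$-successor $u$, where $u\in V^M(p)\cap V^M(q)$. Then $M,s\models\varphi(q)$ and, with your choice, $T=\{u\}$; but the model $N'$ obtained by deleting $p$ from $\mathbf{V}^{N}(u)$ satisfies the hypothesis of \textbf{(ii)} (it differs from $N$ only at a state of $T$) and falsifies $\nabla_b\{p\}$ at $s$. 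Your key claim --- that at every marked pair $\langle u,\mathbf{v}\rangle$ with $u\in T$ the only effective literal obligation is $q$ --- is therefore false for $N=M$: here $u\in T$ is marked with a node whose label is $\{p\}$. The \textbf{\textit{df}} condition on $q$ constrains only where $q$ may occur in $\varphi$; it says nothing about which states of the model happen to satisfy $q$. This is precisely why the statement asserts only $M,s\,\underline{\leftrightarrow}^{q}\,N,t$ rather than $N=M$: the paper \emph{redefines} $V^{N}(q)$ to be (essentially) the set of states that the minimal consistent marking actually forces to be $q$-states, namely $\{u:\exists\mathbf{u}\,(u\mathcal{M}\mathbf{u}\text{ and }q\in\mathbf{L}^{\mathbf{T}}(\mathbf{u}))\}\cup(V^M(q)\cap\{s\})$. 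After that shrinking, your italicized claim becomes the paper's Claim 1 ($q\in\mathbf{L}^{\mathbf{T}}(\mathbf{u})$ implies $\mathbf{L}^{\mathbf{T}}(\mathbf{u})=\{q\}$, so the node is a leaf), and the transfer of $\mathcal{M}$ to $N'$ goes through.

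A second, smaller gap: even after shrinking $V^N(q)$, minimality of $\mathcal{M}$ alone does not guarantee that a state $u\in T$ is marked \emph{only} by the $q$-leaf. The rule (\textbf{mod}) can force a single successor to be marked by two incomparable children (one per element of $\Psi$), and minimality cannot remove either. The paper eliminates this by first padding $M$ with enough disjoint copies of generated submodels (Propositions~\ref{prop:copy bisi} and~\ref{prop:bisi invariance2}) so that the witness in Definition~\ref{def:marking}~\textbf{(iii)}-(b) is unique; this is what makes Claim~2 (no descendant of a $q$-marked state lies in $\pi_1(\mathcal{M})$) provable. Your proposal appeals only to minimality and would need this additional preprocessing step to close the argument.
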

\begin{proof}
The assertion \textbf{(i)} and its proof have been given in the proof of~\citep[Theorem 40]{Bozzeli2014refinementmodallogic}. Since this proposition is important for our work, we provide its complete proof below.

Let $M,s \models \varphi(q)$. By Proposition~\ref{prop:model equivalent to tree-like model } and Proposition~\ref{prop:bisi invariance}, we may w.l.o.g. suppose that $M$ is a tree-like model with the root $s$. Moreover, by Proposition~\ref{prop:bisi invariance2} and Proposition~\ref{prop:bisi invariance}, we assume that there are enough many copies of the generated submodel of every state in $M$.
That is, in a marking of a tableau for $\varphi(q)$ w.r.t. $(M,s)$,
\begin{equation}
\text{for Definition~\ref{def:marking} \textbf{(iii)}-(b), the $\mathbf{v}'$ is \emph{unique}. }\; \label{eq:mu-1-1}
 \end{equation}

Due to $M,s \models \varphi(q)$ and $\eta q.\varphi \in \text{\textbf{\textit{df}}}$, by Proposition~\ref{pro:minimal consistent tableau}, there exists a  minimal consistent marking $\mathcal{M}$ of a tableau $\mathcal{T}=\langle \mathbf{T},\mathbf{L}^{\mathbf{T}}\rangle$  for $\varphi(q)$ w.r.t. $(M,s)$.
 Let the model $N$ be obtained from $M$ by setting
 \[V^N(q) \triangleq  \{u\in S^M:\; \exists \mathbf{u}\in \mathbf{W}^{\mathbf{T}} ( u\mathcal{M} \mathbf{u} \text{ and }q\in \mathbf{L}^{\mathbf{T}}(\mathbf{u}))\,\}
  \cup \,(V^M(q)\cap \{s\}).\]
Clearly, $M,s\, \underline{\leftrightarrow}^{ q } \,N,s$.
To complete the proof, one crucial claim is needed.  \\

 \textbf{Claim 1} $\text{ } q\in \mathbf{L}^{\mathbf{T}}(\mathbf{u})\;$ implies $\;\mathbf{L}^{\mathbf{T}}(\mathbf{u})=\{q\} $.\\

Let $q\in \mathbf{L}^{\mathbf{T}}(\mathbf{u})$. On the contrary, assume $|\mathbf{L}^{\mathbf{T}}(\mathbf{u})|>1 $.
Firstly, due to $\eta q. \varphi(q) \in  \text{\textbf{\textit{df}}}$, by Proposition~\ref{pro: a property of tableau}, we have
\begin{equation}
\bigwedge \mathbf{L}^{\mathbf{T}}(\mathbf{w}) \in \text{\textit{\textbf{df}}}  \quad\text{ for each }\mathbf{w}\in \mathbf{W}^{\mathbf{T}} .\; \label{eq:mu3-1-6}
 \end{equation}
 However, since $\eta q. \varphi(q) \in  \text{\textbf{\textit{df}}}$, $q\in \mathbf{L}^{\mathbf{T}}(\mathbf{u})$ and $|\mathbf{L}^{\mathbf{T}}(\mathbf{u})|>1 $, by Definition~\ref{def:df}, it is easy to see that the formula $\bigwedge \mathbf{L}^{\mathbf{T}}(\mathbf{u})\notin  \text{\textbf{\textit{df}}}$. A contradiction arises.\\

We now return to the proof of this proposition.
Firstly, we have the result:$\,$\footnote{This result will be needed in the proof of Proposition~\ref{pro:soundness base2}. Its proof is as follows. Let $u\in V^N(q)$. If $u=s$, $u\in V^M(q)$ by the definition of $N$. Assume $u\neq s$. Then there exists $\mathbf{v}\in \mathbf{W}^{\mathbf{T}}$ such that $ u\mathcal{M} \mathbf{v} $ and $q\in \mathbf{L}^{\mathbf{T}}(\mathbf{v})$. By Claim 1, $\mathbf{L}^{\mathbf{T}}(\mathbf{v})=\{q\}$ follows. So  $\mathbf{v} $ is a leaf of $\mathbf{T}$, and next a modal node of $\mathcal{T}$. Thus, due to  $ u\mathcal{M} \mathbf{v} $ and $q\in \mathbf{L}^{\mathbf{T}}(\mathbf{v})$, by \textbf{(local consistency)} of $\mathcal{M}$ w.r.t. $(M,s)$, we get $u\in V^M(q)$, as desired.}
\begin{equation}
V^N(q) \subseteq V^M(q).\; \label{eq:mu3-1-8}
 \end{equation}

In the following, we intend to show $N,s\models \varphi(q)$. By Proposition~\ref{pro:consistent tableau}, it is enough to check that $\mathcal{M}$ is still a consistent marking of $\mathcal{T}$ w.r.t. $(N,s)$.

\noindent \textbf{($\,$marking$\,$)} ~As $\langle S^M,R^M\rangle=\langle S^N,R^N\rangle$, since $\mathcal{M}$ is a marking of $\mathcal{T}$ w.r.t. $(M,s)$, we have $\mathcal{M} \subseteq S^N  \times \mathbf{W}^{\mathbf{T}}$ and  the conditions \textbf{(i)}-\textbf{(iii)} in Definition~\ref{def:marking} hold also w.r.t. $\mathcal{T}$ and $(N,s)$. Then $\mathcal{M}$ is a marking of $\mathcal{T}$ w.r.t. $(N,s)$.

\noindent \textbf{($\,$local consistency$\,$)} ~Let $\mathbf{v}\in \mathbf{W}^{\mathbf{T}}$ be a modal node and $ t \mathcal{M}  \mathbf{v} $. Assume that $\theta$ is a literal occurring in $\mathbf{L}^{\mathbf{T}}(\mathbf{v})$. By \textbf{(local consistency)} of $\mathcal{M}$  w.r.t. $(M,s)$, we have $M, t\models \theta$. Consider three cases by $\theta$ below.

If $\theta \neq q, \neg q$, it follows from $M, t\models \theta$ that $N, t\models \theta$ by the definition of $N$.

If $\theta = q $, then   by the definition of $N$, due to $ t \mathcal{M}  \mathbf{v} $ and $q\in\mathbf{L}^{\mathbf{T}}(\mathbf{v})$, we get $t\in  V^N(q)$, i.e., $N, t\models  q$.

If $\theta = \neg q$, then $q\notin \mathbf{L}^{\mathbf{T}}(\mathbf{w})$ for any $\mathbf{w} \in  \mathcal{M}(t)$. (Otherwise, $q\in \mathbf{L}^{\mathbf{T}}(\mathbf{w}')$ for some $\mathbf{w}' \in  \mathcal{M}(t)$. Thus  $\mathbf{L}^{\mathbf{T}}(\mathbf{w}')=\{q\} $ by Claim 1, that is, $\mathbf{w}'$ is a leaf of $\mathbf{T}$ and so a modal node  of $\mathcal{T}$. Further, due to $\mathbf{w}' \in  \mathcal{M}(t)$ and $q\in \mathbf{L}^{\mathbf{T}}(\mathbf{w}')$, by \textbf{(local consistency)} of $\mathcal{M}$  w.r.t. $(M,s)$,  we obtain $M, t\models  q$. Contradict that $M, t\models \neg q$.) Hence,  by the definition of $N$, we get $t\notin  V^N(q)$, i.e., $N, t\models  \neg q$.

\noindent \textbf{($\,$global consistency$\,$)} ~Let $\mathbf{v}_0,\mathbf{v}_1,\cdots$ be a path of $\mathbf{T}$  where $\mathbf{v}_0$ is the root of $\mathbf{T}$ and $ \mathbf{v}_i\in \pi_2(\mathcal{M} )$ ($i\geq 0$). By \textbf{(global consistency)} of $\mathcal{M}$   w.r.t. $(M,s)$, the rule \textbf{(reg)} is not applied in $ \mathbf{v}_i$ for each $i\geq 0$.


\noindent \textbf{($\,$minimality$\,$)} ~Let $\mathcal{M}'$ be a consistent marking of $\mathcal{T}$ w.r.t. $(N,s)$ and $ \mathcal{M}'\subseteq \mathcal{M} $. So $ \mathcal{M}'\subseteq S^M \times \mathbf{W}^{\mathbf{T}} $. Similar to \textbf{(marking)} and \textbf{(global consistency)}  above, we can show that $\mathcal{M}'$ is also a global consistent marking of $\mathcal{T}$ w.r.t. $(M,s)$. Below, we check \textbf{(local consistency)} of $\mathcal{M}'$  w.r.t. $(M,s)$. Let $\mathbf{v}\in \mathbf{W}^{\mathbf{T}}$ be a modal node and $ t \mathcal{M}'  \mathbf{v} $. So $ t \mathcal{M}  \mathbf{v} $ due to $ \mathcal{M}'\subseteq \mathcal{M} $. For any literal  $\theta\in\mathbf{L}^{\mathbf{T}}(\mathbf{v})$, by \textbf{(local consistency)} of $\mathcal{M}$  w.r.t. $(M,s)$, we have $M, t\models \theta$. Summarizing, $\mathcal{M}'$ is also a consistent marking of $\mathcal{T}$ w.r.t. $(M,s)$. Since $ \mathcal{M}'\subseteq \mathcal{M} $ and $\mathcal{M}$ is  a minimal consistent marking of $\mathcal{T}$ w.r.t. $(M,s)$, it follows that $ \mathcal{M}' =\mathcal{M} $.

In the next step, we will show  that the descendants of  the non-root $q$-states and the assignments of the propositional letters in $Atom-\{q\}$ at these states do not affect the satisfiability of $\varphi(q)$ in $N$ at the state $s$. To complete the proof, another crucial claim is needed.  \\

  \textbf{Claim 2} $\text{ } u \mathcal{M}  \mathbf{u} $ and $q\in \mathbf{L}^{\mathbf{T}}(\mathbf{u})\;$ imply $\;W^{u,\mathbf{u}}=\{\mathbf{u}\} $, where
$W^{u,\mathbf{u}}\triangleq\{\mathbf{w}\in \mathcal{M}(u):\;\text{if }\mathbf{w}\neq \mathbf{u}\text{ then }\mathbf{u} \text{ is not a descendant of }\mathbf{w}\} $.\\

Let $u \mathcal{M}  \mathbf{u} $ and $q\in \mathbf{L}^{\mathbf{T}}(\mathbf{u})$. By Claim 1, $ \mathbf{L}^{\mathbf{T}}(\mathbf{u})=\{q\}$ and so $\mathbf{u} $ is a \emph{leaf} of $\mathbf{T}$.
 Moreover, due to (\ref{eq:mu3-1-6}), for any $b\in A$ and $\mathbf{w}\in \mathbf{W}^{\mathbf{T}}$, it is clear that
\begin{equation}
|\{\nabla_b \Psi:\; \nabla_b \Psi \in  \mathbf{L}^{\mathbf{T}}(\mathbf{w})\, \}|\leq 1.\; \label{eq:mu3-1-4}
 \end{equation}
On the contrary, suppose $|W^{u,\mathbf{u}}|>1$. Since $(M,s)$ is a tree-like model, $\mathcal{M}$ is a marking of the tableau $\mathcal{T}$ w.r.t. $(M,s)$ and (\ref{eq:mu3-1-4}), for some $\mathbf{u}_1\in W^{u,\mathbf{u}}$,
$\mathbf{u},\mathbf{u}_1$ are obtained according to:

either the rule \textbf{(or)} for some  $\mathbf{v},\mathbf{u}',\mathbf{u}'_1\in \mathbf{W}^{\mathbf{T}}$, $v\in S^M$, $\alpha \in \mathbf{L}^{\mathbf{T}}(\mathbf{u}')$ and $\beta \in \mathbf{L}^{\mathbf{T}}(\mathbf{u}'_1)$
such that $ \alpha\vee \beta \in \mathbf{L}^{\mathbf{T}}(\mathbf{v})$, $ v\mathcal{M} \mathbf{v} $, $ v\mathcal{M} \mathbf{u}' $, $ v\mathcal{M} \mathbf{u}'_1 $, $\mathbf{u}',\mathbf{u}'_1$ are  children of $\mathbf{v}$,  $u$ is in the $v$-generated subtree of $M $, and $\mathbf{u}$ ($\mathbf{u}_1$) is in the subtree with the root   $\mathbf{u}'$ ($\mathbf{u}'_1$,resp.) of $\mathbf{T}$,

or the rule \textbf{(mod)} for some $ b\in A$, $\mathbf{v},\mathbf{u}',\mathbf{u}'_1\in \mathbf{W}^{\mathbf{T}}$, $v,v'\in S^M$, $\nabla_b \Psi \in \mathbf{L}^{\mathbf{T}}(\mathbf{v})$ and $\psi,\psi_1\in \Psi$
such that $\psi\in \mathbf{L}^{\mathbf{T}}(\mathbf{u}')$, $ \psi_1\in \mathbf{L}^{\mathbf{T}}(\mathbf{u}'_1) $, $v'\in R^M_b(v)$, $ v\mathcal{M} \mathbf{v} $, $ v'\mathcal{M} \mathbf{u}' $, $ v'\mathcal{M} \mathbf{u}'_1 $, $\mathbf{u}',\mathbf{u}'_1$ are children of $\mathbf{v}$, $u$ is in the $v'$-generated subtree of $M $, and $\mathbf{u}$ ($\mathbf{u}_1$) is in the subtree with the root   $\mathbf{u}'$ ($\mathbf{u}'_1$,resp.) of $\mathbf{T}$ (See Definition~\ref{def:marking} \textbf{(iii)}).

The former one contradicts the minimality of $\mathcal{M}$, and the latter one contradicts (\ref{eq:mu-1-1}).\\

 We now return to the proof of this proposition.
 Let $(N',s)$ be  a pointed model such that
  \[N'\Uparrow (V^{N'}(q)-\{s\})=^T  N \Uparrow (V^{N}(q)-\{s\})\]
 where $T\triangleq V^{N \Uparrow (V^{N}(q)-\{s\}) }(q)-\{s\}=V^{N' \Uparrow (V^{N'}(q)-\{s\}) }(q)-\{s\}$. Then
  \begin{align}
    S^{N'\Uparrow (V^{N'}(q)-\{s\})} &= S^{N \Uparrow (V^{N}(q)-\{s\}) } \; \label{eq:ccmu-1-1}\\
     R^{N'\Uparrow (V^{N'}(q)-\{s\})}  &=R^{N \Uparrow (V^{N}(q)-\{s\}) }. \; \label{eq:ccmu-1-5}
 \end{align}
 Firstly, we get the following assertion:
 \begin{equation}
\mathcal{M}\subseteq S^{N \Uparrow (V^{N}(q)-\{s\})} \times \mathbf{W}^{\mathbf{T}}.\; \label{eq:ccmu-1-3}
 \end{equation}
 (Its proof is as follows. Let $ u \in V^N( q)$ with $u \neq s$. So by the definition of $N$, $u \mathcal{M}  \mathbf{u} $ and $q\in \mathbf{L}^{\mathbf{T}}(\mathbf{u})$ for some  $\mathbf{u}\in \mathbf{W}^{\mathbf{T}}$. Then, by Claim 1, $ \mathbf{L}^{\mathbf{T}}(\mathbf{u})=\{q\}$ and $\mathbf{u} $ is a \emph{leaf} of $\mathbf{T}$, and  by Claim 2, $W^{u,\mathbf{u}}=\{\mathbf{u}\} $.
Thus, by Definition~\ref{def:marking} \textbf{(ii)}-\textbf{(iii)},  the descendants of $u$ is impossibly in $\pi_1(\mathcal{M}) $. )

Further, from (\ref{eq:ccmu-1-3}) and (\ref{eq:ccmu-1-1}), it follows that
  \begin{equation}
\mathcal{M}\subseteq S^{N' \Uparrow (V^{N'}(q)-\{s\})} \times \mathbf{W}^{\mathbf{T}} \;( \subseteq  S^{N' } \times \mathbf{W}^{\mathbf{T}}).\; \label{eq:ccmu-1-4}
 \end{equation}

Below, we intend to check that $\mathcal{M}$ is still a consistent marking of $\mathcal{T}$ w.r.t. $(N',s)$, which will imply  $N',s\models\varphi(q)$ by Proposition~\ref{pro:consistent tableau}.

\noindent \textbf{($\,$marking$\,$)} ~Since (\ref{eq:ccmu-1-5}), (\ref{eq:ccmu-1-4})  and $\mathcal{M}$ is a marking of $\mathcal{T}$ w.r.t. $(N,s)$, the conditions \textbf{(i)}-\textbf{(iii)} in Definition~\ref{def:marking} hold also w.r.t. $\mathcal{T}$ and $(N',s)$. Then $\mathcal{M}$ is also a marking of $\mathcal{T}$ w.r.t. $(N',s)$.

\noindent \textbf{($\,$local consistency$\,$)} ~Let $\mathbf{v}\in \mathbf{W}^{\mathbf{T}}$ be a modal node and $ t \mathcal{M}  \mathbf{v} $. So $t\in S^{N'\Uparrow (V^{N'}(q)-\{s\})}$ and $t\in S^{N \Uparrow (V^{N}(q)-\{s\}) }$. Assume that $\theta$ is a literal occurring in $\mathbf{L}^{\mathbf{T}}(\mathbf{v})$. By \textbf{(local consistency)} of $\mathcal{M}$  w.r.t. $(N,s)$, we have $N, t\models \theta$.

If $\theta = q $, then  $t\in T$ due to $N, t\models \theta$, so $N', t\models  q$ by the definition of $N'$.

If $\theta \neq q$,  then $t\notin V^{N'}(q)$. (Otherwise, we get $t\in T$. By the definition of $N$, $q\in \mathbf{L}^{\mathbf{T}}(\mathbf{w})$ for some $\mathbf{w} \in  \mathcal{M}(t)$. Thus
$W^{t,\mathbf{w}}=\{\mathbf{w}\} $ by Claim 2. Hence, since $ t \mathcal{M}  \mathbf{v} $ and $W^{t,\mathbf{w}}=\{\mathbf{w}\} $, $\mathbf{w}$   is a descendant of $\mathbf{v} $, and so $\mathbf{v}$ is not a leaf of $\mathbf{T}$. Further, since $\mathbf{v}$ is a modal node, it is the tableau rule \textbf{(mod)} to be applied at $\mathbf{v}$. Therefore, by $ t \mathcal{M}  \mathbf{v} $ and Definition~\ref{def:marking} \textbf{(iii)}, $R^+_N(t)\cap \pi_1(\mathcal{M}) \neq \emptyset$. However, due to $t\in T$ and (\ref{eq:ccmu-1-3}), $R^+_N(t)\cap \pi_1(\mathcal{M}) = \emptyset$. Contradict.) ~Thus by the definition of $N'$, due to $N, t\models \theta$, it holds still that $N', t\models \theta$.

\noindent \textbf{($\,$global consistency$\,$)} ~See \textbf{(global consistency)} above.

Finally, we can conclude that the assertion \textbf{(ii)} in this proposition holds.
\end{proof}
The following proposition is used to simplify the construction and verification of a desired CC-refined model in the proofs of Lemma~\ref{lemma:soundness CCRnu2} and Lemma~\ref{lemma:soundness CCRmu2}.
\begin{proposition}\label{pro:soundness base2}
Let $\eta q.\varphi \in \text{\textbf{\textit{df}}}$.  If $M,s \models \Ccropre\varphi(q)$, then there exists a \emph{tree-like} model $N$ with the root $t$ and an \emph{injective} relation $\mathcal{Z}$ from $S^{M}$ to $S^{N}$ such that

\noindent   $(1) \text{ }\mathcal{Z}:M,s \succeq_{(A_1,A_2)}^{q} N,t \models \varphi(q)$

\noindent   $(2) \text{ } \mathcal{Z}^{-1}(V^N(q)) \subseteq V^M(q)$

\noindent    $(3) \text{ }\, t \in V^N(q) \; \text{ iff } \; s \in V^M(q)$

\noindent    $(4) \text{ } N',t \models \varphi(q)$ for all $N'$ such that $ N'\Uparrow (V^{N'}(q)-\{t\})=^T  N \Uparrow (V^{N}(q)-\{t\})$

 where $T\triangleq V^{N \Uparrow (V^{N}(q)-\{t\}) }(q)-\{t\}=V^{N' \Uparrow (V^{N'}(q)-\{t\}) }(q)-\{t\}$.
%
%
\end{proposition}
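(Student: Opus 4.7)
The plan is to unfold the semantics of the CC-refinement operator and then chain together Propositions~\ref{pro:compositionofsingleton2} and~\ref{pro:soundness base}. From $M,s \models \exists_{(A_1,A_2)}\varphi(q)$, first obtain a pointed model $(N_0,t_0)$ with $M,s \succeq_{(A_1,A_2)} N_0,t_0$ and $N_0,t_0 \models \varphi(q)$. Using Propositions~\ref{prop:model equivalent to tree-like model} and~\ref{prop:bisi invariance}, I may assume without loss of generality that both $(M,s)$ and $(N_0,t_0)$ are tree-like with roots $s$ and $t_0$, respectively.

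Next, apply Proposition~\ref{pro:compositionofsingleton2} to produce a tree-like model $(N_1,t_1)$ and an injective relation $\mathcal{Z}_1$ such that $N_0,t_0 \,\underline{\leftrightarrow}\, N_1,t_1$ and $\mathcal{Z}_1 : M,s \succeq_{(A_1,A_2)} N_1,t_1$. By Proposition~\ref{prop:bisi invariance} we then have $N_1,t_1 \models \varphi(q)$. Since $\eta q.\varphi(q) \in \text{\textbf{\textit{df}}}$, Proposition~\ref{pro:soundness base} applied to $(N_1,t_1)$ yields a tree-like model $N$ with root $t$ such that $N_1,t_1 \,\underline{\leftrightarrow}^{q}\, N,t \models \varphi(q)$, together with the invariance clause (ii), which delivers exactly part (4) of the present proposition.

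The essential observation making the composition clean is that in the proof of Proposition~\ref{pro:soundness base} the model $N$ is obtained from $N_1$ only by redefining $V(q)$; in particular $S^N=S^{N_1}$, $R^N=R^{N_1}$, and $V^N(r)=V^{N_1}(r)$ for all $r \neq q$. Hence the identity on $S^{N_1}=S^N$ serves as the witness for the $q$-restricted bisimulation, and taking $\mathcal{Z}=\mathcal{Z}_1$ (viewed as a relation from $S^M$ to $S^N$) gives the required injective $q$-restricted CC-refinement in part (1): the atom clause holds for all $r \neq q$ because $\mathcal{Z}_1$ respects every atom and $V^N(r)=V^{N_1}(r)$ there, and the forth/back clauses are inherited from $\mathcal{Z}_1$ since the transition structure is unchanged.

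For (2), combine $V^N(q) \subseteq V^{N_1}(q)$ (equation~(2.1.8) established inside the proof of Proposition~\ref{pro:soundness base}) with the atom condition of $\mathcal{Z}_1$, which gives $\mathcal{Z}_1^{-1}(V^{N_1}(q)) \subseteq V^M(q)$. For (3), the definition $V^N(q) \supseteq V^{N_1}(q)\cap\{t\}$ together with $V^N(q)\subseteq V^{N_1}(q)$ shows $t\in V^N(q)$ iff $t\in V^{N_1}(q)$, which by the atom clause of $\mathcal{Z}_1$ (applied to $s\mathcal{Z}_1 t$) is equivalent to $s \in V^M(q)$. The main obstacle I foresee is simply verifying carefully that the $q$-restricted bisimulation produced by Proposition~\ref{pro:soundness base} can be taken to be the identity so that the injectivity of $\mathcal{Z}_1$ survives the composition; once this bookkeeping is done, (1)--(4) follow by routine diagram chasing.
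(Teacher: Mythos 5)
Your proposal is correct and follows essentially the same route as the paper's own proof: unfold the semantics of $\Ccropre$, normalize to a tree-like refining model, invoke Proposition~\ref{pro:compositionofsingleton2} to obtain the injective CC-refinement $\mathcal{Z}_1$, and then run the construction of Proposition~\ref{pro:soundness base} on top of it, using the facts that this construction only reassigns $V(q)$ and that $V^N(q)\subseteq V^{N_1}(q)$ to carry (1)--(3) across. The paper states these last transfer steps more tersely ("by the construction of $(N,t)$, it is evident that..."), so your explicit bookkeeping for (2) and (3) via the (\textbf{atoms}) clause of $\mathcal{Z}_1$ is a faithful elaboration rather than a different argument.
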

\begin{proof}
Let $M,s \models  \Ccropre\varphi(q)$.
Then there exists $(N_1,t_1)$ such that
\[M,s \succeq_{(A_1,A_2)} N_1,t_1 \models \varphi(q) .\]
By Proposition~\ref{prop:model equivalent to tree-like model } and Proposition~\ref{prop:bisi invariance}, we may assume that $N_1$ is a tree-like model with the root $t_1$. By Proposition~\ref{pro:compositionofsingleton2}, there exist a tree-like model $N'_1$ with the root $t$ and injective relation $\mathcal{Z}$ from $S^{M}$ to $S^{N'_1}$ such that $N_1,t_1 \underline{\leftrightarrow} N'_1,t$ and $\mathcal{Z}: M,s \succeq_{(A_1,A_2)} N'_1,t$. The condition (\textbf{atoms}) implies that (2) and (3) hold w.r.t. $(N'_1,t)$.

Due to $N_1,t_1 \,\underline{\leftrightarrow}\, N'_1,t$ and $N_1,t_1 \models \varphi(q)$, we get $N'_1,t \models \varphi(q)$ by Proposition~\ref{prop:bisi invariance}. Then,
due to $N'_1,t \models \varphi(q) $,
by the construction in the proof of Proposition~\ref{pro:soundness base}, we get the tree-like model $N$ with the root $t$ such that
\[N'_1,t\, \underline{\leftrightarrow}^{q} \,N,t \models \varphi(q)\]
and (4) holds w.r.t. $(N,t)$.
Moreover, by (\ref{eq:mu3-1-8}), $V^N(q) \subseteq  V^{N'_1}(q) $ holds.
Further, by the construction of $(N,t)$, it is evident that  $\mathcal{Z}:M,s \succeq_{(A_1,A_2)}^{q} N,t$ and (2) and (3) still hold w.r.t. $(N,t)$.
\end{proof}
\begin{proposition}\label{pro:q-bisi CC-fixed point invariance}
  If $M,s \succeq_{(A_1,A_2)} ^q N,t \models \eta q.\varphi$ then $M,s \models \Ccropre\eta q.\varphi$.
\end{proposition}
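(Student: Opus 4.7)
The plan is to modify only $V^N(q)$ so as to upgrade the $q$-restricted refinement $\mathcal{Z}$ into a genuine (unrestricted) $(A_1,A_2)$-refinement, without disturbing the satisfaction of $\eta q.\varphi$ at $t$. The guiding observation is that $q$ is bound in $\eta q.\varphi$, so by the $\mu$-calculus semantics (which for $\mu q.\psi$ replaces the current $V(q)$ by a set $T$ and intersects over all $T \subseteq S$) the satisfaction of $\eta q.\varphi$ at $t$ depends only on $\langle S^N, R^N \rangle$ and on the valuation of atoms other than $q$; in particular, changing $V^N(q)$ cannot affect whether $N,t \models \eta q.\varphi$.

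First I would normalise the data. Using Proposition~\ref{prop:bisi invariance} together with the $P$-restricted extension of Proposition~\ref{pro:compositionofsingleton}(2) noted immediately after its statement, I may replace $(M,s)$ and $(N,t)$ by bisimilar pointed models so that the refinement witness $\mathcal{Z}$ becomes an injective partial function from $S^M$ to $S^N$. Bisimulation invariance guarantees that both the hypothesis $N,t \models \eta q.\varphi$ and the goal $M,s \models \exists_{(A_1,A_2)}\eta q.\varphi$ transfer to these replacements, so it suffices to establish the conclusion there. The injectivity of $\mathcal{Z}$ will be essential below: it means each $v \in \pi_2(\mathcal{Z})$ has a unique $\mathcal{Z}$-preimage, which is what makes the forthcoming construction well defined.

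Now define $N'$ to have the same carrier, accessibility relations, and valuations of all $r \neq q$ as $N$, and set
\[
V^{N'}(q) \triangleq \{v \in \pi_2(\mathcal{Z}) : \exists u\,(u \mathcal{Z} v \text{ and } u \in V^M(q))\} \cup \bigl(V^N(q) \setminus \pi_2(\mathcal{Z})\bigr).
\]
Two things need checking: (a) $\mathcal{Z}: M,s \succeq_{(A_1,A_2)} N',t$ as an unrestricted CC-refinement, and (b) $N',t \models \eta q.\varphi$. For (a), the forth and back clauses transfer verbatim because $R^{N'}=R^N$; the atoms clause for $r \neq q$ is inherited from the $q$-restricted refinement $\mathcal{Z}: M,s \succeq_{(A_1,A_2)}^q N,t$; and the atoms clause for $q$ itself holds by the very construction of $V^{N'}(q)$, using injectivity to rule out two $\mathcal{Z}$-preimages of the same $v$ disagreeing on $q$. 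For (b), the opening observation gives $N',t \models \eta q.\varphi$ at once from $N,t \models \eta q.\varphi$. Combining (a) and (b) exhibits $(N',t)$ as a witness for $M,s \models \exists_{(A_1,A_2)}\eta q.\varphi$.

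The main obstacle I anticipate is simply securing the injectivity of $\mathcal{Z}$ in the $q$-restricted setting; once that is in hand, the remaining steps are essentially definitional, and, pleasingly, one need not invoke the tableau apparatus developed in Propositions~\ref{pro:soundness base} and \ref{pro:soundness base2}.
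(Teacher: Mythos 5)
Your proposal is correct and follows essentially the same route as the paper: normalize via the $P$-restricted form of Proposition~\ref{pro:compositionofsingleton}(2) to make $\mathcal{Z}$ injective, overwrite $V^N(q)$ on $\pi_2(\mathcal{Z})$ by the $\mathcal{Z}$-image of $V^M(q)$ (the paper takes $N_1\triangleq (N')^{[q\mapsto\mathcal{Z}(V^{M'}(q))]}$; your extra term $V^N(q)\setminus\pi_2(\mathcal{Z})$ is harmless), and use the fact that $q$ is bound so that $\eta q.\varphi$'s truth is insensitive to $V(q)$. The only cosmetic difference is that the paper phrases that last fact as $(N')^{[q\mapsto T]}=N_1^{[q\mapsto T]}$ for all $T$ and unwinds the fixed-point semantics explicitly.
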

\begin{proof}
 $(\eta =\mu)\;$ Let $ M,s \succeq_{(A_1,A_2)}^q N,t$ and $N,t \models \mu q.\varphi$. By Proposition~\ref{pro:compositionofsingleton}, there exist $(M',s')$, $(N',t')$ and injective partial function $\mathcal{Z}$ from $S^{M'}$ to $S^{N'}$ such that $M,s \,\underline{\leftrightarrow}\, M',s'$, $N,t \,\underline{\leftrightarrow}\, N',t'$ and $\mathcal{Z}: M',s' \succeq_{(A_1,A_2)}^q N',t'$.
By Proposition~\ref{prop:bisi invariance}, we have $N',t' \models \mu q.\varphi$ due to $N,t \,\underline{\leftrightarrow}\, N',t'$ and $N,t \models \mu q.\varphi$.
Set
\[N_1\triangleq (N')^{[q \mapsto \mathcal{Z}(V^{M'}(q))]}.\]
 Obviously, $V^{N_1}(q)=\mathcal{Z}(V^{M'}(q)) $ and $(N')^{[q \mapsto T]}=N_1^{[q \mapsto T]}$ for any $T \subseteq S^{N'}$.
  Further, as $N',t' \models \mu q.\varphi$, by the semantics of $\mu q. \varphi$,
  \[t' \in \bigcap \{T \subseteq S^{N'}:\; \lVert \varphi \rVert^{(N')^{[q \mapsto T]}} \subseteq T\}=\bigcap \{T \subseteq S^{N_1}:\; \lVert \varphi \rVert^{N_1^{[q \mapsto T]}} \subseteq T\}.\]
  That is, $N_1,t' \models \mu q.\varphi$.

   Below, we show $\mathcal{Z}: M',s' \succeq_{(A_1,A_2)} N_1,t'$. Since $\mathcal{Z}: M',s' \succeq_{(A_1,A_2)}^q N',t'$, by the definition of $N_1$, the conditions \textbf{($\{q\}$-atoms)}, \textbf{(forth)} and \textbf{(back)} hold clearly. Next, we check \textbf{(atoms)}. Assume $u \mathcal{Z} v$. Let $u\in V^{M'}(q)$. So we get $v\in V^{N_1}(q)$ due to $V^{N_1}(q)=\mathcal{Z}(V^{M'}(q)) $. Let $v\in V^{N_1}(q)$. Then due to $V^{N_1}(q)=\mathcal{Z}(V^{M'}(q)) $, we have $u' \mathcal{Z} v$ for some $u'\in V^{M'}(q)$. Since $\mathcal{Z}$ is an injective partial function from $S^{M'}$ to $S^{N'}$ ($=S^{N_1}$), $u = u'$ immediately follows from $u \mathcal{Z} v$ and $u' \mathcal{Z} v$. Hence $u\in V^{M'}(q)$ due to $u'\in V^{M'}(q)$.

   Finally, $M,s \models \Ccropre\mu q.\varphi$ holds due to $M,s \,\underline{\leftrightarrow}\,M',s' \succeq_{(A_1,A_2)} N_1,t'$ and $N_1,t' \models \mu q.\varphi$, as desired.

 \noindent  $(\eta =\nu)\;$ Similarly.
\end{proof}
By Proposition~\ref{pro:q-bisi CC-fixed point invariance}, to show that $M,s \models \Ccropre\eta q.\varphi$, it is sufficient to find a model which $q$-\emph{restricted}  $(A_1,A_2)$-\emph{refines} $(M,s)$ and satisfies $\eta q.\varphi$.

To simplify the proofs of Lemma~\ref{lemma:soundness CCRnu2} and Lemma~\ref{lemma:soundness CCRmu2}, we give the following model construction in a uniform fashion.
\begin{definition}\label{def:Nmodel}
Let $(M,s)$ with $ W\subseteq S^M $ be a tree-like model with the root $s$ and $(N_u,v_u)$ ($u\in W$) be a tree-like model with the root $v_u$ such that $\{M\} \cup \{N_u\}_{u \in W}$ are pairwise disjoint. The model $N$ is obtained from $(M\Uparrow W) \uplus \biguplus_{u \in W} N_u $ and defined by,~for each $b\in A$ and $r\in Atom$, \\

\noindent $S^{N} \triangleq S^{M\Uparrow W}\cup \displaystyle \bigcup_{u\in W} (S^{N_{u}}  - \{ v_u \})$

\noindent $R^{N}_b \triangleq  (S^{N})^2 \,\cap \,\left(R^{M\Uparrow W}_b\cup \displaystyle \bigcup_{u\in W} R_b^{N_{u}}\right ) \,\cup\; \left\{\langle u,w\rangle:\,  u\in W \text{ and } v_u\; R_b^{N_u} w\right\}$

\noindent $V^{N}(r) \triangleq \left (S^{N}-W\right) \,\cap \, \left(V^{M\Uparrow W}(r) \cup \displaystyle \bigcup_{u\in W} V^{N_{u}}(r) \right)\cup\, \left\{u\in W:\,  v_u \in V^{N_u}(r)\right\}.$\\.

\noindent The  model $N$ is denoted by $(M,s) \oplus_{W} \{(N_u,v_u) \}_{u \in W}$.  An illustration for this construction is given in Figure~\ref{figure: muccr2} with the hollow dots and dashed arrows removed.
\begin{figure}[h]
\setlength{\abovecaptionskip}{0.4cm}
\setlength{\belowcaptionskip}{-0.1cm}
\centering
\centerline{\includegraphics[scale=0.8]{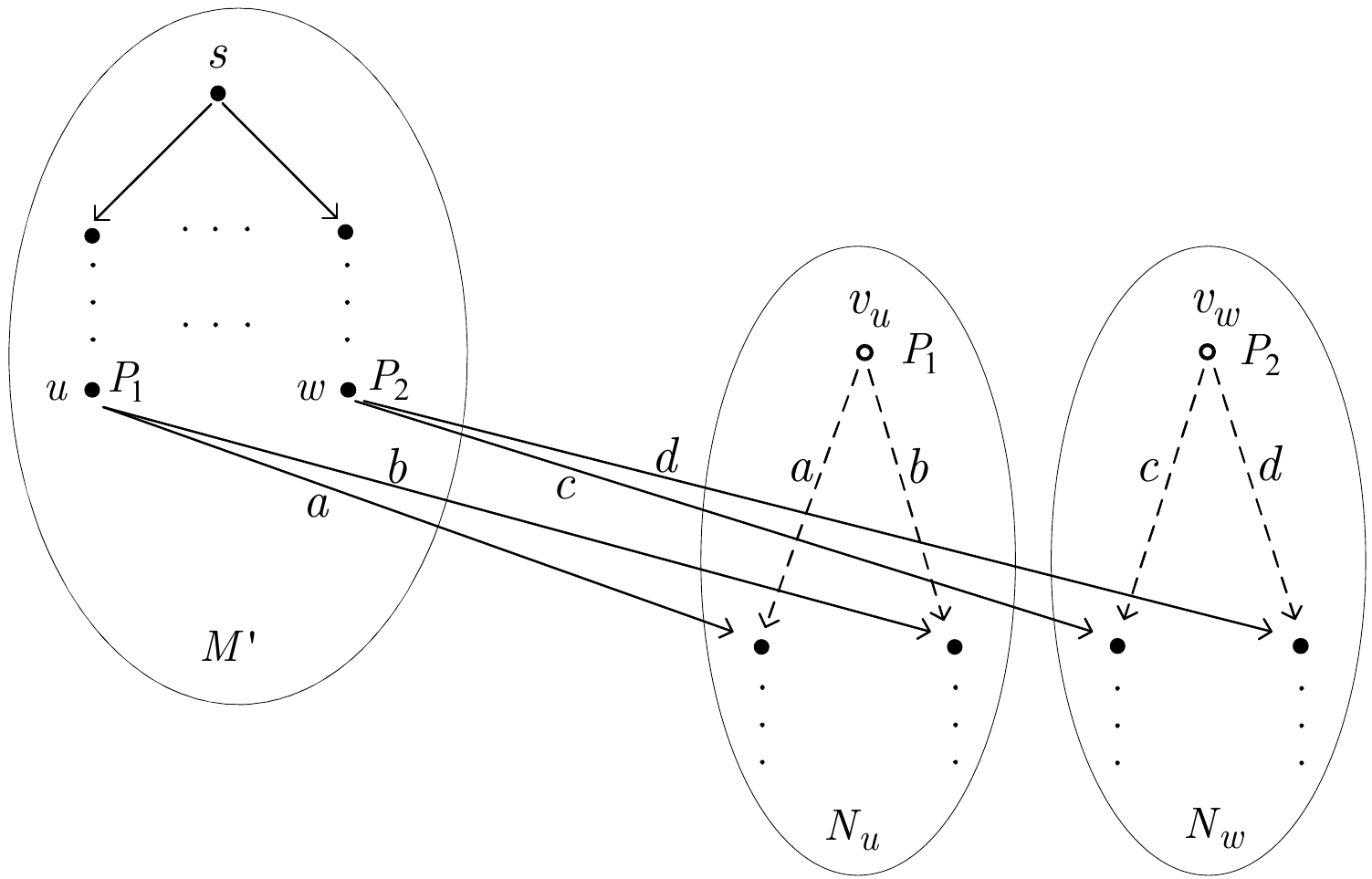}}
\caption{The model $(M,s) \oplus_{W} \{(N_u,v_u) \}_{u \in W}$, where $a,b,c,d \in A$, $W=\{u,w\}$, $P_1=\mathbf{V}^{N_u}(v_u)$, $P_2=\mathbf{V}^{N_w}(v_w)$ and  $M'$ is obtained from $M\Uparrow W$ by setting $\mathbf{V}^{M'}(u) = P_1$ and $\mathbf{V}^{M'}(w) = P_2$}\label{figure: muccr2}
\end{figure}
\end{definition}
\subsection{Soundness}\label{subsec:soundness}
This subsection intends to establish the soundness of the axiom system CCRML$^{\mu}$. Firstly we will give several validities concerned with fixed points.

%
%
%
%
\begin{lemma}\label{lemma:soundness CCRmu1}
    $ \models \Ccropre \eta q. \varphi \rightarrow \eta q. \Ccropre \varphi$.
\end{lemma}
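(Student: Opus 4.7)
My plan is to prove the two cases $\eta = \nu$ and $\eta = \mu$ separately, each by constructing an explicit fixed-point witness (post-fixed point for $\nu$, showing membership in every pre-fixed point for $\mu$). In both cases, I begin by using Proposition~\ref{pro:compositionofsingleton} together with the bisimulation invariance of the formulas involved (Proposition~\ref{prop:bisi invariance}) to assume without loss of generality that the CC-refinement witnessing $M,s \models \exists_{(a_1,a_2)}\eta q.\varphi$ is realized by some injective partial function $\mathcal{Z} : M,s \succeq_{(a_1,a_2)} N,t$, so that every $v\in S^N$ with a $\mathcal{Z}$-preimage has a \emph{unique} such preimage. This injectivity is what will let me switch the valuation of $q$ on $N$ in a controlled way and still preserve the CC-refinement.

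For $\eta = \nu$: let $\Gamma \triangleq \lVert \nu q.\varphi\rVert^{N}$ and define $T_M \triangleq \{u\in S^M : u\mathcal{Z} v \text{ for some } v\in \Gamma\}$. Clearly $s\in T_M$ since $s\mathcal{Z} t$ and $t\in\Gamma$. I will show $T_M$ is a post-fixed point of the semantic operator $g_M(X) \triangleq \lVert \exists_{(a_1,a_2)}\varphi \rVert^{M^{[q\mapsto X]}}$: given $u\in T_M$ with (unique) $v$ such that $u\mathcal{Z} v\in\Gamma$, the unchanged transition part of $\mathcal{Z}$ still witnesses $M^{[q\mapsto T_M]},u \succeq_{(a_1,a_2)} N^{[q\mapsto \Gamma]},v$; the only nontrivial check is the $q$-atoms clause, which amounts to $u'\in T_M \Leftrightarrow v'\in \Gamma$ for $u'\mathcal{Z} v'$, and this follows exactly from the injective-partial-function property of $\mathcal{Z}$. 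Since $\Gamma$ is a fixed point of $h_N(X)\triangleq \lVert\varphi\rVert^{N^{[q\mapsto X]}}$, $N^{[q\mapsto\Gamma]},v\models\varphi$, whence $M^{[q\mapsto T_M]},u\models\exists_{(a_1,a_2)}\varphi$. By Knaster--Tarski, $T_M\subseteq \lVert\nu q.\exists_{(a_1,a_2)}\varphi\rVert^{M}$, and in particular $s$ lies in this set.

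For $\eta = \mu$: let $U\subseteq S^M$ be any pre-fixed point of $g_M$, so $g_M(U)\subseteq U$; I must show $s\in U$. I set $T_N \triangleq \mathcal{Z}(U) \cup (S^N - \mathcal{Z}(S^M))$, so that $S^N - T_N = \mathcal{Z}(S^M - U)$ by injectivity of $\mathcal{Z}$. I will verify that $T_N$ is a pre-fixed point of $h_N$: contrapositively, if $v\notin T_N$, write $v = \mathcal{Z}(u)$ for a unique $u\in S^M - U$; the same computation as above shows $\mathcal{Z}: M^{[q\mapsto U]},u\succeq_{(a_1,a_2)} N^{[q\mapsto T_N]},v$, so if $N^{[q\mapsto T_N]},v\models \varphi$ then $M^{[q\mapsto U]},u\models \exists_{(a_1,a_2)}\varphi$, giving $u\in g_M(U)\subseteq U$, a contradiction. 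Therefore $t\in\lVert\mu q.\varphi\rVert^N\subseteq T_N$, and since $t\in \mathcal{Z}(S^M)$ we must have $t\in\mathcal{Z}(U)$; injectivity of $\mathcal{Z}$ then yields $s\in U$, as required.

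I expect the main obstacle to be the $q$-atoms clause of CC-refinement after re-valuing $q$: the naive choice $T_N = \mathcal{Z}(U)$ can fail the ``only if'' direction for $v'\notin\mathcal{Z}(S^M)$ (for $\nu$, symmetrically for the other direction), and the freedom to set $T_N$ arbitrarily outside $\mathcal{Z}(S^M)$, granted by injectivity of $\mathcal{Z}$, is precisely what resolves this. Once that is in hand the rest is a routine Knaster--Tarski argument, with no need for transfinite induction on approximants.
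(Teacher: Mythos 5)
Your proposal is correct, but it takes a genuinely different route from the paper's. The paper argues softly and never touches a model: from the reflexivity of $\succeq_{(A_1,A_2)}$ it gets $\models\varphi\rightarrow\Ccropre\varphi$, deduces $\models\eta q.\varphi\rightarrow\eta q.\Ccropre\varphi$ by comparing approximants (for $\nu$) and pre-fixed points (for $\mu$), and then substitutes this into the unfolding $\models\Ccropre\eta q.\varphi\leftrightarrow\Ccropre\varphi[\eta q.\varphi/q]$, using positivity of $q$ in $\varphi$ and monotonicity of $\Ccropre$ to slide over to $\Ccropre\varphi[\eta q.\Ccropre\varphi/q]\leftrightarrow\eta q.\Ccropre\varphi$. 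You instead fix a refinement witness $N,t\models\eta q.\varphi$, injectivize it via Proposition~\ref{pro:compositionofsingleton}~(2) and Proposition~\ref{prop:bisi invariance}, and pull the relevant (post-/pre-)fixed point on $N$ back to $M$ along $\mathcal{Z}$, re-valuing $q$ on both sides so that the atoms clause survives --- essentially the surgery the paper reserves for the much harder converse inclusions in Lemma~\ref{lemma:soundness CCRnu2} and Lemma~\ref{lemma:soundness CCRmu2}. Your key checks are all sound (functionality gives one direction of the $q$-atoms clause, injectivity the other, and padding $T_N$ with $S^N-\mathcal{Z}(S^M)$ ensures every state outside $T_N$ has a unique $\mathcal{Z}$-preimage outside $U$); note only that the padding is needed to make the pre-fixed-point verification go through, not to repair the atoms clause, which is vacuous at states without $\mathcal{Z}$-preimages. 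The paper's route is shorter and dispenses with Proposition~\ref{pro:compositionofsingleton} entirely; yours is more self-contained semantically and previews the machinery of the hard direction. Both arguments presuppose the same background fact that $\lambda X.\lVert\Ccropre\varphi\rVert^{M^{[q\mapsto X]}}$ is monotone when $q$ is positive in $\varphi$.
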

\begin{proof}
 By the reflexivity of the relation $\succeq_{(A_1,A_2)}$, we have
 \[\models \psi \rightarrow \Ccropre \psi.\]
 Moreover, it is obvious that
  \begin{gather}
   \models \eta q. \varphi \leftrightarrow  \varphi [ \eta q. \varphi / q], \;\;\text{  and} \label{eq:eb3-1-3a}\\
    \models \eta q. \Ccropre \varphi \leftrightarrow \Ccropre \varphi [\eta q. \Ccropre \varphi / q]. \label{eq:eb3-1-4}
    \end{gather}
  Due to (\ref{eq:eb3-1-3a}), we also have
   \begin{equation}
   \models \Ccropre \eta q. \varphi \leftrightarrow  \Ccropre \varphi [ \eta q. \varphi / q]. \label{eq:eb3-1-5}
 \end{equation}
To complete the proof, we prove the claim below\\

\textbf{Claim 1.} $\text{ }  \models  \eta q. \varphi \rightarrow \eta q. \Ccropre \varphi$\\

  Let $(M,s)$ be a pointed model.  \\
  $(\eta =\nu)\;$ Due to the inductive characterization idea of the greatest fixed point of monotone functions (see, e.g.,~\citep{Arnold2001rudimentsmucalculus}), it holds that
  \[\textstyle M,s  \models \nu q.\beta \; \text{ iff } \; s \in \bigcap_{\tau < \tau^*}\lVert \beta \rVert_{\tau} \]
  where $\tau^*$ is a limit ordinal and $\lVert \beta \rVert_{\tau}$ is defined by

  $\lVert \beta \rVert_0 \triangleq S^M$, and

  $\lVert \beta \rVert_{\tau} \triangleq \lVert \beta \rVert^{M^{[q \mapsto \bigcap_{\tau ' <\tau} \lVert \beta \rVert_{\tau '}]}} $.\\
 Since $\models \varphi \rightarrow \Ccropre \varphi$, $\lVert \varphi \rVert_{\tau} \subseteq \lVert \Ccropre \varphi \rVert_{\tau}$ for any $\tau < \tau^*$.
 Thus,
 \[\textstyle \bigcap_{\tau < \tau^*}\lVert \varphi \rVert_{\tau} \subseteq \bigcap_{\tau < \tau^*}\lVert \Ccropre \varphi \rVert_{\tau},\]
 which implies $M,s \models  \nu q. \varphi \rightarrow \nu q. \Ccropre \varphi $. \\
  $(\eta =\mu)\;$ As $\models \varphi \rightarrow \Ccropre \varphi$, it is clear that
  \[ \{T \subseteq S^M:\; \lVert \varphi \rVert^{M^{[q \mapsto T]}} \subseteq T \} \supseteq \{T \subseteq S^M:\; \lVert \Ccropre \varphi \rVert^{M^{[q \mapsto T]}} \subseteq T \}, \]
  and hence,
  \[\bigcap \{T \subseteq S^M:\; \lVert \varphi \rVert^{M^{[q \mapsto T]}} \subseteq T \} \subseteq \bigcap \{T \subseteq S^M:\; \lVert \Ccropre \varphi \rVert^{M^{[q \mapsto T]}} \subseteq T \}.\]
  Thus, $M,s \models  \mu q. \varphi \rightarrow \mu q. \Ccropre \varphi $. Because $(M,s)$ is arbitrary, it holds that $ \models  \eta q. \varphi \rightarrow \eta q. \Ccropre\varphi$.

  Now we return to the proof. By Claim 1, it follows that
  \[\models \varphi [ \eta q. \varphi / q] \rightarrow \varphi [\eta q. \Ccropre \varphi / q].\]
  Further,
  \begin{equation}
 \models  \Ccropre \varphi [ \eta q. \varphi / q] \rightarrow \Ccropre \varphi [\eta q. \Ccropre \varphi / q].   \label{eq:eb3-1-7}
 \end{equation}
  Thus, due to (\ref{eq:eb3-1-7}), (\ref{eq:eb3-1-5}) and (\ref{eq:eb3-1-4}),
  \[ \models \Ccropre \eta q. \varphi \rightarrow \eta q. \Ccropre \varphi.\qedhere\]
  \end{proof}
In the proof of Lemma~\ref{lemma:soundness CCRnu2}, given $M,s \models \nu q. \Ccropre \varphi$ with $\nu q.\varphi \in \text{\textit{\textbf{df}} }$, we intend to show that $M,s \models \Ccropre \nu q. \varphi$. To this end, by  Proposition~\ref{pro:q-bisi CC-fixed point invariance}, it is enough to construct a pointed model $(N,t)$ such that

\textbf{C1:} ~$M,s\succeq_{(A_1,A_2)}^q N,t $, ~and

\textbf{C2:} ~$N,t\models \nu q. \varphi$.

\noindent Referring to the method used in~\citep[Theorem 40]{Bozzeli2014refinementmodallogic}, we decorate and verify this construction. In the following, we will explain the idea behind this construction. At first glance, in order to satisfy \textbf{C2}, by the semantics  of the greatest fixed point, it suffices to realize that $V^{N}(q)$ is a postfixed point of the function $\lambda X.\,\lVert\varphi(q)\rVert^{N^{[q \mapsto X]}}$ and $t \in V^{N}(q)$,
i.e.,

\textbf{C3:} ~$N,u \models \varphi$ for each $u\in V^{N}(q) $, and $t \in V^{N}(q)$.\\
Inspired by this observation, we next consider how to satisfy \textbf{C1} whenever \textbf{C3} is satisfied.
Since $M,s \models \nu q. \Ccropre \varphi$, for some $T \subseteq S^M$, we have $s \in T $ and
\begin{equation}
M^{[q \mapsto T]},w \models q \rightarrow \Ccropre\varphi  \quad \text{ for all } w \in S^M.\; \label{eq:eb320}
  \end{equation}
This brings us that $M^{[q \mapsto T]},s \models q$ and $M^{[q \mapsto T]},s \models \Ccropre\varphi$. Thus, there exists a $q$-restricted CC-refined pointed model of $(M,s)$ satisfying $\varphi \wedge q$.
Unfortunately, it is not always guaranteed that all the $q$-states satisfy $\varphi$ in this CC-refined model (that is, \textbf{C3} is not necessarily satisfied).

However, by Proposition~\ref{pro:soundness base} and Proposition~\ref{pro:soundness base2}, in a  model, the \emph{descendants} of the non-root $q$-states do not affect the satisfiability of the \textit{\textbf{df}} formula $\varphi(q)$ at the root.
Since $M^{[q \mapsto T]},s \models \Ccropre\varphi(q)$, by Proposition~\ref{pro:soundness base2}, we can obtain a $q$-restricted CC-refined \emph{tree-like} model, say $(N_0,t_0)$, of $(M,s)$ such that $N_0,t_0 \models \varphi \wedge q$. Thus, we intend to realize \textbf{C3} by changing the \emph{descendants} of \emph{some} non-root  $q$-states of $(N_0,t_0)$ safely (that is, \textbf{C1} is still satisfied).

To this end, we define \textbf{two sets} $T_0$ and $W_0$:
\[\begin{array}{lll}
T_0&\triangleq& V^{N_0}(q)-\{t_0\}\\
W_0&\triangleq & S^{N_0\Uparrow T_0} \cap T_0.
 \end{array}\]
 Obviously, $V^{N_0\Uparrow T_0} \,(q)=\{t_0\} \cup W_0$. For each $u\in W_0$, we can, based on (\ref{eq:eb320}), find an available tree-like model $(N_u,v_u)$ which $q$-restricted CC-refines $(M,s)$ and satisfies $\varphi \wedge q$. In the model $N_0\Uparrow T_0$, we add the $b$-transition $\langle u,v\rangle$ for each $b$-successor ($b\in A$) $v$ of $v_u$ in $N_u$,  and change the assignments at $u$ according to the ones at $v_u$. Thus, in the obtained pointed model, say $(N_1,t_1)$, $\,u$ satisfies $ \varphi \wedge q$. It is not difficult to see that \textbf{C1} is still satisfied. But, regrettably, the $q$-states in the added parts ($S^{N_1}-S^{N_0}$) do not necessarily satisfy $\varphi$.

Going on like the above, we can inductively construct a sequence of tree-like pointed models $\{(N_i,t_i)\}_{i < \omega}$. For each $ i < \omega$, $N_{i+1}$ is obtained from $N_i$ by changing the generated submodels of the states in $W_i$, and \textbf{the two sets} $T_{i+1}$ and $W_{i+1}$ are respectively defined as the set of all the $q$-states in $S^{N_{i+1}}-S^{N_i}$ and $W_{i+1} \triangleq S^{N_{i+1}\Uparrow T_{i+1}} \cap T_{i+1}$.
Then it holds that
\[\begin{array}{lll}
V^{N_{i+1}\Uparrow T_{i+1}} \,(q)&= &\{t_0\} \cup \displaystyle\bigcup_{0\leq j\leq i+1} W_j\\
N_{i+1}\Uparrow T_{i+1},\,u&\models &\varphi\qquad \text{ for each } u\in \displaystyle\bigcup_{0\leq j\leq i} W_j.
 \end{array}\]
Moreover, \textbf{C1} can be guaranteed in this process.
A sketchy illustration for the sequence $\{(N_i,t_i)\}_{i < \omega}$ is given in Figure~\ref{figure: muccr7}.

\begin{figure}[h]
\setlength{\abovecaptionskip}{0.4cm}
\setlength{\belowcaptionskip}{-0.1cm}
\centering
\centerline{\includegraphics[scale=0.8]{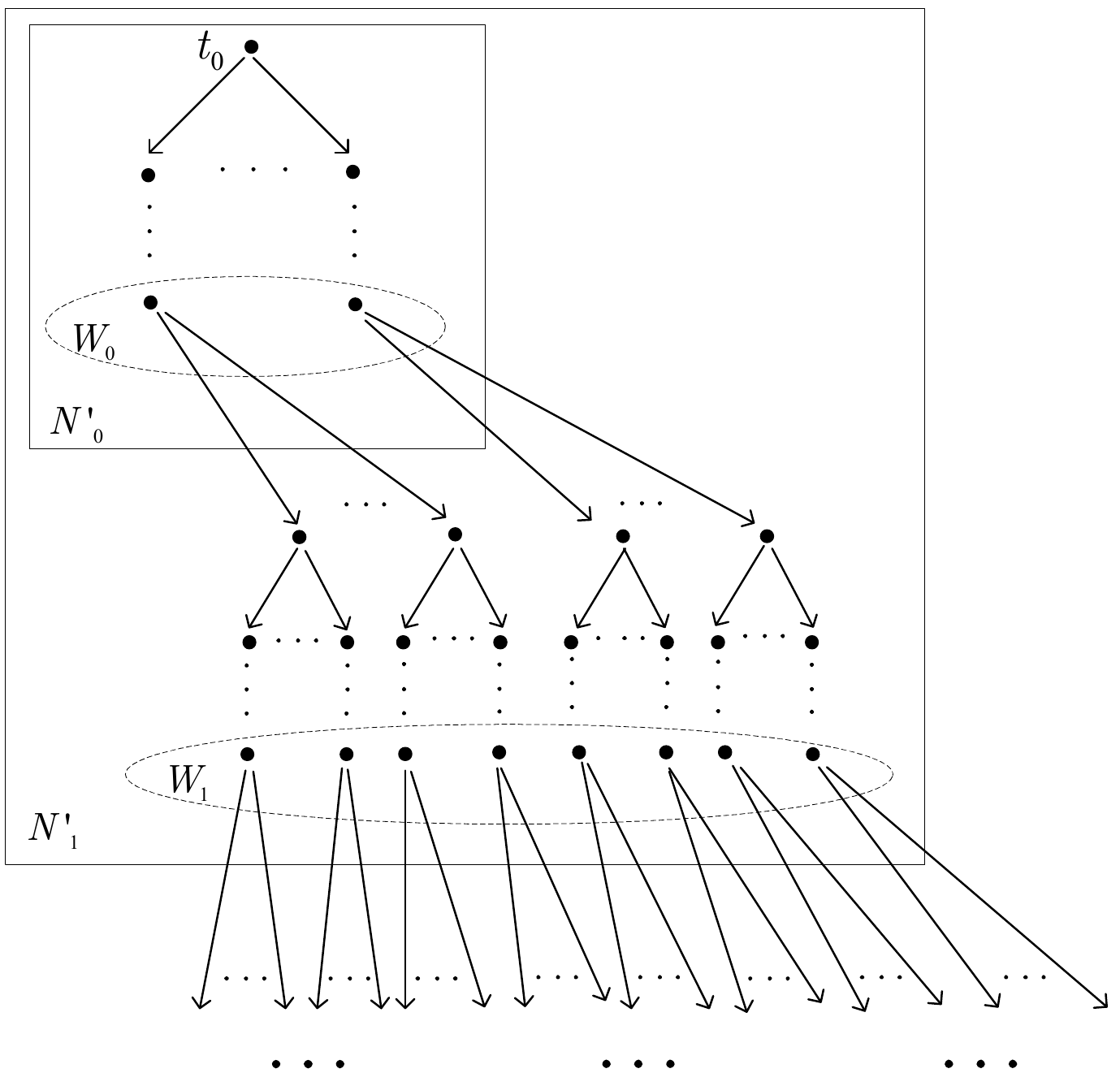}}
\caption{$\{(N'_i,t_i)\}_{i < \omega}$, where $N'_{i}$  is obtained from $N_i\Uparrow W_i$ by for each $u\in W_i$, setting $\mathbf{V}^{N_i}(u)=\mathbf{V}^{N_u}(v_u)$}\label{figure: muccr7}
\end{figure}


Finally, $N_{\omega} \triangleq \langle\, \varliminf_{i < \omega} S^{N_i},\, \left\{\varliminf_{i < \omega} R^{N_i}_b\right\}_{b\in A},\, \left\{\varliminf_{i < \omega} V^{N_i}(r)\right\}_{r\in Atom} \,\rangle\,$\(\footnote{As usual, given a sequence of sets $\{Z_n\}_{n <\omega} $, the limit $\varliminf_{n < \omega}  Z_n $ is defined as, \\$\text{ } \quad\;\; \varliminf_{n < \omega} Z_n \triangleq \{x\;:\;  \exists m<\omega \,\forall k\,(k\geq m \Rightarrow x\in Z_k)\}$,\\
i.e., \\$\text{ } \quad\;\; \varliminf_{n < \omega} Z_n = \{x\;:\;  \exists m<\omega \,(  x\in \bigcap_{k\geq m} Z_k)\}$.
}\) and $T_{\omega} \triangleq\emptyset$, which can be verified to be the desired model.
\begin{lemma}\label{lemma:soundness CCRnu2}
    Let $\nu q.\varphi \in \text{\textit{\textbf{df}} }$. Then $\text{ } \models \nu q. \Ccropre \varphi \rightarrow \Ccropre \nu q. \varphi$.
\end{lemma}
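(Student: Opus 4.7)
The plan is to leverage Proposition~\ref{pro:q-bisi CC-fixed point invariance}: it suffices to construct a pointed model $(N,t)$ with $M,s \succeq_{(A_1,A_2)}^q N,t$ and $N,t \models \nu q.\varphi$. By Proposition~\ref{prop:model equivalent to tree-like model } and Proposition~\ref{prop:bisi invariance}, I may assume $(M,s)$ is tree-like rooted at $s$. The hypothesis $M,s \models \nu q.\Ccropre\varphi$ gives, by coinduction, a set $T \subseteq S^M$ with $s \in T$ such that $M^{[q \mapsto T]}, w \models \Ccropre\varphi$ for every $w \in T$. The target is to arrange that $t \in V^N(q)$ and $N,u \models \varphi$ at every $u \in V^N(q)$; by coinduction this forces $N,t \models \nu q.\varphi$.

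I would build $N$ as an $\omega$-stage limit. At stage $0$, apply Proposition~\ref{pro:soundness base2} to $M^{[q \mapsto T]}, s$, which satisfies $\Ccropre(\varphi \wedge q)$ by the choice of $T$, to obtain a tree-like $(N_0,t_0)$, an injective $\mathcal{Z}_0 : M,s \succeq_{(A_1,A_2)}^q N_0, t_0$, with $N_0, t_0 \models \varphi$, $t_0 \in V^{N_0}(q)$, and $\mathcal{Z}_0^{-1}(V^{N_0}(q)) \subseteq T$. Put $T_0 \triangleq V^{N_0}(q) - \{t_0\}$ and $W_0 \triangleq S^{N_0 \Uparrow T_0} \cap T_0$. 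Inductively at stage $i+1$: for each $u \in W_i$ the (unique) $\mathcal{Z}_i$-preimage $w_u$ lies in $T$ by clause (2) of Proposition~\ref{pro:soundness base2}, so $M^{[q \mapsto T]}, w_u \models \Ccropre\varphi$ and I may invoke Proposition~\ref{pro:soundness base2} again to obtain a tree-like $(N_u,v_u) \models \varphi \wedge q$ together with an injective $\mathcal{Z}_u : M, w_u \succeq_{(A_1,A_2)}^q N_u, v_u$. Then form $N_{i+1} \triangleq (N_i, t_0) \oplus_{W_i} \{(N_u, v_u)\}_{u \in W_i}$ via Definition~\ref{def:Nmodel}, and let $T_{i+1}$, $W_{i+1}$ be defined analogously from the freshly introduced non-$t_0$ $q$-states.

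Maintaining two invariants through this induction is the main technical burden. The first is that there is an injective $\mathcal{Z}_{i+1} : M,s \succeq_{(A_1,A_2)}^q N_{i+1}, t_0$, assembled by restricting $\mathcal{Z}_i$ to $S^{N_i \Uparrow W_i}$ and appending, at each $u \in W_i$, the relation $\mathcal{Z}_u$ translated onto the glued copy of $N_u$; the clauses \textbf{(forth)} and \textbf{(back)} at the glue points reduce to the corresponding conditions for $\mathcal{Z}_u$ at $v_u$, and injectivity is preserved since disjointness is built into Definition~\ref{def:Nmodel}. The second invariant is that $N_j \Uparrow T_j, u \models \varphi$ for every $u \in V^{N_j \Uparrow T_j}(q)$ introduced by stage $j \leq i$. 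Clause (4) of Proposition~\ref{pro:soundness base2} is exactly what makes this invariant stable under further extension: later stages only alter the submodels hanging below non-root $q$-states relative to each already-verified $u$, and clause (4) guarantees that such alterations leave $u \models \varphi$ unchanged.

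Finally take $N_\omega$ with states, accessibility relations, and valuations formed as the directed unions (equivalently, the $\varliminf$'s) of the $N_i$-data; in particular $V^{N_\omega}(q) = \{t_0\} \cup \bigcup_{i < \omega} W_i$. The $\mathcal{Z}_i$'s cohere into an injective $\mathcal{Z}_\omega : M,s \succeq_{(A_1,A_2)}^q N_\omega, t_0$. For any $u \in V^{N_\omega}(q)$, pick the stage $j$ at which $u$ first appears; the invariant at stage $j+1$ together with iterated application of clause (4) of Proposition~\ref{pro:soundness base2} yields $N_\omega, u \models \varphi$. Hence $V^{N_\omega}(q)$ is a postfixed point of the $\varphi$-semantics map containing $t_0$, so $N_\omega, t_0 \models \nu q.\varphi$, and Proposition~\ref{pro:q-bisi CC-fixed point invariance} delivers $M,s \models \Ccropre \nu q.\varphi$, as required.
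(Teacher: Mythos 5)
Your proposal follows essentially the same route as the paper's proof: reduce to constructing a $q$-restricted CC-refinement via Proposition~\ref{pro:q-bisi CC-fixed point invariance}, unfold the greatest fixed point to get $T\ni s$ with $M^{[q\mapsto T]},w\models \Ccropre\varphi$ on $T$, iterate Proposition~\ref{pro:soundness base2} and the gluing of Definition~\ref{def:Nmodel} along the frontier sets $W_i$, use clause (4) of Proposition~\ref{pro:soundness base2} to keep already-verified $q$-states satisfying $\varphi$, and pass to the $\varliminf$ limit. One step as written would fail, though it is easily patched: you assert that every $u\in W_i$ has a (unique) $\mathcal{Z}_i$-preimage $w_u\in T$. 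The relation $\mathcal{Z}_i$ produced by Proposition~\ref{pro:soundness base2} is injective but need not be surjective onto $V^{N_i}(q)$ (the \textbf{(back)} clause only covers actions in $A-A_1$, so a $q$-state of $N_i$ reached via an $A_1$-transition may lie outside $\pi_2(\mathcal{Z}_i)$), so clause (2) gives $\mathcal{Z}_i^{-1}(W_i)\subseteq T$ but not that every element of $W_i$ has a preimage. For such $u\notin\pi_2(\mathcal{Z}_i)$ you cannot appeal to $T$ at all; the paper handles this case by gluing in a fresh renamed copy of $(N_0,t_0)$ (any tree-like model of $\varphi\wedge q$ enjoying property (4) would do), which is harmless for the refinement conditions precisely because $u$ participates in no $\mathcal{Z}$-pair. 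With that case added, your argument matches the paper's.
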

\begin{proof}
Let $M,s \models \nu q. \Ccropre \varphi$. So $s \in \bigcup \{T \subseteq S^M:\; T \subseteq \lVert \Ccropre\varphi \rVert^{M^{[q \mapsto T]}} \}$.  Then, for some $T \subseteq S^M$, we have that $s \in T $ and
\begin{equation}
M^{[q \mapsto T]},w \models q \rightarrow \Ccropre\varphi \quad \text{ for all } w \in S^M.\; \label{eq:eb321}
  \end{equation}
Clearly, $M^{[q \mapsto T]},s\models q \wedge \Ccropre\varphi $. We will intend to construct a model $(N_{\omega},t)$ such that
$M,s \succeq_{(A_1,A_2)}^q N_{\omega},t \models \nu q. \varphi$, which by Proposition~\ref{pro:q-bisi CC-fixed point invariance}, will imply $M,s \models \exists_{(A_1,A_2)} \nu q.\varphi$, as desired.
Firstly, $\varphi\in \text{\textit{\textbf{df}} }$ as $\nu q.\varphi \in \text{\textit{\textbf{df}} }$.

We inductively construct a sequence of tree-like pointed models $\{(N_n,t_n)\}_{n < \omega}$ and \textbf{sets} $T_n\subseteq V^{N_n}(q)$ and $W_n \triangleq S^{N_n \Uparrow T_n}\cap T_n$ ($n<\omega$), ~which  satisfy the following properties: ~for each $n<\omega$,

  ~\textbf{(i)} $\mathcal{Z}_n: M^{[q \mapsto T]},s \succeq_{(A_1,A_2)}^q N_n,t_n$ that is an \emph{injective} relation from $S^M$ to $S^{N_n}$

  ~\textbf{(ii)} ~$\mathcal{Z}_n^{-1}(W_n)\subseteq T$



  \textbf{(iii)} ~$\forall j<n \;(W_n \subseteq R^+_{N_n} (W_j))$

  \textbf{(iv)} ~$\forall j<n\;( N_n \Uparrow W_j = ^{W_j} N_j \Uparrow W_j)$

    ~\textbf{(v)} ~$V^{N_{i}\Uparrow W_i} (q) = \{t_0\} \cup \bigcup_{j\leq i} W_j$

  \textbf{(vi)} ~$\forall j<n \;\forall u \in W_j \left(N_{j+1},u \models \varphi\Rightarrow  \forall N'\,\left(
   \begin{subarray}{1}( N' \Uparrow W_{j+1} =^{W_{j+1}} N_{j+1} \Uparrow W_{j+1}\\
   \text{ and } W_{j+1}\subseteq V^{N'}(q)) \;\Rightarrow \;N',u \models \varphi  \end{subarray}\right )\right)$

%

  \textbf{(vii)} $\forall j<n \;\forall u \in W_j\; (N_n,u \models \varphi)$.



  \noindent \textbf{(}$n=0$\textbf{)} $\text{ }$Since $M^{[q \mapsto T]},s\models q \wedge \Ccropre\varphi $,
  by Proposition~\ref{pro:soundness base2}, we can choose arbitrarily and fix a tree-like model $N_0$ with the root $t_0$ and \emph{injective} relation $\mathcal{Z}_0$ from $S^{M}$ to $S^{N_0}$ such that
  \begin{gather*}
  \mathcal{Z}_0:M^{[q \mapsto T]},s\, \succeq_{(A_1,A_2)}^q N_0,t_0 \models q \wedge \varphi\\
   \mathcal{Z}_0^{-1}(V^{N_0}(q))\subseteq V^{M^{[q \mapsto T]}}(q)=T.
  \end{gather*}
  Set
  \begin{align}
  T_0 \;&\triangleq \;V^{N_0}(q)-\{t_0 \} \label{eq:eb3214}\\
  W_0\; &\triangleq \; S^{N_0\Uparrow T_0}\cap T_0. \notag
 \end{align}
So
\begin{equation}
  V^{N_0\Uparrow T_0}(q)=\{t_0\}\cup W_0.\; \label{eq:eb3215}
\end{equation}
Further, it is evident that the properties \textbf{(i)}-\textbf{(vii)} are satisfied.

  \noindent  \textbf{(}$n=i+1$\textbf{)} $\text{ }$Suppose that $(N_i,t_i), T_i, W_i, \mathcal{Z}_i$ has been constructed and they satisfy  \textbf{(i)}-\textbf{(vii)}.
  Then
  $\mathcal{Z}_i^{-1}(W_i)\subseteq T$ by \textbf{(ii)}. Let $u \in W_i$. We have $u \in \pi_2(\mathcal{Z}_i)$ or $u \notin \pi_2(\mathcal{Z}_i)$.

  For $u \in \pi_2(\mathcal{Z}_i)$, due to $\mathcal{Z}_i^{-1}(W_i)\subseteq T$, we can choose arbitrarily and fix a state $w_u \in T$ such that $w_u \mathcal{Z}_i u$.
 Further,
 $M^{[q \mapsto T]},w_u \models  q \wedge \Ccropre\varphi$ by (\ref{eq:eb321}). Then, by Proposition~\ref{pro:soundness base2}, we may choose and fix a  tree-like model $N_u$ with the root $v_u$ and \emph{injective} relation $\mathcal{Z}_u$ from $S^{M}$ to $S^{N_u}$ such that
  \begin{gather}
  \mathcal{Z}_u:M^{[q \mapsto T]},w_u\, \succeq_{(A_1,A_2)}^q N_u,v_u \models q \wedge \varphi\; \label{eq:eb3219}\\
   \mathcal{Z}_u^{-1}(V^{N_u}(q))\subseteq V^{M^{[q \mapsto T]}}(q)=T.\notag
 \end{gather}

  For $u \notin \pi_2(\mathcal{Z}_i)$, we obtain the model $(N_u,v_u)$ from $(N_0,t_0)$ by renaming $w$ to $w_u$ for each $w \in S^{N_0}$. Set  $\mathcal{Z}_u \triangleq \{\langle v,w_u\rangle : v \mathcal{Z}_0 w\}$.
  It is easy to see that $\mathcal{Z}_u$ is an \emph{injective} relation from $S^{M}$ to $S^{N_u}$ and
  \begin{gather*}
   \mathcal{Z}_u:M^{[q \mapsto T]},s \succeq_{(A_1,A_2)}^q N_u,v_u \models q \wedge \varphi \\
   \mathcal{Z}_u^{-1}(V^{N_u}(q))\subseteq T.
   \end{gather*}

  W.l.o.g., we suppose that all the models in $\{N_i\} \cup \{N_u\}_{u\in W_i }$ are pairwise disjoint.
%
 The pointed model $(N_{i+1},t_{i+1})$ is
 defined by:
 \[ \begin{array}{lll}
 N_{i+1} & \triangleq &(N_i,t_i) \oplus_{W_i} \{(N_u,v_u) \}_{u \in W_i}\\
 \,t_{i+1}& \triangleq & t_i.
 \end{array}\]
Since $W_i \subseteq V^{N_{i}} (q)$ by the induction hypothesis, $ w_u \mathcal{Z}_i u$ ($u \in W_i\cap\pi_2(\mathcal{Z}_i)$) and (\ref{eq:eb3219}), it is not difficult to see that $ \mathbf{V}^{N_i}(u)=\mathbf{V}^{N_u}(v_u)$ ($u \in W_i\cap\pi_2(\mathcal{Z}_i)$). Hence, by the construction of $N_{i+1}$ (See Definition~\ref{def:Nmodel}),  it holds that
\[\mathbf{V}^{N_i}(u)=\mathbf{V}^{N_{i+1}}(u) \quad \text{ for each }u \in W_i\cap\pi_2(\mathcal{Z}_i). \]







\noindent Moreover, by the above construction, we easily see that, $N_{i+1}$ is still a \emph{tree-like} model with the root $t_{i+1}$ ($t_i$); and for each $u \in W_i$, ~$N_{i+1},u \,\underline{\leftrightarrow} \, N_u, v_u$, so  by Proposition~\ref{prop:bisi invariance}, due to $N_u, v_u  \models \varphi$,
\begin{equation}
N_{i+1},u \models \varphi.\; \label{eq:eb3212}
\end{equation}
Put
 \[ \begin{array}{lll}
 T_{i+1} &\triangleq &\textstyle \bigcup_{u\in W_i} (V^{N_u} (q) - \{ v_u \})\\
 W_{i+1} &\triangleq  &S^{N_{i+1}\Uparrow T_{i+1}}\cap T_{i+1}\\
 \mathcal{Z}_{i+1} &\triangleq &(\mathcal{Z}_i \cup \bigcup_{ u \in W_i } \mathcal{Z}_u) \cap (S^M \times S^{N_{i+1}}).
 \end{array}\]
 So
 \begin{gather}
 W_{i+1} \subseteq T_{i+1} \subseteq R^+_{N_{i+1}}(W_i) \; \label{eq:eb322}\\
  N_{i+1}\Uparrow W_i=^{W_{i}} N_i\Uparrow W_i \; \label{eq:eb323}\\
 V^{N_{i+1}\Uparrow W_{i+1}}(q)= V^{N_{i}\Uparrow W_{i}}(q) \,\cup\, W_{i+1} \; \label{eq:eb324}
  \end{gather}
  Intuitively, $T_{i+1}$ is indeed the set of all the $q$-states in ($S^{N_{i+1}}-S^{N_{i}}$).


  Below, we check that $(N_{i+1},t_{i+1})$ satisfies the properties \textbf{(i)}-\textbf{(vii)}. \\

  \textbf{(i).} ~$\,$Let $v_1 \mathcal{Z}_{i+1} w$ and $v_2 \mathcal{Z}_{i+1} w$. Then $w\in S^{N_{i+1} }$. Because all the models in
$ \{N_i\}\cup\{N_u\}_{u\in W_i } $ are pairwise disjoint, by the definition of $N_{i+1}$, either $w\in S^{N_{i} }$ or $w\in S^{N_{u} }$ for some $u\in W_i$, and only one of the two holds. Assume $w\in S^{N_{i} }$. The other case  is similar. Thus, by the definition of $\mathcal{Z}_{i+1}$,
 $v_1 \mathcal{Z}_{i} w$ and $v_2 \mathcal{Z}_{i} w$. By induction hypothesis, $\mathcal{Z}_{i}$ is an \emph{injective} relation from  $S^{M}$ to $S^{N_{i}}$. So $v_1=v_2$. Hence, $\mathcal{Z}_{i+1}$  is an \emph{injective} relation from $S^{M}$ to $S^{N_{i+1}}$.

  In the following, we check that $\mathcal{Z}_{i+1}: M^{[q \mapsto T]},s \succeq_{(A_1,A_2)}^q N_{i+1},t_{i+1}$ (i.e., $(N_{i+1},t_{i})\,$). Clearly $s \mathcal{Z}_{i+1} t_{i+1}$ since $s \mathcal{Z}_{i} t_{i}$. Let $w \mathcal{Z}_{i+1} w'$.
The condition \textbf{($\{q\}$-atoms)} holds trivially. Now we check  \textbf{(forth)} and \textbf{(back)}.

\noindent \textbf{(forth)} $\text{ }$ Let $w R^M_b w_1$ and $b \in A-A_2$. Due to $w \mathcal{Z}_{i+1} w'$, by the definition of $\mathcal{Z}_{i+1}$, either $w \mathcal{Z}_{i} w'$ or $w \mathcal{Z}_{u} w'$ for some $u\in W_i$.

For the former one, we have $w' \in S^{N_{i}}$. So by the definition of $N_{i+1}$, there are two alternatives.

If $ w' \notin W_i$, ~then $ w' \in S^{N_i\Uparrow T_i}- W_i$. Since $w \mathcal{Z}_{i} w'$ and $w R^M_b w_1$, there exists  $w'_1\in S^{N_i}$ such that $w' R_b^{N_i} w'_1$ and $w_1 \mathcal{Z}_{i} w'_1$.
By the definitions of $N_{i+1}$ and $\mathcal{Z}_{i+1}$, we have $w'_1\in S^{N_i\Uparrow T_i}(\subseteq  S^{N_{i+1}} $), $w' R_b^{N_{i+1}} w'_1$ and $w_1 \mathcal{Z}_{i+1} w'_1$, as desired.

If $ w' \in W_i$,  ~we have $w_{w'}   \mathcal{Z}_{i} w'$ and $w_{w'}   \mathcal{Z}_{w'} v_{w'}$. Since $\mathcal{Z}_i$ is an \emph{injective} relation from $S^M$ to $S^{N_i}$, $w = w_{w'}  $ due to $w \mathcal{Z}_i w'$ and $w_{w'}   \mathcal{Z}_i w'$. Thus, as $w_{w'}   \mathcal{Z}_{w'} v_{w'}$ and $w R_b^{M} w_1$, $v_{w'} R_b^{N_{w'}} w'_1$ and $w_1 \mathcal{Z}_{w'} w'_1$ for some $w'_1\in S^{N_{w'}}$. Further, by the definitions of $N_{i+1}$  and $\mathcal{Z}_{i+1}$, we get $w' R_b^{N_{i+1}} w'_1$ due to $v_{w'} R_b^{N_{w'}} w'_1$ and $w_1 \mathcal{Z}_{i+1} w'_1$ due to $w_1 \mathcal{Z}_{w'} w'_1$.

For the latter one, from $w \mathcal{Z}_{u} w'$ and $w R^M_b w_1$, it follows that $w' R_b^{N_{u}} w'_1$ and
  $w_1 \mathcal{Z}_{u} w'_1$ for some $w'_1 \in S^{N_{u}}$. Clearly, $w'_1 \in S^{N_{i+1}}$, $w' R_b^{N_{i+1}} w'_1$ and
  $w_1\mathcal{Z}_{i+1} w'_1$ by the definitions of $N_{i+1}$  and $\mathcal{Z}_{i+1}$.

 \noindent \textbf{(back)} $\text{ }$ Let $w' R_b^{N_{i+1}} w'_1$ and $b \in A-A_1$. So $w',w'_1 \in S^{N_{i+1}}$. We below deal with the different  cases based on $w'$.

 For $ w' \in W_i$,  ~by the above construction of $N_{i+1}$, due to $w' R_b^{N_{i+1}} w'_1$, we have $v_{w'} R_b^{N_{w'}} w'_1$, $w_{w'}  \mathcal{Z}_{i} w'$ and $w_{w'} \mathcal{Z}_{w'} v_{w'}$. Since $\mathcal{Z}_i$ is an \emph{injective} relation from $S^M$ to $S^{N_i}$, $w = w_{w'}  $ immediately follows from $w \mathcal{Z}_i w'$ and $w_{w'}   \mathcal{Z}_i w'$. Further, it follows from $w_{w'}   \mathcal{Z}_{w'} v_{w'}$ and $v_{w'} R_b^{N_{w'}} w'_1$ that $w_{w'}  R_b^{M} w_1$ (i.e., $w R_b^{M} w_1$ ) and $w_1 \mathcal{Z}_{w'} w'_1$ for some $w_1\in S^M$. Moreover, $w_1 \mathcal{Z}_{i+1} w'_1$ due to $w',w'_1 \in S^{N_{i+1}}$.

For  $ w' \notin W_i$,  ~there are two alternatives by the definition of $N_{i+1}$.

If $ w' \in S_i$, ~then $ w' \in S^{N_i\Uparrow T_i}- W_i$. By the definitions of $N_{i+1}$ and $\mathcal{Z}_{i+1}$, we have $w'_1\in S^{N_i\Uparrow T_i}$, $w' R_b^{N_i\Uparrow T_i} w'_1$ and $w \mathcal{Z}_{i} w'$. Hence, $w R_b^{M} w_1$ and $w_1 \mathcal{Z}_{i} w'_1$ (also $w_1 \mathcal{Z}_{i+1} w'_1$) for some $w_1\in S^{M}$.

If $ w' \notin S_i$, ~then $ w' \in S^{N_{i+1}} - S^{N_i}$. By the definition of $N_{i+1}$, we have $w'\in S^{N_u}$ and $w' R_b^{N_{u}} w'_1$ for some $u\in W_i$. Next $w \mathcal{Z}_{u} w'$ by the definition of $\mathcal{Z}_{i+1}$. Thus, due to $w \mathcal{Z}_{u} w'$ and $w' R_b^{N_{u}} w'_1$,
there exists $w_1\in S^{M}$ such that $w R_b^{M} w_1$ and $w_1 \mathcal{Z}_{u} w'_1$. Obviously, $w \mathcal{Z}_{i+1} w'$.

   \textbf{(ii).} ~By the above construction, $\,\mathcal{Z}_u^{-1}(V^{N_u}(q))\subseteq T$ for each  $u\in W_i $. So by the
 definition of $\mathcal{Z}_{i+1}$ and $T_{i+1}$, ~$\mathcal{Z}_{i+1}^{-1}(T_{i+1}) \subseteq T$. Further, $\mathcal{Z}_{i+1}^{-1}(W_{i+1})\subseteq T$ due to (\ref{eq:eb322}).

   \textbf{(iii).} Let $ j<i+1$. For $j=i$, it holds by (\ref{eq:eb322}). Consider $j<i$. By the
induction hypothesis, $W_i \subseteq R^+_{N_{i}}(W_j)$. By the above construction, $N_i \Uparrow W_i$ is remained in $N_{i+1}$. So $W_i \subseteq R^+_{N_{i+1}}(W_j)$. This, together with (\ref{eq:eb322}), implies $W_{i+1} \subseteq  R^+_{N_{i+1}}(W_j) $.

   \textbf{(iv).} Let $ j<i+1$. For $j=i$, it holds by (\ref{eq:eb323}). For $j<i$, by the induction
hypothesis,  $W_i \subseteq R^+_{N_{i}}(W_j)$,
  which together with (\ref{eq:eb323}), implies $N_{i+1}\Uparrow W_j = N_i\Uparrow W_j$. Hence, due to $ N_i \Uparrow W_j =^{W_{j}} N_j\Uparrow W_j$ by the induction hypothesis, we have $N_{i+1}\Uparrow W_j =^{W_{j}} N_j\Uparrow W_j$.

    \textbf{(v).} By the induction hypothesis, we have $V^{N_{i}\Uparrow W_i} (q) = \{t_0\} \cup \bigcup_{j\leq i} W_j$. Then,
it follows from (\ref{eq:eb324}) that $V^{N_{i+1}\Uparrow W_{i+1}} (q) = \{t_0\} \cup \bigcup_{j\leq i+1} W_j$.

   \textbf{(vi).} $\;$Let $ j<i+1$. For $j<i$, this holds by the induction hypothesis. Consider
$j=i$. Let $u\in W_i$.  We have $N_{i+1},u \models \varphi$ by (\ref{eq:eb3212}). Let $N'$ be a model such that $N' \Uparrow W_{i+1} =^{W_{i+1}} N_{i+1} \Uparrow W_{i+1}$
    and $ W_{i+1}\subseteq V^{N'}(q)$. Moreover, $W_{i+1}\subseteq V^{ N_{i+1}}(q)$ by \textbf{(v)} and $W_{i+1} \subseteq R^+_{N_{i+1}}(W_i)$  by \textbf{(iii)}.  Thus, the conditions holds in Proposition~\ref{pro:soundness base2} (4) w.r.t. the $u$-generated submodels of $N'$ and $N_{i+1} $.  So, $N',u \models \varphi$ by Proposition~\ref{pro:soundness base2} (4).

   \textbf{(vii).} For each $ u \in W_i$, it holds by (\ref{eq:eb3212}). For all $ j<i$ and $ u \in W_j$, from
$N_{j+1},u \models \varphi$ by the induction hypothesis, $W_{j+1}\subseteq V^{ N_{i+1}}(q)$ by \textbf{(v)} and $N_{i+1}\Uparrow W_{j+1} =^{W_{j+1}} N_{j+1}\Uparrow W_{j+1}$ by \textbf{(iv)}, it follows that $N_{i+1},u \models \varphi$ due to \textbf{(vi)}.\\

Now we are ready to construct the model $N_{\omega} $.

  \noindent \textbf{(}$N_{\omega}$\textbf{)} $\text{ }$We define
   \[ \begin{array}{lll}
    N_{\omega} &\triangleq &\langle \,\varliminf_{i < \omega} S^{N_i},\, \left\{\varliminf_{i < \omega} R^{N_i}_b\right\}_{b\in A},\,\left\{ \varliminf_{i < \omega} V^{N_i}(r)\right\}_{r\in Atom} \, \rangle\\
    T_{\omega} &\triangleq &\emptyset\\
    \mathcal{Z}_{\omega} &\triangleq & \varliminf_{i < \omega} \mathcal{Z}_i \quad \text{ where } \mathcal{Z}_{\omega} \subseteq S^M \times S^{N_{\omega}}.
   \end{array}\]
  In the following, we verify $(N_{\omega},t_0)$ is exactly the desired model,
  i.e., we prove that $M,s \succeq_{(A_1,A_2)}^q N_{\omega},t_0\models \nu q. \varphi$. From the above inductive construction and the properties \textbf{(i)}-\textbf{(vii)}, we can find the following interesting statements:\\

\textbf{(a)}$\text{ }$ For any $j<i<\omega$, ~~$ S^{N_j\Uparrow W_j}\subseteq  S^{N_i}$ and $ S^{N_j\Uparrow W_j}\subseteq  S^{N_{\omega}}$.

\textbf{(b)}$\text{ }$ For any $j<i<\omega$ and $b\in A$, ~~$  R^{N_j\Uparrow W_j}_b \subseteq R^{N_i}_b$ and $  R^{N_j\Uparrow W_j}_b \subseteq R^{N_{\omega}}_b$.

  \textbf{(c)}$\text{ }\,$ For any $ j<i<\omega$, ~~if $ w \mathcal{Z}_j w' $ and $w' \in  S^{N_j\Uparrow W_j}$ ~then $\langle w,w'\rangle$ will be
   remained in $\mathcal{Z}_{i}$ and $\mathcal{Z}_{\omega}$.

    \textbf{(d)}$\text{ }$ $V^{N_{\omega}} (q) = \{t_0\} \cup \bigcup_{i< \omega} W_i$.

     \textbf{(e)}$\text{ }$ For any $i<\omega$, ~~$N_i$ and $N_{\omega}$ are \emph{tree-like} models with the root $t_0$ and
 $ N_{\omega}\Uparrow W_i=^{W_{i}} N_i \Uparrow W_i$.\\



\noindent Their proofs are given as follows.\\

    \textbf{(a)-(b).} ~Let $j<i<\omega$, $u\in S^{N_j\Uparrow W_j}$ and $ v R^{N_j\Uparrow W_j}_b w$. By \textbf{(iv)}, it follows that
$ N_{i}\Uparrow W_j=^{W_{j}} N_j \Uparrow W_j$. Immediately $u\in S^{N_i}$ and $ v R^{N_i}_b w$. Further, as $i$ is arbitrary, by the definition of $N_{\omega}$, we get $u\in S^{N_{\omega}}$ and $ v R^{N_{\omega}}_b w$.

    \textbf{(c).} ~Let $j<\omega$.  Suppose  $ w \mathcal{Z}_j w' $ and $w' \in  S^{N_j\Uparrow W_j}$. We below show $ w \mathcal{Z}_i w' $ by

 induction on $i\;(\geq j)$. The base case is trivial. For the induction step $i=k+1$, assume $ w \mathcal{Z}_k w' $.
 Due to  $w' \in  S^{N_j\Uparrow W_j}$ and \textbf{(a)}, we have  $w' \in  S^{N_k}$ and $w' \in  S^{N_{k+1}}$. So, by the definition of $\mathcal{Z}_{k+1}$, $ w \mathcal{Z}_{k+1} w' $ follows from $ w \mathcal{Z}_k w' $.

 Moreover, as $i\geq j$ is arbitrary, $w\mathcal{Z}_{\omega} w'$ holds by the definition of $\mathcal{Z}_{\omega}$.

     \textbf{(d).} ~By \textbf{(v)} and the definition of $V^{N_{\omega}}$, it is clear that  $\{t_0\} \cup \bigcup_{i< \omega} W_i \subseteq V^{N_{\omega}} (q)$.
    Let $w\in V^{N_{\omega}} (q)$. By the definition of $V^{N_{\omega}}$, there is $k<\omega$ such that $w\in V^{N_{k'}} (q)$ for all $k'\geq k$. So $w\in V^{N_{k}}(q)\subseteq S^{N_{k}}$ and $w\in V^{N_{k+1}}(q)\subseteq S^{N_{k+1}}$. Then, by the definition of $N_{k+1}$, $w\in S^{N_{k}\Uparrow W_{k}} $. Further, $w\in V^{N_{k}\Uparrow W_{k}}(q) $ due to $w\in V^{N_{k}}(q)$.
  Hence, by \textbf{(v)}, we get $w\in  \{t_0\} \cup \bigcup_{i< \omega} W_i$.

    \textbf{(e).} ~Let $i<\omega$.  By the above inductive construction,  we easily see that $N_i$  is a
 tree-like model with the root $t_0$ and $t_0 \in S^{N_{\omega}}$. In the following, we check that $N_{\omega}$ is a tree-like model with the root $t_0$ by Definition~\ref{def:tree-like model}.

 \noindent\textbf{(}$\,$Definition~\ref{def:tree-like model} \textbf{(i)}$\,$\textbf{)} ~Let $w \in S^{N_{\omega}}-\{t_0\}$. By the definition of $S^{N_{\omega}}$, there is $k<\omega$ such that $w\in S^{N_{k'}} $ for all $k'\geq k$. So $w\in S^{N_{k}}$ and $w\in S^{N_{k+1}}$. Then, $w\in S^{N_{k}\Uparrow W_{k}} $ by the definition of $N_{k+1}$. Since $N_{k}\Uparrow W_{k}$ is a tree-like model with the root $t_0$, we get $t_0 R^+_{N_{k}\Uparrow W_{k}} w$, and so $t_0 R^+_{N_{\omega}} w$ by \textbf{(b)}, i.e., each non-$t_0$ state is accessible from $t_0$ in $N_{\omega}$.

 Now we check the uniqueness of $t_0$. On the contrary, assume that there is $t'_0\in S^{N_{\omega}}$ such that $t'_0\neq t_0$ and each non-$t'_0$ state is accessible from $t'_0$ in $N_{\omega}$. Since each non-$t_0$ state is accessible from $t_0$ in $N_{\omega}$, it holds that $t_0 R^+_{N_{\omega}} t_0$. By the definition of $R^{N_{\omega}}$, there is $k_1<\omega$ such that $t_0 R^+_{N_{k'_1}} t_0$ for all $k'_1\geq k_1$.
 So $t_0 R^+_{N_{k_1}} t_0$.
 This contradicts that $N_{k_1}$ is a tree-like model.

 \noindent\textbf{(}$\,$Definition~\ref{def:tree-like model} \textbf{(ii)-(iv)}$\,$\textbf{)} ~Similar to the proof for Definition~\ref{def:tree-like model} \textbf{(i)}.

 Below we check $ N_{\omega}\Uparrow W_i=^{W_{i}} N_i \Uparrow W_i$. Let $b\in A$ and $r\in Atom$. Firstly, by \textbf{(iv)}, we have
 \begin{equation}
N_{k}\Uparrow W_i=^{W_{i}} N_{i} \Uparrow W_i\quad\text{ for all } k>i .\; \label{eq:eb325}
\end{equation}

   On the one hand, $S^{N_i\Uparrow W_i}\subseteq S^{N_{\omega}}$  by \textbf{(a)} and $R_b^{N_i\Uparrow W_i}\subseteq R_b^{N_{\omega}}$  by \textbf{(b)}. Moreover, by (\ref{eq:eb325}), $ V^{N_{i}\Uparrow W_{i}}(r) -W_i \subseteq V^{N_{k}}(r) $ for all $ k>i$, and so $V^{N_{i}\Uparrow W_{i}}(r) -W_i\subseteq  V^{N_{\omega}} (r)$ by the definition of $V^{N_{\omega}}$. Further, as all these models are tree-like, it is easy to see that $S^{N_i\Uparrow W_i}\subseteq S^{N_{\omega}\Uparrow W_i}$, $R_b^{N_i\Uparrow W_i}\subseteq R_b^{N_{\omega}\Uparrow W_i}$ and $V^{N_i\Uparrow W_i}(r)-W_i \subseteq V^{N_{\omega}\Uparrow W_i}(r)$.

   On the other hand, by the definition of $N_{\omega}$, there is  $i <k_2<\omega$ such that $ S^{N_{\omega}\Uparrow W_i}\subseteq S^{N_{k'_2}}$,   $R_b^{N_{\omega}\Uparrow W_i}\subseteq R_b^{N_{k'_2}}$  and $V^{N_{\omega}\Uparrow W_i}(r)\subseteq V^{N_{k'_2}}(r)$ for all $k'_2\geq k_2$. Since all these models are tree-like, $ S^{N_{\omega}\Uparrow W_i}\subseteq S^{N_{k'_2}\Uparrow W_i}$, $R_b^{N_{\omega}\Uparrow W_i}\subseteq R_b^{N_{k'_2}\Uparrow W_i}$ and $V^{N_{\omega}\Uparrow W_i}(r) \subseteq V^{N_{k'_2}\Uparrow W_i}(r)$.
   Hence, due to (\ref{eq:eb325}), $ S^{N_{\omega}\Uparrow W_i}\subseteq S^{N_{i}\Uparrow W_i}$,   $R_b^{N_{\omega}\Uparrow W_i}\subseteq R_b^{N_i\Uparrow W_i}$ and $V^{N_{\omega}\Uparrow W_i}(r)-W_i \subseteq V^{N_{i}\Uparrow W_i}(r)$.

  Summarizing,  it follows that $ N_{\omega}\Uparrow W_i=^{W_{i}} N_{i} \Uparrow W_i$.     \\


Below, we show $N_{\omega},t_0 \models \nu q.\varphi$.
  Let $u \in V^{N_{\omega}} (q)$. So by \textbf{(d)}, either $u=t_0$ or $u \in W_j$ for some $j<\omega$. For the former, $W_{i} \subseteq  R^+_{N_{i}}(W_0)$ by \textbf{(iii)} and $N_{\omega} \Uparrow W_{i} =^{W_{i}} N_{i} \Uparrow W_{i}$ by \textbf{(e)} for each $i\geq 1$. Hence  $ \bigcup_{i\geq 1} W_{i} \subseteq  R^+_{N_{\omega}}(W_0)$. Then by \textbf{(d)},
  \begin{equation}
  W_0=V^{N_{\omega} \Uparrow (V^{N_{\omega}} (q)-\{t_0\})} (q)-\{t_0\}.\; \label{eq:eb3213}
\end{equation}
Due to $N_{\omega} \Uparrow W_{0} =^{W_{0}} N_{0} \Uparrow W_{0}$ by \textbf{(e)}, (\ref{eq:eb3214}), (\ref{eq:eb3215}) and  (\ref{eq:eb3213}), we get
  \[N_{\omega} \Uparrow (V^{N_{\omega}} (q)-\{t_0\}) =^{W_{0}} N_{0} \Uparrow (V^{N_{0}} (q)-\{t_0\}).\]
  Further, by Proposition~\ref{pro:soundness base2} (4), it follows from $N_0,t_0 \models \varphi $ that $N_{\omega},t_0 \models \varphi $. For the latter, $N_{j+1},u \models \varphi $ by \textbf{(vii)},  $N_{\omega} \Uparrow W_{j+1} =^{W_{j+1}} N_{j+1} \Uparrow W_{j+1}$ by \textbf{(e)} and $W_{j+1} \subseteq V^{N_{\omega}} (q)$ by \textbf{(d)}. These imply $N_{\omega},u \models \varphi $ due to \textbf{(vi)}.
  Thus
  \[V^{N_{\omega}} (q) \subseteq \lVert \varphi \rVert^{N_{\omega}} =\lVert \varphi \rVert^{N_{\omega}^{{[q \mapsto V^{N_{\omega}} (q)]}}}.\]
  Hence,
  \[t_0\in V^{N_{\omega}} (q) \in \{T \subseteq S^{N_{\omega}}:\; T \subseteq \lVert \varphi \rVert^{N_{\omega}^{{[q \mapsto T]}}} \}.\]
  By the semantics of $\nu q.\varphi$, $N_{\omega},t_0 \models \nu q.\varphi$.

  Finally, to complete the proof, it remains to show:\\


  \textbf{Claim 1.} $\text{ } \mathcal{Z}_{\omega}: M^{[q \mapsto T]},s \succeq_{(A_1,A_2)}^q N_{\omega},t_0$. \\

 Firstly, it follows from $s \mathcal{Z}_{0} t_0$ and $t_0 \in S^{N_0\Uparrow T_0}$ that $s \mathcal{Z}_{\omega} t_0$ due to  \textbf{(c)}. Let $v \mathcal{Z}_{\omega} v'$. By the definition of $\mathcal{Z}_{\omega}$,
  $\exists i < \omega \forall j \geq i\, ( v \mathcal{Z}_j v' )$. So $\forall j \geq i\, (v' \in S^{N_j})$. The condition\textbf{($\{q\}$-atoms)} holds trivially. We below check \textbf{(forth)} and \textbf{(back)}.

  \noindent \textbf{(forth)} $\text{ }$ Let $v R^M_b w$ and $b \in A-A_2$.
  Since $v \mathcal{Z}_{i+1} v'$ and $v R^M_b w$, we get $v' R_b^{N_{i+1}} w'$ and
  $w \mathcal{Z}_{i+1} w'$ for some $w' \in S^{N_{i+1}}$. In the following, we analyze three cases based on $v'\;(\in S^{N_{i+1}})$.

 \textbf{Case 1.} $\; v' \in S^{N_i\Uparrow T_i}- W_i$. By the definition of $N_{i+1}$ and $\mathcal{Z}_{i+1}$,
 we have $w'\in S^{N_i\Uparrow T_i} $, $v' R_b^{N_i\Uparrow T_i} w'$ and $w \mathcal{Z}_{i} w'$. Then, $v' R_b^{N_{\omega}} w'$ due to $v' R_b^{N_i\Uparrow T_i} w'$ and \textbf{(b)}, and $w \mathcal{Z}_{\omega} w'$ follows from $w'\in S^{N_i\Uparrow T_i} $, $w \mathcal{Z}_{i} w'$ and  \textbf{(c)}, as desired.

  \textbf{Case 2.} $\; v' \in W_i$. By (\ref{eq:eb322}), $T_{i+1} \subseteq R^+_{N_{i+1}}(W_i)$.
  Since $ v' \in W_i$, $v' R_b^{N_{i+1}} w'$ and $N_{i+1}$ is tree-like, it is evident that $w'\notin R^+_{N_{i+1}}(T_{i+1})$.
  So, $w' \in S^{N_{i+1}\Uparrow T_{i+1}}$ and $v' R_b^{N_{i+1}\Uparrow T_{i+1}} w'$. Further, due to $v' R_b^{N_{i+1}\Uparrow T_{i+1}} w'$ and  \textbf{(b)}, we get $v' R_b^{N_{\omega}} w'$, and due to $w' \in S^{N_{i+1}\Uparrow T_{i+1}}$ and $w \mathcal{Z}_{i+1} w'$,  by \textbf{(c)},
  we have $w \mathcal{Z}_{\omega} w'$.

  \textbf{Case 3.} $\; v' \in S^{N_{i+1}} - S^{N_i}$. If $w' \in S^{N_{i+1}\Uparrow T_{i+1}}$, then $v' R_b^{N_{i+1}\Uparrow T_{i+1}} w'$ which implies $v' R_b^{N_{\omega}} w'$ by \textbf{(b)}, and $w \mathcal{Z}_{\omega} w'$ due to  $w \mathcal{Z}_{i+1} w'$  by \textbf{(c)}.
  Otherwise, we get $w' \notin S^{N_{i+2}}$. Since $N_{i+1}$ is tree-like, from $v' R_b^{N_{i+1}} w'$ and $v' \in S^{N_{i+2}}$, it follows that $v' \in W_{i+1}$. Consequently,  the transition $v' \stackrel{b}{\rightarrow} w'$ will be removed for $N_{i+2}$ so that we have to search for another $b$-labelled transition outgoing from $v'$ in $N^{\omega}$ in order to match the transition $v \stackrel{b}{\rightarrow} w$ w.r.t. $\mathcal{Z}_{\omega}$. Since $v \mathcal{Z}_{i+2} v'$ and $v R^M_b w$, we have that $ v' R_b^{N_{i+2}} w''$ and $w \mathcal{Z}_{i+2} w''$ for some $w'' \in S^{N_{i+2}}$.
  Similar to the analysis for Case 2, due to $v' \in W_{i+1}$ and $ v' R_b^{N_{i+2}} w''$, we have $w'' \in S^{N_{i+2}\Uparrow T_{i+2}}$ and $v' R_b^{N_{i+2}\Uparrow T_{i+2}} w''$. Thus $v' R_b^{N_{\omega}} w''$ by \textbf{(b)}. Moreover, from $w'' \in S^{N_{i+2}\Uparrow T_{i+2}}$ and $w \mathcal{Z}_{i+2} w''$, it follows by \textbf{(c)} that $w \mathcal{Z}_{\omega} w''$, as desired.

 \noindent \textbf{(back)} $\text{ }$ Let $v' R_b^{N_{\omega}} w'$ and $b \in A-A_1$.
  By the definition of $R^{N_{\omega}}$, $\exists k < \omega \forall j \geq k (v' R_b^{N_{j}} w' )$. Let $h= max \{i,k \}$. So $v' R_b^{N_{h}} w'$ and $v' R_b^{N_{h+1}} w'$. Thus, we get $v' R_b^{N_{h}\Uparrow T_{h}} w'$ and $w' \in S^{N_{h}\Uparrow  T_{h}}$ by the construction of $N_{h+1}$. As
  $ v' R_b^{N_{h}} w'$ and $v \mathcal{Z}_{h} v'$, we have $v R^M_b w$ and $w \mathcal{Z}_{h} w'$ for some $w \in S^M$. Moreover, $w \mathcal{Z}_{\omega} w'$ due to $w \mathcal{Z}_{h} w'$, $w' \in S^{N_{h}\Uparrow T_{h}}$ and \textbf{(c)}.
  \end{proof}
\begin{lemma}\label{lemma:soundness CCRmu2}
$\models \mu q. \Ccropre \varphi \rightarrow \Ccropre \mu q. \varphi \;   $  whenever $\mu q. \varphi\in \text{\textit{\textbf{df}}}$ is satisfiable.
%
\end{lemma}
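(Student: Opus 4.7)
The plan is to dualise the transfinite approximation argument used for Lemma~\ref{lemma:soundness CCRnu2}, this time exploiting the well-foundedness of the least fixed point to perform transfinite induction on the rank. Fix a pointed model $(L^{*}, r^{*})$ with $L^{*}, r^{*} \models \mu q.\varphi$ (available by the satisfiability hypothesis) and, after updating $V^{L^{*}}$ at $r^{*}$ if necessary, arrange $r^{*} \in V^{L^{*}}(q)$; this leaves $\mu q.\varphi$ invariant since $q$ is bound. Let $U^{M}_{0} := \emptyset$, $U^{M}_{\alpha+1} := \lVert \Ccropre \varphi \rVert^{M^{[q \mapsto U^{M}_{\alpha}]}}$, and $U^{M}_{\lambda} := \bigcup_{\beta<\lambda} U^{M}_{\beta}$ be the Knaster--Tarski approximants, so that $\lVert \mu q. \Ccropre \varphi \rVert^{M} = \bigcup_{\alpha} U^{M}_{\alpha}$; by minimality, the least $\alpha$ with $s \in U^{M}_{\alpha}$ is always a successor. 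By transfinite induction on $\alpha$ I would prove the uniform statement: for every $(M', s')$ with $s' \in U^{M'}_{\alpha}$, $M', s' \models \Ccropre \mu q.\varphi$. The limit case is immediate.

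For the successor step $\alpha = \beta + 1$, we have $M^{[q \mapsto U^{M}_{\beta}]}, s \models \Ccropre \varphi$; note that $\varphi \in \text{\textbf{\textit{df}}}$ is forced by $\mu q.\varphi \in \text{\textbf{\textit{df}}}$ via clause~\textbf{(iv)} of Definition~\ref{def:df}. Applying Proposition~\ref{pro:soundness base2} gives a tree-like $(N_0, t_0)$ and an injective $\mathcal{Z}_0$ with $\mathcal{Z}_0: M^{[q \mapsto U^{M}_{\beta}]}, s \succeq^{q}_{(A_1,A_2)} N_0, t_0 \models \varphi$, $\mathcal{Z}_0^{-1}(V^{N_0}(q)) \subseteq U^{M}_{\beta}$, and (using $s \notin U^{M}_{\beta}$ together with item~(3) of the proposition) $t_0 \notin V^{N_0}(q)$. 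Set $T_0 := V^{N_0}(q) - \{t_0\}$ and $W_0 := V^{N_0 \Uparrow T_0}(q) - \{t_0\}$, the top layer of non-root $q$-states. To each $u \in W_0$ I attach a subtree $(L_u, r_u)$: if $u \in \pi_2(\mathcal{Z}_0)$, injectivity gives a unique $w_u \in U^{M}_{\beta}$ with $w_u \mathcal{Z}_0 u$, and the induction hypothesis supplies $\mathcal{Z}_u: M, w_u \succeq^{q}_{(A_1,A_2)} L_u, r_u \models \mu q.\varphi$, further arranged so that $r_u \in V^{L_u}(q)$ (harmless because the refinement is $q$-restricted and $q$ is bound in $\mu q.\varphi$); otherwise I take a fresh disjoint copy of $(L^{*}, r^{*})$ and $\mathcal{Z}_u := \emptyset$. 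With $N_0$ and all $L_u$'s pairwise disjoint, define $N := (N_0, t_0) \oplus_{W_0} \{(L_u, r_u)\}_{u \in W_0}$ as in Definition~\ref{def:Nmodel}, and $\mathcal{Z} := (\mathcal{Z}_0 \cup \bigcup_{u \in W_0} \mathcal{Z}_u) \cap (S^M \times S^N)$.

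Two claims remain. First, $\mathcal{Z}: M, s \succeq^{q}_{(A_1,A_2)} N, t_0$ is verified along the lines of the successor step of Lemma~\ref{lemma:soundness CCRnu2}; the one new remark is that for $u \in W_0 \setminus \pi_2(\mathcal{Z}_0)$, the incoming $N_0$-edge into $u$ must be labelled by an action in $A_1$ (otherwise \textbf{(back)} of $\mathcal{Z}_0$ would place $u$ in $\pi_2(\mathcal{Z}_0)$), which makes the relevant \textbf{(back)} clauses for $\mathcal{Z}$ vacuous. Second, and more delicate, is $N, t_0 \models \mu q.\varphi$. Since the $u$-generated submodel of $N$ is, after identifying $v_u = r_u$ with $u$, isomorphic to the $r_u$-generated submodel of $L_u$, Proposition~\ref{prop:bisi invariance} gives $W_0 \subseteq \lVert \mu q.\varphi \rVert^{N}$. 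Now set $N' := N^{[q \mapsto W_0]}$: a direct computation, using $t_0 \notin V^{N_0}(q)$ together with $R^{+}_N(W_0) = \bigcup_{u \in W_0} (S^{L_u} \setminus \{r_u\})$, yields $N' \Uparrow (V^{N'}(q)-\{t_0\}) =^{W_0} N_0 \Uparrow (V^{N_0}(q)-\{t_0\})$, so Proposition~\ref{pro:soundness base2}(4) delivers $N^{[q \mapsto W_0]}, t_0 \models \varphi$. Since $q$ occurs only positively in $\varphi$, monotonicity of $\lVert \varphi \rVert^{N^{[q \mapsto \cdot]}}$ together with $W_0 \subseteq \lVert \mu q.\varphi \rVert^{N}$ upgrades this to $t_0 \in \lVert \varphi \rVert^{N^{[q \mapsto \lVert \mu q.\varphi \rVert^{N}]}} = \lVert \mu q.\varphi \rVert^{N}$, i.e., $N, t_0 \models \mu q.\varphi$. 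Proposition~\ref{pro:q-bisi CC-fixed point invariance} then converts $\mathcal{Z}: M, s \succeq^{q}_{(A_1,A_2)} N, t_0 \models \mu q.\varphi$ into $M, s \models \Ccropre \mu q.\varphi$, closing the induction. The main obstacle, in my view, is locating the correct pre-fixed point witnessing $t_0 \in \lVert \mu q.\varphi \rVert^{N}$: the naive choice $\lVert \mu q.\varphi \rVert^{N}$ itself blocks Proposition~\ref{pro:soundness base2}(4) because its top layer from $t_0$ need not equal $W_0$; the fix is the minimal $V^{N'}(q) := W_0$, after which positivity of $q$ in $\varphi$ lifts the result to $\lVert \mu q.\varphi \rVert^{N}$ for free.
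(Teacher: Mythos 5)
Your proof is correct and follows essentially the same route as the paper's: transfinite induction on the Knaster--Tarski approximant rank, a single application of Proposition~\ref{pro:soundness base2} per step, grafting models of $\mu q.\varphi$ (from the induction hypothesis on $\pi_2(\mathcal{Z}_0)$-states, from satisfiability elsewhere) via the $\oplus_{W}$ construction of Definition~\ref{def:Nmodel}, the $W_0$-relabelling plus monotonicity plus pre-fixed-point argument for $N,t_0\models\mu q.\varphi$, and Proposition~\ref{pro:q-bisi CC-fixed point invariance} to conclude. Your explicit observation that the edge into a state of $W_0\setminus\pi_2(\mathcal{Z}_0)$ must carry an $A_1$-label (so the \textbf{(back)} clause is vacuous there) is a welcome detail the paper leaves implicit.
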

\begin{proof}
  By the inductive characterization idea of the least fixed point of monotone functions  (see, e.g.,~\citep{Arnold2001rudimentsmucalculus}), it holds that
  \[M,s  \models \mu q.\Ccropre \varphi \;\; \text{ iff } \;\; s \in \lVert \Ccropre \varphi \rVert_{\tau} \quad \text{ for some ordinal } \tau, \]
  where $\lVert \Ccropre \varphi \rVert_{\tau}$ is defined by

  $\lVert \Ccropre \varphi \rVert_0 \triangleq \emptyset$, and

  $\lVert \Ccropre \varphi \rVert_{\tau} \triangleq \lVert \Ccropre \varphi \rVert^{M^{[q \mapsto \bigcup_{\tau ' <\tau} \lVert \Ccropre \varphi \rVert_{\tau'}]}} $.\\
 Assume that $M,s \models \mu q. \Ccropre \varphi$ and $\mu q. \varphi\in \text{\textit{\textbf{df}}}$ is satisfiable. So $s \in \lVert \Ccropre \varphi \rVert_{\tau}$ for some least ordinal $\tau$. Clearly, $\tau \neq 0$. Next, it suffices to show that
  \[s \in \lVert \Ccropre \varphi \rVert_{\tau} \;\; \text{ implies } \;\;  M,s \models \Ccropre \mu q. \varphi.\]
We show this inductively over $\tau(>0)$. By Proposition~\ref{pro:q-bisi CC-fixed point invariance}, we intend to construct a pointed model which $q$-\emph{restricted}  $(A_1,A_2)$-\emph{refines} $(M,s)$ and satisfies $\eta q.\varphi$.

 \noindent\textbf{(Base case }$\tau =1$\textbf{)} ~Assume that $s\in\lVert \Ccropre \varphi \rVert_{1}$. Then $s\in \lVert \Ccropre \varphi \rVert^{M^{[q \mapsto \emptyset]}} $. Namely, $M^{[q \mapsto \emptyset]},s\models \Ccropre \varphi $.
By Proposition~\ref{pro:soundness base2}, for some tree-like model $N$ with the root $t$ and $\mathcal{Z}_{1}$, we have that  $\mathcal{Z}_{1}:M^{[q \mapsto \emptyset]},s \succeq_{(A_1,A_2)}^q N,t \models \varphi$,
  ~$\mathcal{Z}_{1}^{-1}(V^N(q)) \subseteq  V^{M^{[q \mapsto \emptyset]}} (q)=\emptyset$, $\,t \notin V^N(q)$ due to $s \notin V^{M^{[q \mapsto \emptyset]}} (q)$, and  $N\Uparrow V^N(q),t \models \varphi$. Let $W \triangleq S^{N\Uparrow V^N(q)} \cap V^N(q)$. Then $N\Uparrow V^N(q)=N\Uparrow W $, $  V^{N\Uparrow W}(q)=W$ and $\mathcal{Z}_{1}^{-1}(W) =\emptyset$. Furthermore, due to $N\Uparrow V^N(q),t \models \varphi$ and $N\Uparrow V^N(q)=N\Uparrow W $, we get $N\Uparrow W,t \models \varphi $.

  For each $u \in W $, as $\mu q. \varphi$ is satisfiable, we may choose arbitrarily and fix a tree-like model $N_u$ with the root $v_u$ such that $N_u,v_u \models \mu q. \varphi$.

  W.l.o.g., we assume that all the models in $\{N\} \cup \{N_u\}_{u \in W}$ are pairwise disjoint. Let
 $N_{1}  \triangleq (N,t) \oplus_{W} \{(N_u,v_u) \}_{u \in W}$ (See Definition~\ref{def:Nmodel}). Clearly,





  \begin{equation}
  W \subseteq  \lVert \mu q.\varphi \rVert^{N_1}. \; \label{eq:eb3-2-6}
  \end{equation}
  Although, for each $u \in W $, we do not know whether the assignments of the propositional letters $Atom-\{q\}$ in $N_u$ at $v_u$ agree with the ones in $N$ at $u$ or not, it is fortunate that except this, $N_1^{[q \mapsto W]} \Uparrow W$ and $N \Uparrow W$ coincide, so that by Proposition~\ref{pro:soundness base2} (4), due to $  V^{N\Uparrow W}(q)=W$, $t \notin V^N(q)$ and $N\Uparrow W,t \models \varphi$, it still holds that
  \[N_1^{[q \mapsto W]},\,t \models \varphi.\]
  So by the monotonicity of $\lambda X.\,\lVert\varphi\rVert^{N_1^{[q \mapsto X]}}$, it follows from (\ref{eq:eb3-2-6}) that
  \[N_1^{[q \mapsto \lVert \mu q.\varphi \rVert^{N_1}]},\,t \models \varphi.\]
  This, together with $\models \varphi [\mu q. \varphi / q] \rightarrow \mu q. \varphi$, implies $N_1,t \models \mu q.\varphi$.

  Since $\mathcal{Z}_{1}^{-1}(W) =\emptyset$, it should be evident that
  \[\mathcal{Z}_1\cap (S^M \times S^{N_1}):  M^{[q \mapsto \emptyset]},s \succeq_{(A_1,A_2)}^q N_1,t.\]
  Then  $M,s \succeq_{(A_1,A_2)}^q N_1,t$. Hence $M,s \models \Ccropre \mu q. \varphi$ by Proposition~\ref{pro:q-bisi CC-fixed point invariance}.

\noindent \textbf{(Induction step }$\tau>1$\textbf{)} ~Suppose that the statement holds for every $\tau' < \tau$.
Let $M_{\tau} \triangleq M^{[q \mapsto \bigcup_{\tau ' <\tau} \lVert \Ccropre \varphi \rVert_{\tau '}]}$. So $s \notin V^{M_{\tau}} (q)$ as $\tau$ is the least. Clearly $M_{\tau},s\,\underline{\leftrightarrow}^q \,M,s$.

  Since $s \in \lVert \Ccropre \varphi \rVert_{\tau}$, $M_{\tau},s \models \Ccropre \varphi$. Thus, by Proposition~\ref{pro:soundness base2}, we get a tree-like model $N$ with the root $t$ and an \emph{injective} relation $\mathcal{Z}_{\tau}$ from $S^{M}$ to $S^{N}$ such that $\mathcal{Z}_{\tau}:M_{\tau},s \succeq_{(A_1,A_2)}^q N,t \models \varphi$,
  ~$\mathcal{Z}_{\tau}^{-1}(V^N(q)) \subseteq  V^{M_{\tau}} (q)$, $\,t \notin V^N(q)$ due to $s \notin V^{M_{\tau}} (q)$, and  $N\Uparrow V^N(q),t \models \varphi$. Let $W \triangleq S^{N\Uparrow V^N(q)} \cap V^N(q)$. Then $\mathcal{Z}_{\tau}^{-1}(W)  \subseteq V^{M_{\tau}} (q)$ and   $N\Uparrow V^N(q)=N\Uparrow W $. Thus, we get $N\Uparrow W,t \models \varphi $.


   For each $u \in W \cap \pi_2(\mathcal{Z}_{\tau})$, as $\mathcal{Z}_{\tau}^{-1}(W)  \subseteq V^{M_{\tau}} (q)$ and $\mathcal{Z}_{\tau}$ is an \emph{injective} relation from $S^{M}$ to $S^{N}$,
  there exists an \emph{unique} $w_u \in V^{M_{\tau}} (q)$ such that $w_u \mathcal{Z}_{\tau} u$. Since $w_u \in V^{M_{\tau}} (q)$, we have
  \[w_u \in \lVert \Ccropre \varphi \rVert_{\iota} \quad\text{ for some ordinal } \iota < \tau.\]
  By the induction hypothesis, immediately,
  \[M,w_u \models \Ccropre \mu q. \varphi.\]
   We choose  and fix a tree-like model $N_u$ with the root $v_u$ and $\mathcal{Z}_u$ such that
     \begin{equation}
 \mathcal{Z}_u: M,w_u \succeq_{(A_1,A_2)} N_u,v_u \models \mu q. \varphi. \; \label{eq:eb3-2-7}
  \end{equation}

  For each $u \in W - \pi_2(\mathcal{Z}_{\tau})$, since $\mu q. \varphi$ is satisfiable, we may choose arbitrarily and fix a tree-like model $N_u$ with the root $v_u$ such that $N_u,v_u \models \mu q. \varphi$.

 We w.l.o.g. suppose that all the models in $\{N\} \cup \{N_u\}_{u \in W}$ are pairwise disjoint. Let
  $N_{1}  \triangleq (N,t) \oplus_{W} \{(N_u,v_u) \}_{u \in W}$ (See Definition~\ref{def:Nmodel}). Clearly, $  W \subseteq  \lVert \mu q.\varphi \rVert^{N_1}$. Further, as in \textbf{(Base case)}, we can get $N_1,t \models \mu q.\varphi$.


%



For each $u \in W \cap \pi_2(\mathcal{Z}_{\tau})$, since $w_u \mathcal{Z}_{\tau} u$ and (\ref{eq:eb3-2-7}), it is easy to see that $ \mathbf{V}^{N}(u)-\{q\}=\mathbf{V}^{N_u}(v_u)-\{q\}$, and so $ \mathbf{V}^{N}(u)-\{q\}=\mathbf{V}^{N_{1}}(u)-\{q\}$ by the definition of $N_{1}$.

  Due to $M_{\tau},s\,\underline{\leftrightarrow}^q \,M,s$ and $\mathcal{Z}_{\tau}:  M_{\tau},s \succeq_{(A_1,A_2)}^q N,t$, it follows that $\mathcal{Z}_{\tau}:  M,s \succeq_{(A_1,A_2)}^q N,t$. Put $\textstyle \mathcal{Z} \triangleq (\mathcal{Z}_{\tau}  \cup \bigcup_{u \in W \cap \pi_2(\mathcal{Z}_{\tau})} \mathcal{Z}_u )\cap (S^{M} \times S^{N_1})$. Similar to the proof for \textbf{(i)} in the induction step \textbf{(}$n=i+1$\textbf{)} in the proof of Lemma~\ref{lemma:soundness CCRnu2}, it  can be shown that $\mathcal{Z}:  M,s \succeq_{(A_1,A_2)}^q N_1,t$. Thus, $M,s \models \Ccropre \mu q. \varphi$ follows from $N_1,t \models \mu q.\varphi$ by Proposition~\ref{pro:q-bisi CC-fixed point invariance}.
\end{proof}
Now we reach the soundness of the axiom system for CCRML$^{\mu}$.
\begin{theorem}[Soundness]\label{theorem:soundness}
  For all $\psi \in \mathcal{L}^{\mu}_{CC}$, $\;\vdash \psi$ implies $\models \psi$.
\end{theorem}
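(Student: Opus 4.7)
The plan is a standard induction on the length of the derivation of $\psi$: I would verify that each axiom schema is valid and each rule preserves validity. A first block of cases I would dispatch by appealing to prior work. The axiom schemata \textbf{Prop}, \textbf{K}, \textbf{CCR}, \textbf{CCRp1}, \textbf{CCRp2}, \textbf{CCRD}, \textbf{CCRKco1}, \textbf{CCRKco2}, \textbf{CCRKcontra}, \textbf{CCRKbis}, \textbf{CCRKconj} and the rules \textbf{MP}, \textbf{NK}, \textbf{NCCR} are carried over from CCRML without change, and their soundness is already established in~\citep{huilixing2018CCrefinementmodallogic}; the axiom \textbf{F}$\boldsymbol{1}$ and the rule \textbf{F}$\boldsymbol{2}$ are the usual Knaster--Tarski characterization of least fixed points and are sound in any modal $\mu$-calculus~\citep{Arnold2001rudimentsmucalculus}.

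For $\mathbf{CCR}\mathbf{in}$ I would argue directly: if $\vdash_{\text{\textit{\textbf{K}}}^{\mu}}\beta\leftrightarrow\bot$, the soundness of $\text{\textit{\textbf{K}}}^{\mu}$~\citep{Walukiewicz2000modalmucalcusaxiom} forces $\beta$ to be unsatisfiable in every pointed model, hence in particular in every CC-refinement of a given pointed model, so $\exists_{(a_1,a_2)}\beta\leftrightarrow\bot$ is valid. The two biconditionals $\mathbf{CCR}^{\boldsymbol{\nu}}$ and $\mathbf{CCR}^{\boldsymbol{\mu}}$ I would split into their two implications. The left-to-right direction $\exists_{(a_1,a_2)}\eta q.\beta\to\eta q.\exists_{(a_1,a_2)}\beta$ is delivered, with no structural restriction on $\beta$, by Lemma~\ref{lemma:soundness CCRmu1}. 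For the right-to-left direction, Lemma~\ref{lemma:soundness CCRnu2} gives the $\nu$-case whenever $\nu q.\varphi\in\text{\textit{\textbf{df}}}$, and Lemma~\ref{lemma:soundness CCRmu2} gives the $\mu$-case whenever $\mu q.\varphi\in\text{\textit{\textbf{df}}}$ is satisfiable.

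To bridge the gap between those df-restricted lemmas and the arbitrary $\beta\in\mathcal{L}^{\mu}_{K}$ allowed by the axiom schemata, I would invoke Proposition~\ref{prop:disjunctive formula with fixed point to modal formula equivalent} to replace the relevant subformula by a semantically equivalent disjunctive one, and then use the fact that this equivalence propagates both through $\eta q.\,\cdot\,$ (the Knaster--Tarski fixed point depends only on the semantic function $\lambda T.\lVert\,\cdot\,\rVert^{M^{[q\mapsto T]}}$) and through $\exists_{(a_1,a_2)}\,\cdot\,$ (the existence of a CC-refinement satisfying a formula depends only on that formula's semantics). The satisfiability condition attached to $\mathbf{CCR}^{\boldsymbol{\mu}}$ is precisely what Lemma~\ref{lemma:soundness CCRmu2} asks for, while the complementary case $\vdash_{\text{\textit{\textbf{K}}}^{\mu}}\mu q.\beta\leftrightarrow\bot$ is already subsumed under $\mathbf{CCR}\mathbf{in}$.

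The genuine work has already been packaged into Lemma~\ref{lemma:soundness CCRnu2} and Lemma~\ref{lemma:soundness CCRmu2}, whose proofs supply the intricate CC-refined tree-like models (a limit $N_{\omega}$ of successively patched tree-like models in the $\nu$-case, and a transfinite induction on the fixed-point approximation in the $\mu$-case). Given those lemmas, the present theorem reduces essentially to bookkeeping; the only delicate step I expect to watch is ensuring that the df-reduction via Proposition~\ref{prop:disjunctive formula with fixed point to modal formula equivalent} is applied at exactly the level at which the two lemmas become directly applicable.
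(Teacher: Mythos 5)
Your proposal is correct and follows essentially the same route as the paper: soundness of the CCRML axioms and rules is inherited from~\citep{huilixing2018CCrefinementmodallogic}, \textbf{F}$\boldsymbol{1}$/\textbf{F}$\boldsymbol{2}$ and $\mathbf{CCR}\mathbf{in}$ are handled directly, and $\mathbf{CCR}^{\boldsymbol{\nu}}$, $\mathbf{CCR}^{\boldsymbol{\mu}}$ are discharged by Lemma~\ref{lemma:soundness CCRmu1} together with Lemmas~\ref{lemma:soundness CCRnu2} and~\ref{lemma:soundness CCRmu2}. You are in fact slightly more careful than the paper's own proof, which silently applies the \textbf{\textit{df}}-restricted lemmas to arbitrary $\beta\in\mathcal{L}^{\mu}_{K}$; your explicit bridging via Proposition~\ref{prop:disjunctive formula with fixed point to modal formula equivalent} and the semantic invariance of $\eta q.$ and $\exists_{(a_1,a_2)}$ under equivalence is exactly the missing glue.
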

\begin{proof}
  As usual, it suffices to prove that all the axiom schemata are valid and all the rules are sound. The axiom schemata  and rules from the axiom system for CCRML in~\citep{huilixing2018CCrefinementmodallogic}, do not focus on fixed point operators. Hence by the same proof as in~\citep{huilixing2018CCrefinementmodallogic}, these axiom schemata are valid and these rules sound. The axiom schema $\mathbf{F} \boldsymbol{1}$ is valid by the semantics of fixed points, and the rule $\mathbf{F} \boldsymbol{2}$ is sound by the semantics of the least fixed points. It is trivial to check that the axiom schema $\mathbf{CCRin}$ is valid. Due to Lemma~\ref{lemma:soundness CCRmu1} and Lemma~\ref{lemma:soundness CCRnu2}, the axiom schema $\mathbf{CCR}^{\boldsymbol{\nu}}$ is valid. From Lemma~\ref{lemma:soundness CCRmu1} and Lemma~\ref{lemma:soundness CCRmu2}, it follows that the axiom schema $\mathbf{CCR}^{\boldsymbol{\mu}}$ is valid.
\end{proof}
\subsection{Completeness}\label{subsec:mucompleteness}
This subsection devotes itself to establishing the completeness of the axiom system CCRML$^{\mu}$. This follows by the same method as in~\citep{Bozzeli2014refinementmodallogic}. By this method, the completeness of CCRML has been gotten in~\citep{huilixing2018CCrefinementmodallogic}. We will prove that every $\mathcal{L}^{\mu}_{CC}$-formula is provably equivalent to a \textit{\textbf{K}}$^{\mu}$-formula, which brings the completeness of CCRML$^{\mu}$ based on the completeness of \textit{\textbf{K}}$^{\mu}$.

We start with several general statements as the preparations for the reduction argument.
\begin{proposition}\label{prop:substitutionofequivalents}
  Let $\varphi_1$, $\varphi_2$, $\psi \in \mathcal{L}^{\mu}_{CC}$ and $p \in Atom$ such that there is no sub-formula of the form $\eta p.\alpha$ in $\psi$. Then
  \[\vdash \varphi_1 \leftrightarrow \varphi_2 \text{ implies } \vdash \psi [\varphi_1 / p] \leftrightarrow \psi [\varphi_2 / p]. \]
\end{proposition}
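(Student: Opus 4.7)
The plan is to proceed by structural induction on $\psi$, holding $\vdash \varphi_1 \leftrightarrow \varphi_2$ fixed throughout and writing $\psi_i$ for $\psi[\varphi_i/p]$. For the atomic case, if $\psi = p$ then $\psi_i = \varphi_i$ and the claim is the hypothesis itself, while if $\psi = r \neq p$ then $\psi_1 = \psi_2 = r$ and the equivalence is an instance of $\mathbf{Prop}$. The Boolean cases $\psi = \neg\psi'$ and $\psi = \psi_1' \wedge \psi_2'$ reduce to the induction hypothesis by propositional reasoning via $\mathbf{Prop}$ and $\mathbf{MP}$. For the modal case $\psi = \Box_b\psi'$, I would extract both directions of the IH $\vdash \psi_1' \leftrightarrow \psi_2'$, apply $\mathbf{NK}$ and $\mathbf{K}$ to each, and combine with $\mathbf{MP}$ to obtain $\vdash \Box_b\psi_1' \leftrightarrow \Box_b\psi_2'$.

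For the CC-refinement case $\psi = \exists_{(A_1,A_2)}\psi'$, the first move is to invoke $\mathbf{CCRD}$ and rewrite the general $(A_1,A_2)$-quantifier as an iterated composition of singleton-pair quantifiers $\exists_{(a_1,a_2)}$. It then suffices to prove the single-pair congruence ``$\vdash \alpha \leftrightarrow \beta$ implies $\vdash \exists_{(a_1,a_2)}\alpha \leftrightarrow \exists_{(a_1,a_2)}\beta$''; dualizing, this reduces to $\vdash \forall_{(a_1,a_2)}\neg\beta \to \forall_{(a_1,a_2)}\neg\alpha$ together with its symmetric counterpart, both obtainable from the IH by propositional reasoning, $\mathbf{NCCR}$, and $\mathbf{CCR}$. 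Iterating this congruence through the decomposed quantifier block closes the case.

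The fixed-point case $\psi = \eta q.\psi'$ is the principal step. The proviso that $\psi$ contains no subformula of the form $\eta p.\gamma$ forces $q \neq p$, so the substitution commutes with the binder, $\psi_i = \eta q.(\psi'[\varphi_i/p])$, and positivity of $q$ in the body is preserved. The IH on the strictly smaller $\psi'$ gives $\vdash \psi'[\varphi_1/p] \leftrightarrow \psi'[\varphi_2/p]$, after which what remains is the fixed-point congruence principle: $\vdash \alpha \leftrightarrow \beta$ with $q$ positive in both implies $\vdash \eta q.\alpha \leftrightarrow \eta q.\beta$. I would derive this for $\mu$ (and pass to $\nu$ via the definition $\nu q.\varphi \triangleq \neg\mu q.\neg\varphi[\neg q/q]$) by setting $\chi = \mu q.\beta$, invoking $\mathbf{F}\boldsymbol{1}$ to obtain $\vdash \beta[\chi/q] \to \chi$, upgrading $\vdash \alpha \leftrightarrow \beta$ to $\vdash \alpha[\chi/q] \to \beta[\chi/q]$ via a positive-context substitution lemma, chaining to $\vdash \alpha[\chi/q] \to \chi$, and closing with $\mathbf{F}\boldsymbol{2}$ to conclude $\vdash \mu q.\alpha \to \mu q.\beta$; symmetry yields the converse.

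The main obstacle is precisely the positive-context substitution lemma just invoked: from $\vdash \alpha \to \beta$ one must deduce $\vdash \alpha[\chi/q] \to \beta[\chi/q]$ whenever $q$ occurs only positively in both. This step cannot be shortcut via uniform substitution, which is explicitly unsound in CCRML$^{\mu}$. I would establish it as a companion structural induction on $\alpha$ and $\beta$ that treats $\chi$ as an opaque parameter, propagating the implication through each constructor using the matching congruence ingredient---$\mathbf{K}$ and $\mathbf{NK}$ for boxes, $\mathbf{CCR}$, $\mathbf{NCCR}$, and $\mathbf{CCRD}$ for CC-refinement operators, and $\mathbf{F}\boldsymbol{1}$ together with $\mathbf{F}\boldsymbol{2}$ for nested fixed points. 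Holding $\chi$ fixed during this auxiliary induction avoids any circular dependence on Proposition~\ref{prop:substitutionofequivalents} itself, and positivity is exactly what licenses passing $\to$ (rather than requiring $\leftrightarrow$) through each syntactic layer.
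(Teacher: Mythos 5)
Your overall strategy (structural induction on $\psi$) is exactly what the paper does --- its entire proof is the single line ``Proceed by induction on $\psi$'' --- and your atomic, Boolean, modal and CC-refinement cases are all correct; in particular, handling $\exists_{(A_1,A_2)}$ via $\mathbf{CCRD}$ followed by the single-pair congruence derived from $\mathbf{NCCR}$ and $\mathbf{CCR}$ is the right move. The gap is in the fixed-point case, specifically in the ``positive-context substitution lemma'' you invoke and in how you propose to prove it. The statement ``from $\vdash \alpha \to \beta$ deduce $\vdash \alpha[\chi/q] \to \beta[\chi/q]$'' for \emph{arbitrary} $\alpha,\beta$ is a restricted form of uniform substitution (substitution into a provable implication), and your plan to establish it ``by a companion structural induction on $\alpha$ and $\beta$, propagating the implication through each constructor'' is not well-founded: two formulas related only by a provable implication need not have matching outermost constructors (e.g.\ $\vdash (r\wedge\Box_b s)\to(r\vee\diamondsuit_b\top)$), so there is nothing to propagate through, and the only general alternative --- replaying the proof of $\vdash\alpha\to\beta$ under $[\chi/q]$ --- is precisely the uniform substitution the paper flags as unsound (it breaks on instances of $\mathbf{CCRp1}$/$\mathbf{CCRp2}$ taken at the atom $q$).

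The repair is to exploit the fact that in your application $\alpha$ and $\beta$ are not arbitrary but share the skeleton $\psi'$, namely $\alpha=\psi'[\varphi_1/p]$ and $\beta=\psi'[\varphi_2/p]$. Strengthen the proposition to simultaneous substitutions: if $\sigma_1,\sigma_2$ map a finite set $P$ of atoms (none bound in $\psi$) to formulas with $\vdash\sigma_1(r)\leftrightarrow\sigma_2(r)$ for each $r\in P$, then $\vdash\sigma_1(\psi)\leftrightarrow\sigma_2(\psi)$. In the case $\psi=\mu q.\psi'$ set $\chi\triangleq\sigma_2(\mu q.\psi')$, extend both substitutions by $q\mapsto\chi$ (the extra premise $\vdash\chi\leftrightarrow\chi$ is trivial), and apply the induction hypothesis to the \emph{strictly smaller} $\psi'$ to obtain exactly $\vdash\sigma_1(\psi')[\chi/q]\to\sigma_2(\psi')[\chi/q]$; then $\mathbf{F}\boldsymbol{1}$ and $\mathbf{F}\boldsymbol{2}$ close the direction $\vdash\sigma_1(\mu q.\psi')\to\sigma_2(\mu q.\psi')$ just as you describe, and symmetry gives the converse. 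You should also state explicitly that substitution is capture-avoiding, i.e.\ the bound variables of $\psi$ are renamed away from the atoms occurring in $\varphi_1,\varphi_2$; otherwise $q$ occurring in $\varphi_i$ can destroy the positivity of $q$ in $\psi'[\varphi_i/p]$ and even the well-formedness of $\mu q.(\psi'[\varphi_i/p])$.
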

\begin{proof}
  Proceed by induction on $\psi$.
\end{proof}
\begin{proposition}\label{prop:axiomcompletetheorem}
$\;$

  $(1) \;\; \vdash \forall_{(a_1,a_2)} (\varphi \wedge \psi) \leftrightarrow \forall_{(a_1,a_2)} \varphi \wedge \forall_{(a_1,a_2)} \psi$.

  $(2) \;\; \vdash \exists_{(a_1,a_2)} (\varphi \vee \psi) \leftrightarrow \exists_{(a_1,a_2)} \varphi \vee \exists_{(a_1,a_2)} \psi$.

  $(3) \;\; \vdash \forall_{(a_1,a_2)} \varphi \vee \forall_{(a_1,a_2)} \psi \rightarrow \forall_{(a_1,a_2)} (\varphi \vee \psi)$.

  $(4) \;\; \vdash \exists_{(a_1,a_2)}  (\varphi \wedge \psi) \rightarrow \exists_{(a_1,a_2)} \varphi \wedge \exists_{(a_1,a_2)} \psi$.
\end{proposition}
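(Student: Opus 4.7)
The plan is to treat $\forall_{(a_1,a_2)}$ as a normal-style box modality: the axiom schema \textbf{CCR} together with the necessitation rule \textbf{NCCR} give us exactly the ingredients needed to push $\forall_{(a_1,a_2)}$ through propositional implications, so all four claims reduce to routine K-style derivations, with (2) and (4) obtained as duals of (1) and (3) via the abbreviation $\exists_{(a_1,a_2)}\varphi \equiv \neg\forall_{(a_1,a_2)}\neg\varphi$.

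For (1), the left-to-right direction starts from the tautologies $\varphi\wedge\psi\to\varphi$ and $\varphi\wedge\psi\to\psi$; applying \textbf{NCCR} yields $\forall_{(a_1,a_2)}(\varphi\wedge\psi\to\varphi)$ and its $\psi$-twin, and then \textbf{CCR} plus \textbf{MP} gives $\forall_{(a_1,a_2)}(\varphi\wedge\psi)\to\forall_{(a_1,a_2)}\varphi$ and similarly for $\psi$; conjoining by \textbf{Prop} and \textbf{MP} produces the forward implication. For right-to-left, I start from the propositional tautology $\varphi\to(\psi\to\varphi\wedge\psi)$, apply \textbf{NCCR} and then \textbf{CCR} once to obtain $\forall_{(a_1,a_2)}\varphi\to\forall_{(a_1,a_2)}(\psi\to\varphi\wedge\psi)$, and apply \textbf{CCR} a second time together with propositional reasoning to rearrange this into $\forall_{(a_1,a_2)}\varphi\wedge\forall_{(a_1,a_2)}\psi\to\forall_{(a_1,a_2)}(\varphi\wedge\psi)$.

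For (3), the tautologies $\varphi\to\varphi\vee\psi$ and $\psi\to\varphi\vee\psi$ give, via \textbf{NCCR} and \textbf{CCR}, the two implications $\forall_{(a_1,a_2)}\varphi\to\forall_{(a_1,a_2)}(\varphi\vee\psi)$ and $\forall_{(a_1,a_2)}\psi\to\forall_{(a_1,a_2)}(\varphi\vee\psi)$, from which the disjunctive premise follows by \textbf{Prop}.

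Items (2) and (4) are then obtained by contraposition. For (2), I substitute $\neg\varphi$ and $\neg\psi$ into (1) to get $\vdash \forall_{(a_1,a_2)}(\neg\varphi\wedge\neg\psi)\leftrightarrow\forall_{(a_1,a_2)}\neg\varphi\wedge\forall_{(a_1,a_2)}\neg\psi$; noting $\vdash \neg\varphi\wedge\neg\psi\leftrightarrow\neg(\varphi\vee\psi)$, I use Proposition~\ref{prop:substitutionofequivalents} (applied inside the $\forall_{(a_1,a_2)}$ on a fresh atom not bound by any $\eta p.\alpha$) to replace the argument, then take the contrapositive and unfold $\exists_{(a_1,a_2)}$ as $\neg\forall_{(a_1,a_2)}\neg$. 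For (4), I dualise (3) in the same way, starting from the $\neg\varphi,\neg\psi$-instance $\forall_{(a_1,a_2)}\neg\varphi\vee\forall_{(a_1,a_2)}\neg\psi\to\forall_{(a_1,a_2)}(\neg\varphi\vee\neg\psi)$, rewriting $\neg\varphi\vee\neg\psi$ as $\neg(\varphi\wedge\psi)$, and taking contrapositives. No step should pose any real obstacle; the only thing to check carefully is the applicability of Proposition~\ref{prop:substitutionofequivalents}, which is fine here because the substitution site lies directly under a $\forall_{(a_1,a_2)}$ (or $\exists_{(a_1,a_2)}$) and can be introduced via a fresh propositional variable that never appears inside a $\mu$- or $\nu$-binder.
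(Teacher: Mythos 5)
Your derivation is correct and is exactly the routine K-style argument the paper has in mind when it dismisses this proposition with ``Trivially'': the schemata \textbf{Prop}, \textbf{CCR} and the rule \textbf{NCCR} make $\forall_{(a_1,a_2)}$ behave like a normal box for these purposes, and your dualization of (1) and (3) into (2) and (4) only uses schema instantiation (not the unsound uniform-substitution rule), so the non-normality of CCRML$^{\mu}$ causes no problem. Nothing further is needed.
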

\begin{proof}
Trivially.
\end{proof}
\begin{proposition}[\citep{huilixing2018CCrefinementmodallogic}]\label{prop:axiomcompletetheorem2}
  For any $\alpha \in \mathcal{L}_p$, we have

  $ (1) \;\; \vdash \forall_{(a_1,a_2)}  \alpha \leftrightarrow \alpha$

  $ (2) \;\; \vdash \exists_{(a_1,a_2)}  \alpha \leftrightarrow \alpha$
\end{proposition}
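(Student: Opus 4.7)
The plan is to prove (1) and (2) simultaneously by structural induction on $\alpha \in \mathcal{L}_p$, which is built from atoms by $\neg$ and $\wedge$. Before starting the induction I would record two derived facts that are used throughout. First, a congruence rule: from rule \textbf{NCCR} and axiom \textbf{CCR} one obtains, for all $\varphi,\psi$, that $\vdash \varphi \leftrightarrow \psi$ implies $\vdash \forall_{(a_1,a_2)} \varphi \leftrightarrow \forall_{(a_1,a_2)} \psi$, and hence, via the definition $\forall_{(a_1,a_2)} \chi \triangleq \neg \exists_{(a_1,a_2)} \neg \chi$, also $\vdash \exists_{(a_1,a_2)} \varphi \leftrightarrow \exists_{(a_1,a_2)} \psi$. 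Second, a duality: unfolding the same definition and rewriting with $\neg\neg\chi \leftrightarrow \chi$ gives $\vdash \exists_{(a_1,a_2)} \varphi \leftrightarrow \neg \forall_{(a_1,a_2)} \neg \varphi$.

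For the base case $\alpha = r \in Atom$, part (1) is \textbf{CCRp1} verbatim. For part (2), I apply \textbf{CCRp2} to obtain $\vdash \forall_{(a_1,a_2)} \neg r \leftrightarrow \neg r$; negating both sides and invoking the duality yields $\vdash \exists_{(a_1,a_2)} r \leftrightarrow r$. For the negation case $\alpha = \neg\beta$, the induction hypothesis supplies (1) and (2) for $\beta$. Then $\vdash \forall_{(a_1,a_2)} \neg\beta \leftrightarrow \neg \exists_{(a_1,a_2)} \neg\neg\beta \leftrightarrow \neg \exists_{(a_1,a_2)} \beta \leftrightarrow \neg\beta$, using the definition, the congruence rule with $\vdash \neg\neg\beta \leftrightarrow \beta$, and IH(2); a symmetric chain using IH(1) gives part (2).

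For the conjunction case $\alpha = \beta \wedge \gamma$, part (1) is immediate by chaining Proposition~\ref{prop:axiomcompletetheorem}(1) with IH(1) on each conjunct: $\vdash \forall_{(a_1,a_2)}(\beta \wedge \gamma) \leftrightarrow \forall_{(a_1,a_2)} \beta \wedge \forall_{(a_1,a_2)} \gamma \leftrightarrow \beta \wedge \gamma$. For part (2), the direction $\vdash \exists_{(a_1,a_2)}(\beta \wedge \gamma) \to \beta \wedge \gamma$ follows at once from Proposition~\ref{prop:axiomcompletetheorem}(4) together with IH(2).

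The main obstacle is the converse direction $\vdash \beta \wedge \gamma \to \exists_{(a_1,a_2)}(\beta \wedge \gamma)$, since the missing converse of Proposition~\ref{prop:axiomcompletetheorem}(4) is not directly available. The plan is to strengthen the simultaneous inductive statement to carry, in addition to (1) and (2), the two ``T-like'' inclusions $\vdash \forall_{(a_1,a_2)} \alpha \to \alpha$ and $\vdash \alpha \to \exists_{(a_1,a_2)} \alpha$. At literals these are immediate from \textbf{CCRp1} and \textbf{CCRp2}; they transfer through $\neg$ by exactly the duality argument used above; and they transfer through $\wedge$ by combining Proposition~\ref{prop:axiomcompletetheorem}(1), Proposition~\ref{prop:axiomcompletetheorem}(4), and the equivalences established in (1) for $\beta \wedge \gamma$ itself, as carried out in \citep{huilixing2018CCrefinementmodallogic}. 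The delicate point I expect to spend most effort on is ensuring that this strengthened simultaneous induction closes through the conjunction case without circularity; once it does, the converse direction needed for (2) at $\beta \wedge \gamma$ falls out from (1) for $\beta \wedge \gamma$ together with the reflexivity inclusion $\vdash \forall_{(a_1,a_2)}(\beta \wedge \gamma) \to \exists_{(a_1,a_2)}(\beta \wedge \gamma)$ just propagated.
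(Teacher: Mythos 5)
The paper does not actually prove this proposition; it is imported verbatim from the cited CCRML paper, so there is no in-paper argument to compare against. Judged on its own terms, your induction is fine for part (1), for the base and negation cases of part (2), and for the forward half of (2) at conjunctions, but the step you yourself flag as delicate is a genuine gap: your route to $\vdash \beta \wedge \gamma \to \exists_{(a_1,a_2)}(\beta \wedge \gamma)$ goes through the seriality-like inclusion $\vdash \forall_{(a_1,a_2)}(\beta \wedge \gamma) \to \exists_{(a_1,a_2)}(\beta \wedge \gamma)$, which is not an axiom, and the only justification you offer for it is ``propagation'' of the two T-like inclusions --- but those are just the two halves of (1) and (2), so for the compound formula $\beta\wedge\gamma$ the inclusion you need is exactly equivalent to the instance of (2) you are trying to prove. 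As written, the induction does not close; it is circular at precisely the point you identified.

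The gap is fixable with two ingredients you did not name. First, $\vdash \exists_{(a_1,a_2)} \top$ is derivable without circularity: either from \textbf{CCRKconj} with $B=\emptyset$, or from $\top \leftrightarrow r \vee \neg r$, Proposition~\ref{prop:axiomcompletetheorem}(2), and the literal cases (note that \textbf{CCRp1} unfolds \emph{literally} to $\neg\exists_{(a_1,a_2)}\neg r \leftrightarrow r$, i.e.\ $\vdash \exists_{(a_1,a_2)}\neg r \leftrightarrow \neg r$, with no double-negation residue). Second, the standard \textbf{K}-theorem $\vdash \forall_{(a_1,a_2)}\chi \wedge \exists_{(a_1,a_2)}\psi \to \exists_{(a_1,a_2)}(\chi \wedge \psi)$ is derivable from \textbf{CCR}, \textbf{NCCR} and \textbf{MP} exactly as $\Box\chi \wedge \Diamond\psi \to \Diamond(\chi\wedge\psi)$ is in \textit{\textbf{K}}. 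Combining the two with $\psi = \top$ yields $\vdash \forall_{(a_1,a_2)}\chi \to \exists_{(a_1,a_2)}\chi$ as a genuine theorem for \emph{arbitrary} $\chi$, after which your chain $\beta\wedge\gamma \to \forall_{(a_1,a_2)}\beta \wedge \forall_{(a_1,a_2)}\gamma \to \forall_{(a_1,a_2)}(\beta\wedge\gamma) \to \exists_{(a_1,a_2)}(\beta\wedge\gamma)$ goes through and the simultaneous induction closes. One further minor point: since \textbf{NCCR} and \textbf{CCR} are stated for the defined operator $\forall_{(a_1,a_2)} = \neg\exists_{(a_1,a_2)}\neg$, your ``congruence rule for $\exists_{(a_1,a_2)}$'' produces $\exists_{(a_1,a_2)}\neg\neg\varphi$ rather than $\exists_{(a_1,a_2)}\varphi$; this is the same bookkeeping the paper buries in Proposition~\ref{prop:substitutionofequivalents}, and you should either invoke that proposition or say explicitly how the double negation under the primitive $\exists_{(a_1,a_2)}$ is discharged.
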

\begin{proposition}\label{prop:completetheorem3}
  Let $\alpha \in \mathcal{L}_p$ and $\varphi \in \mathcal{L}^{\mu}_{CC}$. We have
  \[\vdash \exists_{(a_1,a_2)} (\alpha \wedge \varphi) \leftrightarrow (\alpha \wedge \exists_{(a_1,a_2)}  \varphi).\]
\end{proposition}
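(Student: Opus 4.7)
The plan is to derive both implications by purely syntactic manipulations using the results already in hand, namely Proposition~\ref{prop:axiomcompletetheorem} (the distribution laws for $\exists_{(a_1,a_2)}$ and $\forall_{(a_1,a_2)}$ over $\wedge$ and $\vee$) and Proposition~\ref{prop:axiomcompletetheorem2} (which says that $\exists_{(a_1,a_2)}$ and $\forall_{(a_1,a_2)}$ are \emph{transparent} to propositional formulas).

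For the forward direction $\vdash \exists_{(a_1,a_2)}(\alpha \wedge \varphi) \rightarrow \alpha \wedge \exists_{(a_1,a_2)}\varphi$, I would apply Proposition~\ref{prop:axiomcompletetheorem} (4) to obtain $\vdash \exists_{(a_1,a_2)}(\alpha \wedge \varphi) \rightarrow \exists_{(a_1,a_2)}\alpha \wedge \exists_{(a_1,a_2)}\varphi$, and then use Proposition~\ref{prop:axiomcompletetheorem2} (2) (together with Proposition~\ref{prop:substitutionofequivalents}) to rewrite $\exists_{(a_1,a_2)}\alpha$ as $\alpha$, since $\alpha \in \mathcal{L}_p$.

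For the backward direction $\vdash \alpha \wedge \exists_{(a_1,a_2)}\varphi \rightarrow \exists_{(a_1,a_2)}(\alpha \wedge \varphi)$, I would start from the propositional tautology $\vdash \varphi \rightarrow (\neg \alpha \vee (\alpha \wedge \varphi))$ and lift it under $\exists_{(a_1,a_2)}$ by the derivable monotonicity of $\exists_{(a_1,a_2)}$ (obtained from $\vdash \chi \rightarrow \chi'$ by contraposition, rule \textbf{NCCR}, axiom \textbf{CCR}, and \textbf{MP}), yielding $\vdash \exists_{(a_1,a_2)}\varphi \rightarrow \exists_{(a_1,a_2)}(\neg \alpha \vee (\alpha \wedge \varphi))$. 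Then Proposition~\ref{prop:axiomcompletetheorem} (2) distributes $\exists_{(a_1,a_2)}$ over the disjunction to give $\exists_{(a_1,a_2)} \neg\alpha \vee \exists_{(a_1,a_2)}(\alpha \wedge \varphi)$, and Proposition~\ref{prop:axiomcompletetheorem2} (2) collapses $\exists_{(a_1,a_2)}\neg\alpha$ to $\neg\alpha$ (since $\neg \alpha \in \mathcal{L}_p$). The resulting $\vdash \exists_{(a_1,a_2)}\varphi \rightarrow \neg \alpha \vee \exists_{(a_1,a_2)}(\alpha \wedge \varphi)$ is propositionally equivalent to the desired implication.

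There is no genuine obstacle here; the statement is a short syntactic consequence of the already-established distribution laws and the propositional transparency of the CC-refinement quantifier. The only thing worth explicitly flagging in the write-up is the derived monotonicity of $\exists_{(a_1,a_2)}$ used in the backward step, but that is an entirely standard modal-logic move inside this axiom system.
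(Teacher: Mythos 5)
Your derivation is correct: both directions go through exactly as you describe, and the derived monotonicity of $\exists_{(a_1,a_2)}$ (via contraposition, \textbf{NCCR}, \textbf{CCR} and \textbf{MP} applied to its dual $\forall_{(a_1,a_2)}$) is indeed available in this system even though uniform substitution is not. The paper itself gives no proof here — it only cites Proposition 4.13 of the earlier CCRML paper — so your argument serves as a clean, self-contained replacement built from precisely the ingredients already established (Proposition~\ref{prop:axiomcompletetheorem} (2), (4) and Proposition~\ref{prop:axiomcompletetheorem2} (2)); the choice of the tautology $\varphi \rightarrow (\neg\alpha \vee (\alpha \wedge \varphi))$ for the backward direction is the one nontrivial step, and it is sound.
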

\begin{proof}
See~\citep[Proposition 4.13]{huilixing2018CCrefinementmodallogic}.
\end{proof}
 At this point, we can show that any formula of the form $\exists_{(a_1,a_2)} \alpha $ with $\alpha \in \mathcal{L}^{\mu}_K$ can be provably reduced to an $ \mathcal{L}^{\mu}_K$-formula.
\begin{proposition}\label{prop:completeequivalence1}
  Let $\alpha \in \mathcal{L}_K^{\mu}$. Then
  \[\vdash \exists_{(a_1,a_2)} \alpha \leftrightarrow \xi \quad \text{for some } \xi \in \mathcal{L}_K^{\mu}.\]
\end{proposition}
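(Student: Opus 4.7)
The plan is to induct on the structure of $\alpha$ after first converting it to a provably equivalent disjunctive form, and then to push $\exists_{(a_1,a_2)}$ inward via the $\mathbf{CCR}$-axioms until only $\mathcal{L}^{\mu}_K$-syntax remains.

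I first record two preparatory observations. Because $\mathbf{Prop}$, $\mathbf{K}$, $\mathbf{F}\boldsymbol{1}$, $\mathbf{MP}$, $\mathbf{NK}$ and $\mathbf{F}\boldsymbol{2}$ are available in CCRML$^{\mu}$, any $\vdash_{\text{\textit{\textbf{K}}}^{\mu}}\psi$ implies $\vdash \psi$; and $\mathbf{NCCR}$ combined with $\mathbf{CCR}$ deliver the congruence rule $\vdash \varphi_1 \leftrightarrow \varphi_2 \Rightarrow \vdash \exists_{(a_1,a_2)}\varphi_1 \leftrightarrow \exists_{(a_1,a_2)}\varphi_2$. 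Applying Proposition~\ref{prop:disjunctive formula with fixed point to modal formula equivalent} to $\alpha$, I obtain a df-formula $\alpha' \in \mathcal{L}^{\mu}_K$ with $\vdash \alpha \leftrightarrow \alpha'$, and the congruence above reduces the problem to the case $\alpha \in \text{\textbf{\textit{df}}} \cap \mathcal{L}^{\mu}_K$.

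I then induct on the structural depth of $\alpha$ as a df-formula, following Definition~\ref{def:df}. If $\alpha \in \mathcal{L}_p$, Proposition~\ref{prop:axiomcompletetheorem2}(2) gives $\xi := \alpha$. If $\alpha = \alpha_1 \vee \alpha_2$, I combine Proposition~\ref{prop:axiomcompletetheorem}(2) with the induction hypothesis on $\alpha_1,\alpha_2$. If $\alpha = \gamma \wedge \bigwedge_{b \in B}\nabla_b \Phi_b$ with $\gamma \in \mathcal{L}_p$ and each $\Phi_b \subseteq \text{\textbf{\textit{df}}}$, I apply Proposition~\ref{prop:completetheorem3} to pull $\gamma$ outside, then $\mathbf{CCRKconj}$ to split the conjunction over $b$, and then $\mathbf{CCRKco1}$/$\mathbf{CCRKco2}$ for $b = a_1$, $\mathbf{CCRKcontra}$ for $b = a_2$, or $\mathbf{CCRKbis}$ for $b \in A \setminus \{a_1,a_2\}$; the result is an $\mathcal{L}^{\mu}_K$-context built from boxes, diamonds and covers around inner sub-expressions $\exists_{(a_1,a_2)}\varphi$ with $\varphi \in \Phi_b \subseteq \text{\textbf{\textit{df}}}$, and Proposition~\ref{prop:substitutionofequivalents} together with the induction hypothesis replaces each of these by an $\mathcal{L}^{\mu}_K$-equivalent. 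Finally, if $\alpha = \nu q.\beta$ I apply $\mathbf{CCR}^{\boldsymbol{\nu}}$ directly, while if $\alpha = \mu q.\beta$ I apply $\mathbf{CCRin}$ and take $\xi := \bot$ when $\vdash_{\text{\textit{\textbf{K}}}^{\mu}} \alpha \leftrightarrow \bot$, else apply $\mathbf{CCR}^{\boldsymbol{\mu}}$; in the non-$\bot$ subcase the induction hypothesis supplies $\vdash \exists_{(a_1,a_2)}\beta \leftrightarrow \xi'$ for some $\xi' \in \mathcal{L}^{\mu}_K$, and fixed-point congruence (derivable in $\text{\textit{\textbf{K}}}^{\mu}$ from $\mathbf{F}\boldsymbol{1}$ and $\mathbf{F}\boldsymbol{2}$) delivers $\xi := \eta q.\xi' \in \mathcal{L}^{\mu}_K$.

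The main obstacle lies in the fixed-point case: for $\eta q.\xi'$ to even be well-formed the variable $q$ must occur positively in $\xi'$. I would handle this by strengthening the inductive claim with the invariant that the reduction preserves the polarity of every free propositional letter. A routine inspection of the axioms used (especially $\mathbf{CCRKco2}$, $\mathbf{CCRKcontra}$, $\mathbf{CCRKbis}$, $\mathbf{CCRKconj}$) and of the distribution Propositions~\ref{prop:axiomcompletetheorem}(2), \ref{prop:axiomcompletetheorem2}(2) and \ref{prop:completetheorem3} confirms the invariant: the surviving $\exists_{(a_1,a_2)}(\cdot)$ sub-expressions end up only underneath boxes, diamonds, covers, conjunctions and disjunctions with no intervening negation, so positive occurrences of $q$ in $\beta$ propagate to positive occurrences in $\xi'$. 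A secondary technical point is that the induction must be carried out on structural depth in $\text{\textbf{\textit{df}}}$ rather than on syntactic length, because $\mathbf{CCRKbis}$, $\mathbf{CCRKco2}$ and $\mathbf{CCRKcontra}$ can enlarge the formula's size while still decreasing the inductive measure that the argument tracks.
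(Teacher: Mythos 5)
Your proposal follows essentially the same route as the paper's proof: reduce to a \textbf{\textit{df}} formula via Proposition~\ref{prop:disjunctive formula with fixed point to modal formula equivalent}, then induct on the df-structure, dispatching each case with Propositions~\ref{prop:axiomcompletetheorem2}, \ref{prop:axiomcompletetheorem}, \ref{prop:completetheorem3}, the \textbf{CCRK}-axioms, $\mathbf{CCR}^{\boldsymbol{\nu}}$, $\mathbf{CCRin}$, $\mathbf{CCR}^{\boldsymbol{\mu}}$, and Proposition~\ref{prop:substitutionofequivalents}. Your additional polarity-preservation invariant addresses a genuine well-formedness point (that $q$ must occur positively in $\xi'$ for $\eta q.\xi'$ to be a formula) which the paper's proof passes over silently, so your version is, if anything, more careful while remaining the same argument.
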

\begin{proof}
 By Proposition~\ref{prop:disjunctive formula with fixed point to modal formula equivalent}, we may w.l.o.g. assume that $\alpha$ is a \textbf{\textit{df}} formula. Then proceed by induction on the structure of $\alpha$.

 For $\alpha  \in \mathcal{L}_p$, ~it follows by proposition~\ref{prop:axiomcompletetheorem2} (2).

  For $\alpha \equiv \alpha_1 \vee \alpha_2$, ~by Proposition~\ref{prop:axiomcompletetheorem} (2), we have
 \[\vdash \exists_{(a_1,a_2)} \alpha \longleftrightarrow \exists_{(a_1,a_2)} \alpha_1 \vee \exists_{(a_1,a_2)} \alpha_2.\]
  So, by the induction hypothesis and Proposition~\ref{prop:substitutionofequivalents}, we get
  \[\vdash \exists_{(a_1,a_2)} \alpha \longleftrightarrow \xi_1 \vee \xi_2\quad \text{ for some } \xi_1,\; \xi_2 \in \mathcal{L}^{\mu}_K.\]

  For $\alpha \equiv \nu q. \varphi$, ~by the axiom schema $\mathbf{CCR}^{\boldsymbol{\nu}}$, we get
  \[\vdash \exists_{(a_1,a_2)} \alpha \longleftrightarrow \nu q.\, \exists_{(a_1,a_2)} \varphi.\]
  Then, by the induction hypothesis and Proposition~\ref{prop:substitutionofequivalents}, it holds that
  \[\vdash \exists_{(a_1,a_2)} \alpha\longleftrightarrow \nu q. \, \xi \quad \text{ for some } \xi \in \mathcal{L}^{\mu}_K.\]

  For $\alpha \equiv \mu q. \varphi$, ~if $\text{ }\vdash_{\text{\textit{\textbf{K}}}^{\mu}}\alpha \leftrightarrow \bot$, then $\vdash \exists_{(a_1,a_2)}\alpha \leftrightarrow \bot$ by $\mathbf{CCRin}$, ~else the analysis is similar as in the case with $\alpha \equiv \nu q. \varphi$ by $\mathbf{CCR}^{\boldsymbol{\mu}}$.

  For $\alpha \equiv \alpha_0 \wedge \bigwedge_{b \in B} \nabla_b \Phi_b$ with $\alpha_0 \in \mathcal{L}_p$ and $\Phi_b \subseteq \text{\textbf{\textit{df}}}$ for each $b \in B$, ~by Proposition~\ref{prop:completetheorem3}, it follows that
  \[\vdash \exists_{(a_1,a_2)} \alpha \longleftrightarrow \alpha_0 \wedge \exists_{(a_1,a_2)} \bigwedge_{b \in B} \nabla_b \Phi_b.\]
  Further, by the axiom schema \textbf{CCRKconj} and Proposition~\ref{prop:substitutionofequivalents}, we obtain
  \[\vdash \exists_{(a_1,a_2)} \alpha \longleftrightarrow \alpha_0 \wedge \bigwedge_{b\in B} \exists_{(a_1,a_2)} \nabla_b \Phi_b.\]
  Thus the proof is completed by checking that, for each $b \in B$,
  \begin{equation}
   \vdash \exists_{(a_1,a_2)} \nabla_b \Phi_b \longleftrightarrow \xi_b \quad \text{ for some } \xi_b \in \mathcal{L}_K^{\mu}. \; \label{eq:eb331}
  \end{equation}

   Since $\Phi_b \subseteq \text{\textbf{\textit{df}}} \subseteq \mathcal{L}_K^{\mu} $, applying the axiom schemata \textbf{CCRKco1}, \textbf{CCRKco2}, \textbf{CCRKcontra} and \textbf{CCRKbis}, we get $\vdash \exists_{(a_1,a_2)} \nabla_b \Phi_b \longleftrightarrow \gamma \text{ for some } \gamma $ in which the operators $\exists_{(a_1,a_2)}$ are over only the formulas in $\Phi_b$. Consequently, by the induction hypothesis and Proposition~\ref{prop:substitutionofequivalents}, the claim (\ref{eq:eb331}) holds.
\end{proof}
Now we can prove that all $\mathcal{L}_{CC}^{\mu}$-formulas can be provably reduced to $\mathcal{L}_K^{\mu}$-formulas. This is the crucial step in establishing the completeness of CCRML$^{\mu}$.
\begin{proposition}\label{prop:completeequivalence2}
  For each $\psi \in \mathcal{L}_{CC}^{\mu}$,
  \[\vdash \psi \leftrightarrow \xi \quad \text{for some }\xi \in \mathcal{L}^{\mu}_K.\]
\end{proposition}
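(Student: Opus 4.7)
The natural plan is induction on the number $n$ of CC-refinement operator occurrences in $\psi$. The base case $n=0$ is immediate, since then $\psi\in\mathcal{L}^{\mu}_K$ and one takes $\xi:=\psi$.

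For the inductive step, pick an \emph{innermost} CC-refinement operator in $\psi$, say $\exists_{(A_1,A_2)}\varphi$ with $\varphi\in\mathcal{L}^{\mu}_K$. Apply the axiom schema $\mathbf{CCRD}$ to rewrite it as an iterated chain of \emph{singleton} refinement quantifiers
\[
\exists_{\theta_1}\exists_{\theta_2}\cdots\exists_{\theta_k}\varphi,
\qquad k=|A_1\times A_2|,\ \theta_i\in A_1\times A_2.
\]
Now peel off the quantifiers from the inside out. The innermost $\exists_{\theta_k}\varphi$ has body in $\mathcal{L}^{\mu}_K$, so by Proposition~\ref{prop:completeequivalence1} it is provably equivalent to some $\xi_k\in\mathcal{L}^{\mu}_K$. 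Using Proposition~\ref{prop:substitutionofequivalents} with a fresh propositional letter $p$ (so that $\exists_{\theta_{k-1}}p[\exists_{\theta_k}\varphi/p]\leftrightarrow \exists_{\theta_{k-1}}p[\xi_k/p]$), replace $\exists_{\theta_k}\varphi$ by $\xi_k$; the body of $\exists_{\theta_{k-1}}$ is again in $\mathcal{L}^{\mu}_K$, so a second application of Proposition~\ref{prop:completeequivalence1} yields $\xi_{k-1}\in\mathcal{L}^{\mu}_K$. Iterating $k$ times produces $\xi\in\mathcal{L}^{\mu}_K$ with $\vdash\exists_{(A_1,A_2)}\varphi\leftrightarrow\xi$.

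Finally, use Proposition~\ref{prop:substitutionofequivalents} once more, with a fresh letter, to replace the chosen occurrence of $\exists_{(A_1,A_2)}\varphi$ inside $\psi$ by $\xi$. The resulting formula $\psi'$ is provably equivalent to $\psi$ and contains one fewer CC-refinement operator, so the induction hypothesis delivers the desired $\mathcal{L}^{\mu}_K$-reduct.

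The main delicate point is the side condition in Proposition~\ref{prop:substitutionofequivalents}: the substituted letter must not be bound by any $\eta p.\alpha$ in the target. This is harmless if we always substitute through a \emph{fresh} letter, but one has to verify carefully that the intermediate formulas $\xi_{k},\xi_{k-1},\ldots$ produced from $\mathbf{CCR}^{\boldsymbol{\mu}}$, $\mathbf{CCR}^{\boldsymbol{\nu}}$, $\mathbf{CCRKco2}$, etc., stay inside $\mathcal{L}^{\mu}_K$ so that the next peel step is licensed by Proposition~\ref{prop:completeequivalence1}. This is precisely what Proposition~\ref{prop:completeequivalence1} guarantees, so the induction goes through without further obstacle.
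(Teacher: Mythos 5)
Your proposal is correct and follows essentially the same route as the paper: induction on the number of CC-refinement quantifier occurrences, eliminating an innermost one by decomposing it via $\mathbf{CCRD}$ into singleton quantifiers, reducing each with Proposition~\ref{prop:completeequivalence1}, and re-inserting the result through Proposition~\ref{prop:substitutionofequivalents}. The paper merely defers the details to the earlier CCRML paper, so your write-up supplies exactly the argument being referenced.
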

\begin{proof}
 Proceed by induction on the number of the occurrences of the CC-refinement quantifiers in $\psi$. See~\citep[Proposition 4.16]{huilixing2018CCrefinementmodallogic}.
\end{proof}
\begin{proposition}\label{prop:completeequivalence3}
  Let $\psi \in \mathcal{L}_{CC}^{\mu}$ and $\alpha \in \mathcal{L}^{\mu}_K$ such that $\vdash \psi  \leftrightarrow \alpha$. If $\alpha$ is a theorem in \textit{\textbf{K}}$^{\mu}$, then so is $\psi$ in CCRML$^{\mu}$.
\end{proposition}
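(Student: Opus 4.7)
The plan is to show this by a direct derivation argument: the axiom system of CCRML$^{\mu}$ literally contains (as schemata and rules) all the ingredients of \textit{\textbf{K}}$^{\mu}$, so every theorem of \textit{\textbf{K}}$^{\mu}$ is already a theorem of CCRML$^{\mu}$; the equivalence $\vdash \psi \leftrightarrow \alpha$ then finishes the job via \textbf{MP}.

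More concretely, I would first argue that $\vdash_{\text{\textit{\textbf{K}}}^{\mu}} \alpha$ implies $\vdash \alpha$. Inspecting Table~\ref{Ta:CCRMLmu Axiom system}, the CCRML$^{\mu}$ system includes \textbf{Prop}, \textbf{K}, \textbf{F}$\boldsymbol{1}$, \textbf{MP}, \textbf{NK}, and \textbf{F}$\boldsymbol{2}$, which together form (a presentation of) \textit{\textbf{K}}$^{\mu}$. Since $\alpha \in \mathcal{L}^{\mu}_K$ and every formula arising in a \textit{\textbf{K}}$^{\mu}$-derivation of $\alpha$ stays in $\mathcal{L}^{\mu}_K$, we may simply replay the derivation step-by-step inside CCRML$^{\mu}$: each instance of an axiom schema of \textit{\textbf{K}}$^{\mu}$ is also an instance of the corresponding CCRML$^{\mu}$ schema, and each application of \textbf{MP}, \textbf{NK}, or \textbf{F}$\boldsymbol{2}$ is a legitimate rule application in CCRML$^{\mu}$. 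Hence $\vdash \alpha$.

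Next, from the hypothesis $\vdash \psi \leftrightarrow \alpha$ we obtain $\vdash \alpha \rightarrow \psi$ by propositional reasoning (\textbf{Prop} together with \textbf{MP}), and combining with $\vdash \alpha$ via \textbf{MP} yields $\vdash \psi$, as required.

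There is no real obstacle here, provided one is careful about one bookkeeping point: the paper notes that uniform substitution is not a sound rule of CCRML$^{\mu}$, so one must not silently smuggle it in. This is harmless because the axioms are presented as schemata (so all substitution instances are already axioms), and any use of substitution on propositional letters inside a $\mathcal{L}^{\mu}_K$-proof stays inside $\mathcal{L}^{\mu}_K$ and can be accommodated by taking the appropriate schema instance directly. Thus the transfer of the \textit{\textbf{K}}$^{\mu}$-proof to CCRML$^{\mu}$ goes through without invoking uniform substitution as a rule.
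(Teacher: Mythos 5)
Your proof is correct and follows essentially the same route as the paper: the paper likewise observes that the axiom system \textit{\textbf{K}}$^{\mu}$ is contained in CCRML$^{\mu}$ (so $\vdash_{\text{\textit{\textbf{K}}}^{\mu}} \alpha$ gives $\vdash \alpha$) and then concludes $\vdash \psi$ from $\vdash \psi \leftrightarrow \alpha$. Your additional care about not invoking uniform substitution is a sensible elaboration but does not change the argument.
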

\begin{proof}
  Since the axiom system \textit{\textbf{K}}$^{\mu}$ is contained in CCRML$^{\mu}$, $\;\vdash \alpha$ due to $\vdash_{\text{\textit{\textbf{K}}}^{\mu}} \alpha$. Hence, $\;\vdash \psi$ follows immediately from
  $\vdash \psi  \leftrightarrow \alpha$.
\end{proof}
\begin{theorem}[Completeness]\label{theorem:complete}
  For all $\psi \in \mathcal{L}^{\mu}_{CC}$, ~$\models \psi$ implies $\vdash \psi $.
\end{theorem}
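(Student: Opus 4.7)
The plan is to reduce completeness of CCRML$^{\mu}$ to the known completeness of \textit{\textbf{K}}$^{\mu}$ (Walukiewicz) by exploiting the provable translation of $\mathcal{L}^{\mu}_{CC}$ into $\mathcal{L}^{\mu}_K$ already established in Proposition~\ref{prop:completeequivalence2}. All the semantic work concerning CC-refinement has been encapsulated in soundness (Theorem~\ref{theorem:soundness}) and in the axiom schemata $\mathbf{CCR}^{\boldsymbol{\mu}}$, $\mathbf{CCR}^{\boldsymbol{\nu}}$, $\mathbf{CCRin}$ and the Kripke-style CCR-schemata, so the completeness theorem itself is a short glue argument.

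Concretely, let $\psi \in \mathcal{L}^{\mu}_{CC}$ with $\models \psi$. First I would apply Proposition~\ref{prop:completeequivalence2} to obtain some $\alpha \in \mathcal{L}^{\mu}_K$ with $\vdash \psi \leftrightarrow \alpha$. Next, by soundness (Theorem~\ref{theorem:soundness}), this provable equivalence lifts to validity, so $\models \psi \leftrightarrow \alpha$, and combining with the hypothesis $\models \psi$ we get $\models \alpha$. Since $\alpha$ contains no CC-refinement quantifiers, validity of $\alpha$ is exactly validity in the multi-agent modal $\mu$-calculus, and hence by the completeness of \textit{\textbf{K}}$^{\mu}$~\citep{Walukiewicz2000modalmucalcusaxiom} we have $\vdash_{\text{\textit{\textbf{K}}}^{\mu}} \alpha$. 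Finally, Proposition~\ref{prop:completeequivalence3} lifts this back to CCRML$^{\mu}$, yielding $\vdash \psi$.

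There is no real obstacle at this stage: the genuinely difficult content has been discharged earlier (the soundness of the fixed-point CC-schemata via Lemmas~\ref{lemma:soundness CCRnu2} and~\ref{lemma:soundness CCRmu2}, and the inductive translation in Propositions~\ref{prop:completeequivalence1} and~\ref{prop:completeequivalence2}). The only point to be careful about is making sure the chain of implications $\models \psi \Rightarrow \models \alpha \Rightarrow \vdash_{\text{\textit{\textbf{K}}}^{\mu}} \alpha \Rightarrow \vdash \psi$ is justified at each link; the middle link requires that the validity notion for $\mathcal{L}^{\mu}_K$-formulas when regarded as $\mathcal{L}^{\mu}_{CC}$-formulas coincides with validity in \textit{\textbf{K}}$^{\mu}$, which is immediate from Definition~\ref{def:language mu} and the semantic clauses inherited verbatim from~\citep{Venema2012mucalculus}. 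The proof can therefore be written in essentially one line of displayed reasoning.
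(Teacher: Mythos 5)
Your proposal is correct and follows exactly the same argument as the paper: reduce $\psi$ to an equivalent $\xi\in\mathcal{L}^{\mu}_K$ via Proposition~\ref{prop:completeequivalence2}, use soundness to transfer validity to $\xi$, invoke completeness of \textit{\textbf{K}}$^{\mu}$, and lift back via Proposition~\ref{prop:completeequivalence3}. No differences worth noting.
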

\begin{proof}
 Let $\psi \in \mathcal{L}^{\mu}_{CC}$ and $\models \psi$. By Proposition~\ref{prop:completeequivalence2}, $\;\vdash \psi \leftrightarrow \xi$ for some $\xi \in \mathcal{L}_K^{\mu}$. So by Theorem~\ref{theorem:soundness}, we obtain $\models \psi \leftrightarrow \xi$, which implies $\models \xi$ due to $\models \psi$. Then $\vdash_{\text{\textit{\textbf{K}}}^{\mu}} \xi$ due to the completeness of \textit{\textbf{K}}$^{\mu}$. Hence, $\vdash \psi$ holds by Proposition~\ref{prop:completeequivalence3}.
\end{proof}
From our proof that every $\mathcal{L}_{CC}^{\mu}$-formula can be provably reduced to an $\mathcal{L}_{K}^{\mu}$-formula, it follows easily that there is an algorithm for transforming every $\mathcal{L}_{CC}^{\mu}$-formula $\psi$ into an $\mathcal{L}_K^{\mu}$-formula $\alpha$ such that $\vdash \psi \,$ iff ~$\vdash_{\text{\textit{\textbf{K}}}^{\mu}} \alpha$. Then, due to the decidability of the system \textit{\textbf{K}}${^{\mu}}$, we get
\begin{theorem}[Decidability]\label{theorem:decidability}
  CCRML${^{\mu}}$ is decidable.   $\text{ } \; \text{ } \;\text{ } \;\text{ } \;\text{ } \;\text{ } \;\text{ } \;\text{ } \;\text{ } \;\text{ } \;\text{ } \;\text{ } \;\text{ }\;\qed$
\end{theorem}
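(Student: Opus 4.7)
The plan is to derive decidability as an effectivisation of the reduction argument already in place. The key observation is that every step in the inductive proofs of Proposition~\ref{prop:completeequivalence1} and Proposition~\ref{prop:completeequivalence2} is constructive: the case analyses on \textbf{\textit{df}} structure, the applications of $\mathbf{CCRD}$, $\mathbf{CCR}^{\boldsymbol{\nu}}$, $\mathbf{CCR}^{\boldsymbol{\mu}}$, $\mathbf{CCRin}$ and the $\mathbf{CCRK}$ schemata, together with the passage through disjunctive normal form given by Proposition~\ref{prop:disjunctive formula with fixed point to modal formula equivalent}, can all be packaged into a terminating algorithm.

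First I would extract from Proposition~\ref{prop:completeequivalence1} a computable inner translation $\mathsf{tr}_0$ taking an input of the shape $\exists_{(a_1,a_2)}\alpha$ with $\alpha\in\mathcal{L}_K^{\mu}$ to an $\mathcal{L}_K^{\mu}$-formula $\xi$ with $\vdash \exists_{(a_1,a_2)}\alpha\leftrightarrow\xi$. Following the paper, $\mathsf{tr}_0$ would first invoke an effective procedure for Proposition~\ref{prop:disjunctive formula with fixed point to modal formula equivalent} to put $\alpha$ into \textbf{\textit{df}} form, then recurse on the structure of the \textbf{\textit{df}} formula, dispatching on whether the top connective is propositional, disjunction, $\mu q$, $\nu q$, or a conjunction of covers, and pushing the quantifier inward by the appropriate axiom. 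Then I would build the full translation $\mathsf{tr}:\mathcal{L}^{\mu}_{CC}\to\mathcal{L}^{\mu}_K$ by induction on the number of occurrences of CC-refinement quantifiers in $\psi$, as in Proposition~\ref{prop:completeequivalence2}: pick an innermost occurrence, apply $\mathbf{CCRD}$ to reduce to singleton sets, invoke $\mathsf{tr}_0$, and substitute by Proposition~\ref{prop:substitutionofequivalents}.

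Having $\mathsf{tr}$ in hand, decidability follows immediately: by soundness (Theorem~\ref{theorem:soundness}), completeness (Theorem~\ref{theorem:complete}) and Proposition~\ref{prop:completeequivalence3}, we have $\vdash\psi$ iff $\vdash_{\mathbf{K}^{\mu}}\mathsf{tr}(\psi)$. Since $\mathbf{K}^{\mu}$ is decidable \citep{Walukiewicz2000modalmucalcusaxiom}, we can decide $\vdash\psi$ by computing $\mathsf{tr}(\psi)$ and querying the $\mathbf{K}^{\mu}$-decision procedure.

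The main obstacle lies in two subtle points inside $\mathsf{tr}_0$. First, one must exhibit a well-founded termination measure for the recursion on \textbf{\textit{df}} formulas; the natural choice is the size of the formula beneath the top-level CC-refinement quantifier, and one verifies that each axiom application strictly decreases it (for the fixed-point axioms the quantifier moves past a single binder, and for $\mathbf{CCRKconj}$/cover axioms it distributes into strictly smaller $\Phi_b$'s). Second, the side conditions of $\mathbf{CCRin}$, $\mathbf{CCRKco1}$, $\mathbf{CCRKco2}$ and $\mathbf{CCR}^{\boldsymbol{\mu}}$ require deciding $\vdash_{\mathbf{K}^{\mu}}\beta\leftrightarrow\bot$ for subformulas $\beta\in\mathcal{L}^{\mu}_K$; this is exactly where the decidability of $\mathbf{K}^{\mu}$ is used inside $\mathsf{tr}_0$ itself (not only afterwards), so the overall procedure is a many-one reduction to $\mathbf{K}^{\mu}$-provability enriched with oracle calls to the same decision problem, which is still decidable.
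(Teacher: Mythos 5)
Your proposal is correct and follows essentially the same route as the paper, which likewise derives decidability by observing that the reduction of every $\mathcal{L}^{\mu}_{CC}$-formula to a provably equivalent $\mathcal{L}^{\mu}_{K}$-formula (Proposition~\ref{prop:completeequivalence2}) is effective and then invoking the decidability of \textit{\textbf{K}}$^{\mu}$. The one point you make explicit that the paper leaves implicit --- that the side conditions of $\mathbf{CCRin}$, $\mathbf{CCRKco1}$, $\mathbf{CCRKco2}$ and $\mathbf{CCR}^{\boldsymbol{\mu}}$ already require oracle calls to the \textit{\textbf{K}}$^{\mu}$ decision procedure inside the translation itself --- is a genuine and worthwhile clarification, but it does not change the argument.
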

\section{Discussion}\label{sec:con}
The notion of CC-refinement generalizes the notions of bisimulation, simulation and refinement. This paper  presents  a sound and complete axiom system for the CC-refinement operators $\exists_{(A_1,A_2)}$ and fixed-point operators  under the assumption that neither $A_1$ nor $A_2$ is empty. This assumption is not particularly restrictive. Analogizing $\exists_{(A_1,A_2)}$, based on  Proposition~\ref{pro:ccrproperty2}, the operators $\exists_{(\emptyset,A_2)}$ (or, $\exists_{(A_1,\emptyset)}$) can be reduced to the operators $\exists_{(\emptyset,a)}$ ($\exists_{(a,\emptyset)}$, resp.). Further, from the axiom system in Table~\ref{Ta:CCRMLmu Axiom system}, through modifying it slightly, we can obtain the axiom system for $\exists_{(\emptyset,A_2)}$ (or, $\exists_{(A_1,\emptyset)}$) and fixed-point operators. The former axiom system obtained, with $|A|=1$, is indeed the one provided in~\citep{Bozzeli2014refinementmodallogic}.
The reader may be referred to~\citep[Section 5]{huilixing2018CCrefinementmodallogic} for these modifications.
Furthermore, the relative proofs and constructions with minor modifications still work.




\bibliographystyle{plain}
\bibliography{ccrefinement-mu-calculus}

\end{document}